\newtheorem{theorem}{Theorem}[section]
\newtheorem{lemma}[theorem]{Lemma}
\newtheorem{definition}[theorem]{Definition}
\newtheorem{prop}[theorem]{Proposition}
\newtheorem{remark}[theorem]{Remark}
\newcommand{\swap}[1]{\operatorname{SWAP} #1}
\newcommand{\qn}[1]{\textcolor{black}{#1}}
\begin{document}

\title{Block-encoding dense and full-rank kernels using hierarchical matrices: applications in quantum numerical linear algebra}

\author{Quynh T. Nguyen}
\affiliation{Department of Electrical Engineering and Computer Science, Massachusetts Institute of Technology, USA}
\affiliation{Department of Physics, Massachusetts Institute of Technology, USA}
\author{Bobak T. Kiani}
\affiliation{Department of Electrical Engineering and Computer Science, Massachusetts Institute of Technology, USA}
\affiliation{Research Laboratory of Electronics, Massachusetts Institute of Technology, USA}
\author{Seth Lloyd}
\affiliation{Research Laboratory of Electronics, Massachusetts Institute of Technology, USA}
\affiliation{Department of Mechanical Engineering, Massachusetts Institute of Technology, USA}
\affiliation{Turing Inc., Cambridge, MA, USA}

\maketitle

\begin{abstract}
  Many quantum algorithms for numerical linear algebra assume black-box access to a block-encoding of the matrix of interest, which is a strong assumption when the matrix is not sparse. Kernel matrices, which arise from discretizing a kernel function $k(x,x')$, have a variety of applications in mathematics and engineering. They are generally dense and full-rank. Classically, the celebrated fast multipole method performs matrix multiplication on kernel matrices of dimension $N$ in time almost linear in $N$ by using the linear algebraic framework of hierarchical matrices. In light of this success, we propose a block-encoding scheme of the hierarchical matrix structure on a quantum computer. When applied to many physical kernel matrices, our method can improve the runtime of solving quantum linear systems of dimension $N$ to $O(\kappa \operatorname{polylog}(\frac{N}{\varepsilon}))$, where $\kappa$ and $\varepsilon$ are the condition number and error bound of the matrix operation. This runtime is near-optimal and, in terms of $N$, exponentially improves over prior quantum linear systems algorithms in the case of dense and full-rank kernel matrices. We discuss possible applications of our methodology in solving integral equations and accelerating computations in N-body problems.
\end{abstract}

\section{Introduction}
One of the most promising applications of quantum information processing is in solving linear algebraic problem in high dimensions efficiently. Since the Harrow-Hassidim-Lloyd (HHL) algorithm was first developed for solving linear systems of equations \cite{hhl}, a number of improvements have been proposed. Significantly, the breakthrough algorithms in \cite{childshhl} and \cite{gilyen2019quantum} solved sparse linear systems of equations in time polylogarithmic in the system size $N$ and the target error bound $\varepsilon$.
When these algorithms are applied to dense matrices, however, their complexity scales linearly in $N$ (see \Cref{tab:compare}). Using a quantum random access memory (QRAM) data structure \cite{kerenidis2016quantum}, the work of \cite{wossnig2017} improved the runtime to $\widetilde{O}(\sqrt{N})$. Classically, quantum-inspired  algorithms \cite{gilyen2022improved, shao2021faster} use an analog of QRAM called sample-query access and apply sketching techniques \cite{sketching2014} to solve linear systems with runtimes independent of $N$ but scaling poorly with the rank $k$ of the matrix as $\widetilde{O}(k^6)$ (see \Cref{app:compare}).

In fact, achieving a runtime logarithmic in $N$ for dense and full-rank matrices is generally challenging unless there are specific symmetries or structures inherent in the matrix. For example, quantum algorithms have been developed to achieve polylogarithmic complexity in $N$ for Toepliz systems \cite{toepliz}, Hankel matrices \cite{hankel}, and linear group convolutions \cite{groupconv} which all feature the group quantum Fourier transforms (QFT) \cite{algebraic} to implement operations as sparse matrices in the Fourier basis. This approach of using group QFTs, however, only applies for certain groups and requires the matrix entries to be \emph{exactly} uniformly sampled from a generating function, thus inapplicable to more general settings (e.g., matrix entries corresponding to atoms on a lattice vibrating around their equilibrium positions).


Kernel matrices are a class of matrices whose entries are calculated as a function of a kernel $k(x,x')$. They have applications in mathematical physics \cite{fmm_shortcourse}, engineering \cite{atkinson2009numerical} and machine learning \cite{Rasmussen2003gp}, where they are often obtained by sampling or discretizing a smooth kernel $k(x,x')$ that reflects the physical model at hand. In general, kernel matrices are dense and full-rank. Hierarchical matrices ($\mathcal{H}$-matrices) \cite{part1hmatrices, part2hmatrices} are a classical framework for efficiently performing matrix operations on commonly found kernel matrices by hierarchically splitting the ``space" dimension of the matrix entries. In terms of the matrix dimension $N$, classically performing matrix-vector multiplication with $\mathcal{H}$-matrices takes time $O(N \operatorname{polylog} N)$ which is a significant improvement over the generic $O(N^2)$ time. This $\mathcal{H}$-matrix structure lies at the heart of the celebrated fast multipole method \cite{fmm_shortcourse}.
This improvement intuitively arises because kernel matrices often decay or change slowly along entries that are farther from the diagonals of the matrix. $\mathcal{H}$-matrices progressively approximate the matrix in larger low-rank blocks for farther off-diagonal entries to efficiently implement kernel matrices.


We present an efficient quantum algorithm to implement $\mathcal{H}$-matrices in the block-encoding framework, which embeds matrices inside a unitary operator that can be queried multiple times. This construction, which is at the core of many quantum algorithms, allows one to conveniently perform matrix computations on a  quantum computer \cite{Low_2019, power-block,  qsp, quantumSDP}. As many recently proposed algorithms use block-encodings as a black box, one of the major challenges for quantum numerical linear algebra in this framework is to efficiently block-encode matrices so that the asymptotic runtimes are efficient even when the block-encoding complexity is included. Our $\mathcal{H}$-matrix block-encoding procedure, combined with existing quantum linear systems algorithms, \emph{e.g.,} adiabatic solver in \cite{costa2021optimal}, provides near-optimal runtime in solving quantum linear systems for many typical kernels that are neither sparse nor low-rank, \textit{e.g.,} $k(x,x')=\|x-x'\|^{-1}$ (see \Cref{tab:compare} for comparisons to prior work). Apart from the $\mathcal{H}$-matrix block-encoding, we also provide in the Appendix a list of common kernel matrices that can be efficiently block encoded using existing techniques.

We proceed as follows. First, we overview classical hierarchical matrices in \Cref{sec:hmatrices} and the quantum block-encoding framework in \Cref{sec:blockenc}. We then present an algorithm for efficiently block-encoding hierarchical matrices and apply this to kernel matrices in \Cref{sec:main}. In \Cref{sec:variants}, we generalize our methodology to construct block encodings for a more general class of matrices whose structure is derived from hierarchical matrices. Finally, we discuss applications of our algorithm in calculating $N$-body gravitational potentials and solving integral equations in \Cref{sec:applications}.

\begin{table*}[]
\small
    {\centering
    \renewcommand{\arraystretch}{1.5}
    \begin{tabular}{c|c|c|c|c}
          & \multicolumn{2}{c|}{Complexity}&   \\
          \cline{2-3}
          \textbf{Result} & Forward $\widetilde{O}(\cdot)$ & Inversion $\widetilde{O}(\cdot)$ & Input model & Algorithm \\
         \hline
         \hline
        Classical \cite{ part1hmatrices} & $N $ & $N$ & Classical & Classical $\mathcal{H}$-matrix \\
         \hline
         GST \cite{gilyen2022improved}/SM \cite{shao2021faster}
         & -
         & $\frac{N^6 \kappa^8}{\varepsilon^2}$
         & Sample query & Classical sketching \\
         \hline
         CKS \cite{childshhl}/GSLW  \cite{gilyen2019quantum} & $N \kappa$  &
         $N\kappa $& Oracle access$^a$ & \Cref{lem:inv}$^b$\\ 
        \hline
         KP \cite{kerenidis2016quantum}/WZP \cite{wossnig2017} 
         & $\sqrt{N} \kappa  $ 
         & $\sqrt{N}\kappa   $ 
         & QRAM$^c$ & \Cref{lem:inv} \\
         \hline
         This paper & $ \operatorname{polylog}(N) \kappa$ & $\operatorname{polylog}(N) \kappa$ & Oracle access & \Cref{lem:main}, \Cref{lem:inv} \\ 
    \end{tabular}\\}
  {\raggedleft \footnotesize{$^a$ Naive block-encoding using oracle access via Lemma 48 or 49 in \cite{gilyen2019quantum} (also see \Cref{lem:densenaive})}}\\
{\raggedleft \footnotesize{$^b$ \Cref{lem:inv}} uses the adiabatic solver in \cite{costa2021optimal} which has optimal runtime dependence on $\kappa$}\\
{\raggedleft \footnotesize{$^c$ This input model is strictly more powerful than the oracle access model}}\\

     \vspace{-0.5cm}
    \caption{Runtimes of previous methods versus our method for matrix multiplication (forward) and solving linear systems (inversion) of dense and full-rank kernel matrices, \textit{e.g.,} $k(x,x')=\|x-x'\|^{-1}$. For purposes of comparison, we assume the operator norm of the matrix is bounded. The $\widetilde{O}$ notation hides terms that grow as $\operatorname{polylog}(\frac{N}{\varepsilon})$. Here, $\kappa$ is the condition number of the matrix operation and $\varepsilon$ is the error bound on the output. For quantum algorithms, we assume the input vector is given as a quantum state and the runtime is equal to the circuit depth times the number of queries needed to obtain the output as a quantum state. We note that classical $\mathcal{H}$-matrix literature and this work focus on kernel matrices, whereas the quantum algorithms we compare to are ``general-purpose''. For these general-purpose algorithms, runtimes are calculated assuming they are applied directly on kernel matrices. More details about prior works can be found in \Cref{app:compare}.}
    \label{tab:compare}
\end{table*}

\section{Background in $\mathcal{H}$-matrices and hierarchical splitting} \label{sec:hmatrices}
To observe why hierarchical matrices are useful, consider a physical model which requires summation or integration over spacetime. The $N$-body gravitational force problem \cite{Barnes1986}, for example, requires evaluating $N$ potentials of the form
\begin{equation}
    \Phi\left(\mathbf{x}_{j}\right)= \sum_{i=1 \atop i \neq j}^{N} \frac{m_{i}}{\|\mathbf{x}_j - \mathbf{x}_i\|^2}, \hspace{.5cm} 1 \leq j \leq N
\end{equation}
where $m_i$ and $\mathbf{x}_i$ denote the mass and location of particle $i$. 
Summing or integrating over all particles is equivalent to performing matrix-vector multiplication, generally requiring $O(N^2)$ operations unless the underlying matrix is sparse or has special properties.
As we will show, when particles are placed in a favorable arrangement (\textit{e.g.,} approximately uniformly), this matrix-vector multiplication can be approximately performed in time $O(N \operatorname{polylog} N)$.


Hierarchical matrices ($\mathcal{H}$-matrices) \cite{part1hmatrices, part2hmatrices,borm2003} are a powerful algebraic framework for accelerating matrix computations involving displacement kernels $k(\mathbf{x},\mathbf{x}')=k(\|\mathbf{x}- \mathbf{x}'\|)$ which satisfy certain properties outlined below. The key idea is to (i) split the kernel matrix into hierarchically off-diagonal blocks and (ii) approximate each block by a low-rank matrix. This low rank approximation is justified because off-diagonal blocks represent distant interactions that are small or slowly changing with distance. 
We now detail the two ingredients of $\mathcal{H}$-matrices, namely the hierarchical splitting and the off-diagonal low-rank approximation.

\begin{figure*}[]
    \centering
    \includegraphics[width=0.99\textwidth]{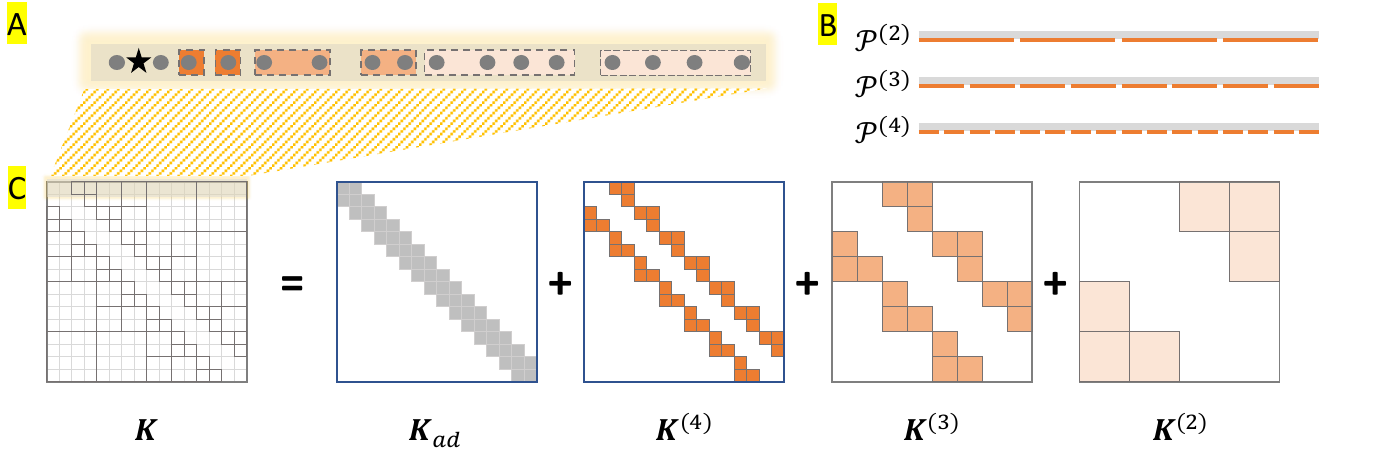}
    \caption{(A) Structure of a hierarchical splitting for a 1D array of 16 particles (gray circles), which are close to uniformly distributed. The accumulative potential at the destination location (star, which can also be one of the sources) is decomposed into contributions from clusters of source particles whose sizes grow with the distance to the destination. (B) Cluster size decreases exponentially with the level of the hierarchical splitting. (C) The associated kernel matrix of this 16-particle system takes the form of an $\mathcal{H}$-matrix, assuming that the destination locations are the same as the source locations. This matrix is decomposed into hierarchically structured blocks using the hierarchical splitting in (A), where each block is numerically low-rank.}
    \label{fig:splitting}
\end{figure*}

\subsection{Hierarchical splitting into admissible blocks}
Hierarchical splitting partitions a kernel matrix into so-called ``admissible'' blocks where entries representing more distant interactions are grouped together into larger blocks. Consider a system of $N$ particles, whose pairwise interaction is described by the kernel $k(x,x')$. The kernel matrix of this system is defined as
\begin{equation}
    \mathbf{K}  = (k(x_i, x_j))_{i,j=0}^{N-1},
\end{equation}
where $x_i$ is the location of particle $i$.
For ease of analysis, we assume the particles are uniformly distributed and choose $x_i = i/N,  \text{ for any } i \in \{0,1,\hdots,N-1\}$ and $N=2^L$. For example, this is the case when the kernel matrix is obtained by discretizing an integral operator over the domain $[0,1]$ (see \cite{part1hmatrices}). In general, a kernel matrix need not be square and the source locations $x_i$ need not be 1-dimensional nor equally spaced and we will show how to generalize the current setup to these cases in later sections. We refer the interested readers to \cite{Fenn2002FMMAH, h2matrices, part2hmatrices, adaptiveHmatrix, Krishnan1995APA} for further details.


A set of $m$ particles located at $\{x_{j_1},\hdots, x_{j_m}\}$ is denoted as a \emph{cluster} $\sigma$, storing the list of individual indices $\sigma = \{j_1,\hdots, j_m\}$.

\begin{definition}[Admissible blocks] Let $\mathbf{K}  = (k(x_i, x_j))_{i,j=0}^{N-1}$ and $\mathcal{N}=\{0,1,\hdots,N-1\}$. Let $\mathcal{P}(\mathcal{N})$ be a partition of $\mathcal{N}$, i.e., $\mathcal{N}=\bigcup_{\sigma \in \mathcal{P}(\mathcal{N})}\sigma$. For a cluster $\sigma \in  \mathcal{P}(\mathcal{N})$, define the center of $\sigma$ as $c_{\sigma} = \operatorname{mean}\{x_i| i\in \sigma\}$ and the radius of $\sigma$ as $r_{\sigma}= \max_{i \in \sigma} |x_i - c_{\sigma}  |.$
Furthermore, for $\sigma, \rho \in \mathcal{P}(\mathcal{N})$, define the distance between the two clusters $\sigma$ and $\rho$ as $
    \operatorname{dist}(\sigma, \rho) = \min_{i\in \sigma,j \in \rho} |x_i -x_j |.
$
Then, the matrix block $\mathbf{K}^{\sigma, \rho}= (k(x_i,x_j))_{i\in \sigma, j\in \rho}$ is called admissible if $
\eta \cdot \max \{ r_{\sigma}, r_{\rho}\} \leq \operatorname{dist}(\sigma, \rho)$, where $\eta=2$.
\label{def:admissible}
\end{definition}


Informally, a block is admissible if the distance between its two corresponding clusters is larger than their radii (other values of $\eta$ can give rise to different forms of the hierarchical splitting, see \cite{Fenn2002FMMAH,part1hmatrices}). We now split the matrix $\mathbf{K}$ into a hierarchy of admissible blocks consisting of $L  = \log_2 N$ levels of decreasing radii. At level $\ell$, define the following partition of $\mathcal{N}=\{0,1,\hdots,N-1\}$: 
\begin{equation}    \mathcal{P}^{(\ell)}=\{\sigma^{(\ell)}_I|I \in \{0,\hdots,2^{\ell}-1\}\},
    \label{eq:splittings}
\end{equation}
where
\begin{equation}
    \sigma^{(\ell)}_I=\{i \in \mathcal{N}|x_i \in [I/2^{\ell}, (I+1)/2^{\ell})\}.
\end{equation}
For example, $\mathcal{P}^{(0)}=\{\mathcal{N}\}$, $\mathcal{P}^{(L)}=\{\{i\}| i \in \mathcal{N}\}$, and $\mathcal{P}^{(\ell)}$ for $\ell \in \{2,3,4\}$ are shown in \autoref{fig:splitting}B. Note that admissible blocks only exist for $\ell \geq 2$. We now construct a hierarchical splitting of our kernel matrix $\mathbf{K}$ into admissible blocks starting from the coarsest level $\ell = 2$. First, we split $\mathbf{K}$ into admissible and inadmissible blocks in $\mathcal{P}^{(2)}$
\begin{equation*}
    \mathbf{K}= \mathbf{K}^{(2)}+\mathbf{K}^{(2)}_{inadmissible}.
\end{equation*}
Next, we split $\mathbf{K}^{(2)}_{inadmissible}$ into admissible and inadmissible blocks in $\mathcal{P}^{(3)}$ (excluding the admissible blocks of $\mathcal{P}^{(3)}$ that are already covered in $\mathbf{K}^{(2)}$)
\begin{equation*}
    \mathbf{K}^{(2)}_{inadmissible} =  \mathbf{K}^{(3)}+\mathbf{K}^{(3)}_{inadmissible}.
\end{equation*}
Recursively applying the decomposition $\mathbf{K}^{(\ell-1)}_{inadmissible} =  \mathbf{K}^{(\ell)}+\mathbf{K}^{(\ell)}_{inadmissible}$ up to level $L$, we obtain a hierarchically structured representation of $\mathbf{K}$
\begin{equation}
    \mathbf{K} = \sum_{\ell=2}^{L} \mathbf{K}^{(\ell)}+\mathbf{K}_{ad},
\end{equation}
where $\mathbf{K}_{ad}=\mathbf{K}^{(L)}_{inadmissible}$ is a tridiagonal matrix which represents the ``adjacent" interaction of neighboring particles (see \autoref{fig:splitting}C). Observe that the matrices $\mathbf{K}^{(\ell)}$ are block-sparse; that is, each block-row or block-column of $\mathbf{K}^{(\ell)}$ contains at most three non-zero admissible blocks. Furthermore, these blocks represent ``far-field'' interactions which can be approximated by low-rank matrices if the kernel $k(x,x')$ satisfies some additional properties discussed in the next section.

In \Cref{app:high-dim}, we describe a generalization of the hierarchical splitting to higher dimensions, where the two key properties still hold, i.e., there are only $\log N$ levels and each level $\mathbf{K}^{(\ell)}$ is block-sparse.

\subsection{Low-rank approximation of admissible blocks}\label{sec:lowrankapprox}

For far-field interactions in admissible blocks, the kernel $k(x,x')$ can be well approximated by a truncated Taylor series if it satisfies the following \emph{asymptotic smoothness} condition:
\begin{align}
    | \partial_{x'}^q k(x,x') | & \leq C q! |x-x'|^{-q} \hspace{0.2cm} (\forall q \in \mathbb{N}),\label{eq:condition1}
\end{align}
where $C$ is a constant. For instance, the log kernel $k(x,x')=\log(|x-x'|)$ directly satisfies the above.
This condition enables approximating any admissible block $\mathbf{K}^{\sigma, \rho}$ by a rank-$p$ matrix
\begin{equation}
    \mathbf{K}^{\sigma, \rho} \approx \mathbf{\widetilde{K}}^{\sigma, \rho} = \boldsymbol{\Psi}^{\sigma, \rho} \mathbf{D} \left(\boldsymbol{\Phi}^{ \rho}\right)^{\dagger},
    \label{eq:admissible-block}
\end{equation}
where $\mathbf{D} =\operatorname{Diag}(1/q!)_{q\in P} \in \mathbb{R}^{p\times p}$, $P=\{0,1,\hdots ,p-1\}$ and
\begin{align}
    & \boldsymbol{\Psi}^{\sigma, \rho} = (\partial_{x'}^q k(x_i,c_{\rho}))_{i\in\sigma, q \in P} \in \mathbb{R}^{|\sigma| \times p}, \\
    & \boldsymbol{\Phi}^{\rho} = ((x_{j}'-c_{\rho})^q)_{j\in\rho, q \in P} \in \mathbb{R}^{|\rho| \times p}.
\end{align}
Here, $p$ regulates the precision of the approximation (typically $p=\operatorname{polylog} (N/\varepsilon)$ for an error bound $\varepsilon$). A detailed example and error analysis of this far-field approximation can be found in \Cref{app:hmatrix}. In addition, we note that conditions other than that of \autoref{eq:condition1} can also be used to justify the approximation (see \cite{Tyrtyshnikov, BRANDT199124, Beylkin1991FastWT}) of admissible blocks by low-rank kernels $k(x,x')=\sum_{q=0}^{p-1}\psi_q(x)\varphi_q(x')$.

Combining this low-rank approximation and the hierarchical splitting forms a complete $\mathcal{H}$-matrix. In \Cref{app:hmatrix}, we show that matrix-vector multiplication with an $\mathcal{H}$-matrix can be performed in time $O(Np \log N)$. In addition, other $\mathcal{H}$-matrix operations such as matrix-matrix multiplication and matrix inversion run in time $O(Np^2 \log^2 N)$, we refer the reader to \cite{Fenn2002FMMAH, part1hmatrices, borm2003} for further details. The rank $p$ can be chosen at each level to further optimize computational complexity versus approximation error \cite{h2matrices}. 


\section{Block-encoding arithmetic} \label{sec:blockenc}

Throughout this paper, we employ the block-encoding framework to perform matrix computations on a quantum computer \cite{Low_2019}. Here, an $s$-qubit (non-unitary) operator $A \in \mathbb{C}^{2^s \times 2^s}$ (let $N=2^s$) is embedded in a unitary operator $U \in \mathbb{C}^{2^{(s+a)} \times 2^{(s+a)}}$ using $a$ ancilla qubits such that the top left block of $U$ is proportional to $A$:
\begin{equation}
U = \begin{pmatrix}
A/\alpha & \cdot \\ \cdot & \cdot
\end{pmatrix},
\end{equation}
where $\alpha$ is the \emph{normalization factor} of the block-encoding.
Formally, the $(s+a)$-qubit unitary $U$ is an $(\alpha, a, \varepsilon)$-block-encoding of $A$ if 
\begin{equation}
\left\|A-\alpha(\bra{0^a} \otimes I_s) U (\ket{0^a} \otimes I_s )\right\| \leq \varepsilon,
\end{equation}
where $\|\cdot\|$ denotes the operator norm of a matrix and $I_s$ is the identity operator on $s$ qubits. In other words, applying the unitary $U$ to a quantum state $\ket{\psi}$ and post-selecting on the measurement outcome $\ket{0^a}$ on the ancilla qubits is equivalent to applying the operation $A$ on $\ket{\psi}$:
\begin{equation}
    U \ket{0^a} \ket{\psi} = \frac{1}{\alpha}\ket{0^a} A \ket{\psi} + \ket{G},
\end{equation}
where $\ket{G}$ is a garbage state that is orthogonal to the subspace $\ket{0^a}$, \textit{i.e.,} $( \bra{0^a} \otimes I_s ) \ket{G} = 0$. The probability of successfully post-selecting $\ket{0^a}$ is thus equal to $\| A \ket{\psi} \|^2 \alpha^{-2} \leq (\|A\|/\alpha)^2$. Observe that $\alpha \geq \|A\|$ is required to embed $A$ inside a unitary matrix. Further, since the probability of success of the post-selection step depends on the \emph{ratio} $\|A\|/\alpha$, a block-encoding $U$ is \emph{optimal} if $\alpha = \Theta(\|A\|)$. More generally, if the ratio $\alpha/ \|A\|$ grows logarithmically with the size of the matrix $N$, the probability of success also only changes logarithmically with $N$. \qn{More generally, as we will see some examples in \Cref{sec:applications}, the complexity of quantum algorithms \cite{gilyen2019quantum} in the block-encoding framework depends on the ratio $\alpha/\|A\|$}. This important observation motivates the following definition of a ``good'' block-encoding.

\begin{definition}[Good block-encoding]
Given a matrix $A \in \mathbb{C}^{N \times N}$, an $(\alpha, a, \varepsilon)$-block-encoding $U$ of $A$ is good if and only if the ratio $\alpha/\|A\|=O(\operatorname{polylog}(N))$, and additionally, $a=O(\operatorname{polylog}(\frac{N}{\varepsilon}))$ and the circuit depth of $U$ is $O(\operatorname{polylog}(\frac{N}{\varepsilon}))$. The block-encoding is called \emph{optimal} if $\alpha = \Theta(\|A\|)$.
\label{def:optimal}
\end{definition}

Throughout this paper, we assume that matrices are normalized to have the largest entry at most $1$, and we analyze the optimality of block-encodings by comparing the normalization factor $\alpha$ to the operator norm $\|A\|$.

Before proceeding to block-encode a hierarchical matrix, we provide various methods for block-encoding the individual matrix blocks in the hierarchical splitting. Depending on the locations of the matrix blocks and how the values of the kernel entries can be accessed by a quantum computer, there are various different ways one can block-encode a given matrix. Previous work has proposed efficient means to block-encode sparse matrices, purified density matrices, etc., as well as transformations of these matrices in the block-encoding framework \cite{gilyen2019quantum, kothari2014efficient, Low_2019}. We list some previously established results in this framework in \Cref{sec:deferred-proofs}, among which we particularly use \Cref{lem:sparse} to block-encode sparse matrices with oracle access to entries, \Cref{lem:linear_comb} to linearly combine block-encoded matrices, and \Cref{lem:multiply-encode} to multiply block-encoded matrices. Later, we will apply and combine these individual block-encoding Lemmas to form a block-encoding of the complete hierarchical matrix.

First, we provide a version of \Cref{lem:sparse} applied to dense matrices, which we refer to as the ``naive'' block-encoding approach since it is unaware of the inherent structure of the block-encoded matrix.

\begin{lemma}[Naive block-encoding of dense matrices with oracle access]
Let $A \in \mathbb{C}^{N \times N}$ (where $N=2^s$) and let $\hat{a} \geq \max_{i,j} |a_{ij}|$. Suppose the following oracle is provided
\begin{equation*}
    \mathcal{O}_{A}: \ket{i}\ket{j}\ket{0^b} \rightarrow \ket{i}\ket{j}\ket{\Tilde{a}_{ij}},
\end{equation*}
where $0\leq i,j < N$ and $\Tilde{a}_{ij}$ is the (exact) $b$-qubit description of $a_{ij}/\hat{a}$. Then one can implement a $(N \hat{a},s+1, \varepsilon)$-block-encoding of $A$ with two uses of $\mathcal{O}_A$, $O(\operatorname{polylog}(\frac{\hat{a} N}{\varepsilon}))$ one- and two-qubit gates and $O(b,\operatorname{polylog}(\frac{\hat{a} N}{\varepsilon}))$ extra qubits (which are discarded before the post-selection step).
\label{lem:densenaive}
\end{lemma}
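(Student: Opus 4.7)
The plan is to treat the dense matrix $A$ as (vacuously) $N$-sparse and build $U$ using the standard sparse-matrix block-encoding template (the same one underlying \Cref{lem:sparse}, cf.\ Lemma 48 of \cite{gilyen2019quantum}), specialised to the setting where a separate ``sparsity oracle'' is unnecessary because every column index is automatically present in every row. I would introduce an $s$-qubit row-index ancilla $R$, a single rotation qubit $Q$ (so the block-encoding ancilla register has $a=s+1$ qubits as claimed), and a $b$-qubit work register $E$ that holds the output of $\mathcal{O}_A$ and is discarded before the post-selection step.

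Starting from $\ket{\psi}_S\ket{0^s}_R\ket{0}_Q\ket{0^b}_E$ with $\ket{\psi}=\sum_j\psi_j\ket{j}$, the circuit I would build applies six steps in order: (i) Hadamards on $R$ to prepare $\tfrac{1}{\sqrt{N}}\sum_i\ket{i}_R$; (ii) query $\mathcal{O}_A$ using $R$ as row and $S$ as column, loading $\ket{\tilde a_{ij}}$ into $E$; (iii) a controlled rotation on $Q$ that sends $\ket{0}_Q\mapsto \tilde a_{ij}\ket{0}_Q+\sqrt{1-|\tilde a_{ij}|^2}\ket{1}_Q$; (iv) query $\mathcal{O}_A^\dagger$ to uncompute $E$; (v) a SWAP between $R$ and $S$, moving the row index into the system register and pushing the column index into the ancilla; and (vi) Hadamards on $R$ once more. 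A routine amplitude calculation then shows that, on input $\ket{j}_S$, the amplitude of the $\ket{i}_S\ket{0^s}_R\ket{0}_Q\ket{0^b}_E$ branch is $\tilde a_{ij}/N$, so multiplying by $N\hat a$ recovers $a_{ij}$ up to the rounding of $\tilde a_{ij}$. This uses exactly two queries to $\mathcal{O}_A$, matching the statement.

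The error analysis is then standard: each matrix element is reproduced to additive accuracy $2^{-b}$ coming from the finite-precision description of $a_{ij}/\hat a$ plus $\varepsilon'$ coming from synthesising the amplitude-encoding rotation via an arcsine arithmetic circuit. Multiplying by the normalisation $N\hat a$, the operator-norm error of the block-encoding is bounded by $N\hat a\,(2^{-b}+\varepsilon')$, so choosing both $b$ and $\log(1/\varepsilon')$ to be $O(\log(\hat a N/\varepsilon))$ yields total error $\varepsilon$ and produces the claimed gate count $O(\mathrm{polylog}(\hat a N/\varepsilon))$ and extra-qubit count $O(b,\mathrm{polylog}(\hat a N/\varepsilon))$, dominated by the controlled rotation. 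No step here poses a genuine obstacle; the only bookkeeping subtlety is the SWAP together with the second Hadamard layer on $R$, whose role is to guarantee that the row index produced in the ancilla is transferred to (and correctly ``unprepared'' in) the system register, so that the block-encoding condition of \Cref{def:optimal} is literally satisfied with $\alpha=N\hat a$.
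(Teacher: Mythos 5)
Your construction is correct and is essentially the same as the paper's: the paper proves this lemma (see its Remark~B.6) by specializing the sparse block-encoding template of Lemma~48 in \cite{gilyen2019quantum} to the dense case, replacing the sparsity oracles $\mathcal{O}_r,\mathcal{O}_c$ with Hadamards on the $s$-qubit row ancilla and following exactly your sequence of Hadamards, $\mathcal{O}_A$, controlled rotation, uncomputation, $\swap$, and final Hadamards. Your amplitude bookkeeping and error accounting match the paper's.
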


Assuming $\hat{a}=1$, the normalization factor of this block-encoding is $N$, which is suboptimal if $\|A\| \ll N$, \textit{e.g.,} this situation arises when the matrix $A$ has many small entries. This suboptimal block-encoding could lead to an exponentially small probability of success in the post-selection step. Therefore, a structure-aware block-encoding procedure is needed to achieve an optimal runtime in terms of $N$.

To directly implement the $\mathcal{H}$-matrices outlined in \Cref{sec:hmatrices}, we need procedures to block-encode block-sparse and low-rank matrices which we provide in the Lemmas below (all proofs are deferred to \Cref{sec:deferred-proofs}). First we define state preparation pairs, which are frequently used in this study to prepare the coefficients in a linear combination of matrices.

\begin{definition}[State preparation pair \cite{gilyen2019quantum}] Let $y \in \mathbb{C}^{m}$ and $\|y\|_1\leq \beta$, the pair of unitaries $(P_L,P_R)$ is called a $(\beta,n,\varepsilon)$-state-preparation-pair of $y$ if $P_{L}\ket{0^n}=\sum_{j=0}^{2^{n}-1} c_{j}\ket{j}$ and $P_{R}\ket{0^n}=\sum_{j=1}^{2^{n}-1} d_{j}\ket{j}$ such that $\sum_{j=0}^{m-1}\left|\beta c_{j}^{*} d_{j} -y_{j}\right| \leq \varepsilon_1$ and $c_j^*d_j=0$ for any $j \in \{m,\hdots, 2^n-1\}$.
\label{def:preppair}
\end{definition}
Next, we show how to implement a block-encoding of low-rank matrices.
\begin{lemma}[Block-encoding of low-rank operators with state preparation unitaries, inspired by Lemma 1 of \cite{quek2021fast}]
 Let $A = \sum_{i=0}^{p-1} \sigma_i \mathbf{u}_i \mathbf{v}_i^{\dagger} \in \mathbb{C}^{2^s \times 2^s}$ where $\|\mathbf{u}_i\|= \|\mathbf{v}_i\|=1$. 
Let $r=\lceil \log p \rceil$ and  $\sum_{i=0}^{p-1} |\sigma_i| \leq \beta$. Suppose the following $(r+s)$-qubit unitaries are provided:
\begin{equation*}
\begin{aligned}
    G_L: &\ket{i}\ket{0^s} \rightarrow \ket{i}\ket{\mathbf{u}_i}, \\
    G_R: &\ket{i}\ket{0^s} \rightarrow \ket{i}\ket{\mathbf{v}_i},
\end{aligned}
\end{equation*}
where $0 \leq i<p$ and $\ket{\mathbf{u}_i}$ $(\ket{\mathbf{v}_i})$ is the quantum state whose amplitudes are the entries of $\mathbf{u}_i$ $(\mathbf{v}_i)$. Let $(P_L,P_R)$ be a $(\beta,r, \varepsilon)$-state-preparation-pair for the vector $[\sigma_0, \hdots, \sigma_{p-1}]$. Then, one can construct a $(\beta, r+s, \varepsilon)$-block-encoding of $A$ with one use of each of $G_L, G_R^{\dagger}, P_L^{\dagger}, P_R$, and a $\swap{}$ gate on two $s$-qubit registers.
\label{lem:lowrank-encode}
\end{lemma}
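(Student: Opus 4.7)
The plan is to extend the Linear Combination of Unitaries (LCU) block-encoding construction to dyadic sums: the state-preparation pair $(P_L, P_R)$ will supply the weights $\sigma_j$, while the oracles $G_L, G_R$ will load the left and right vectors $\mathbf{u}_j, \mathbf{v}_j$ into the main register. A SWAP gate will be used to shuttle the input state so that $G_R$ can compute the inner products $\langle \mathbf{v}_j | \psi\rangle$ while a fresh $\ket{0^s}$ register is made available to receive $\ket{\mathbf{u}_j}$ via $G_L$.

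Concretely, I would introduce three registers: an $r$-qubit index register $R_1$, an $s$-qubit ancilla $R_2$, and the $s$-qubit main register $R_3$. My candidate circuit (applied right-to-left) is
\begin{equation*}
U \;=\; (P_L^\dagger)_{R_1}\; (G_L)_{R_1, R_3}\; (G_R^\dagger)_{R_1, R_2}\; (P_R)_{R_1}\; (\operatorname{SWAP})_{R_2, R_3}.
\end{equation*}
The initial SWAP moves $\ket{\psi}$ into $R_2$ and leaves $\ket{0^s}$ on $R_3$. Since $G_R$ is controlled on $R_1$, I can write $G_R = \sum_j \ket{j}\bra{j}\otimes V_j$ with $V_j \ket{0^s} = \ket{\mathbf{v}_j}$, so that $V_j^\dagger \ket{\psi} = \langle \mathbf{v}_j | \psi\rangle \ket{0^s} + \ket{\perp_j}$ with $\langle 0^s | \perp_j\rangle = 0$. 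After $P_R$ and $G_R^\dagger$ the state reads $\sum_j d_j \ket{j}(\langle \mathbf{v}_j | \psi\rangle\ket{0^s} + \ket{\perp_j})\ket{0^s}$; then $G_L$ on $R_1, R_3$ turns the $\ket{0^s}_{R_3}$ factor into $\ket{\mathbf{u}_j}_{R_3}$, and finally $P_L^\dagger$ supplies the coefficient $c_j^*$ upon projecting $R_1$ onto $\ket{0^r}$.

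Tracing these steps, the projection $(\bra{0^r}\bra{0^s} \otimes I_{R_3}) U (\ket{0^r}\ket{0^s} \otimes \ket{\psi})$ collapses to $\sum_j c_j^* d_j \langle \mathbf{v}_j | \psi\rangle \ket{\mathbf{u}_j}$, since $\bra{0^s}\perp_j\rangle = 0$ filters out the orthogonal residues. Multiplying through by $\beta$ gives the implemented matrix $\tilde A = \beta \sum_j c_j^* d_j\, \mathbf{u}_j \mathbf{v}_j^\dagger$. Since $\|\mathbf{u}_j \mathbf{v}_j^\dagger\| = \|\mathbf{u}_j\|\|\mathbf{v}_j\| = 1$, the triangle inequality and the state-preparation-pair hypothesis yield
\begin{equation*}
\|A - \tilde A\| \;\leq\; \sum_{j=0}^{p-1} |\sigma_j - \beta c_j^* d_j|\, \|\mathbf{u}_j \mathbf{v}_j^\dagger\| \;\leq\; \varepsilon,
\end{equation*}
establishing the $(\beta, r+s, \varepsilon)$-block-encoding. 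Gate accounting gives exactly one use of each of $G_L, G_R^\dagger, P_L^\dagger, P_R$ and a single SWAP on two $s$-qubit registers, as claimed.

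The main subtlety I would highlight is that $G_L, G_R$ are only constrained on inputs $\ket{j}\ket{0^s}$ with $j < p$ and extended to unitaries arbitrarily for $p \leq j < 2^r$. The condition $c_j^* d_j = 0$ on the state-preparation pair for $j \geq p$ (from \Cref{def:preppair}) ensures that these unconstrained matrix elements never contribute to the final block-encoded operator, so the construction is insensitive to how the extension is chosen. The rest is bookkeeping: a careful tracking of the amplitudes in each register, plus the standard LCU-style projection argument.
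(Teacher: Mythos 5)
Your proof is correct and follows essentially the same LCU-style sandwich construction as the paper's proof, just with the gates arranged in a slightly different order. The paper implements $U = (P_L^\dagger\otimes I)(G_R^\dagger\otimes I)(I\otimes\operatorname{SWAP})(G_L\otimes I)(P_R\otimes I)$ with both $G_L$ and $G_R^\dagger$ acting on the index register and the ancilla $s$-qubit register, while you commute the SWAP to the front and apply $G_L$ on the index register together with the main register $R_3$; these two circuits agree precisely when $G_L, G_R$ are controlled on the index register, since in that case $(G_L)_{R_1,R_3}$ and $(G_R^\dagger)_{R_1,R_2}$ commute. Your computation of the projected matrix element and the final triangle-inequality bound match the paper's.

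One thing worth flagging: in your ordering, $G_R^\dagger$ is applied to the state $\sum_i d_i\ket{i}\ket{\psi}$ with $\ket{\psi}$ arbitrary, which lies outside the domain $\ket{i}\ket{0^s}$ on which $G_R$ is actually specified by the lemma hypothesis. You handle this by invoking the decomposition $G_R = \sum_j\ket{j}\bra{j}\otimes V_j$, i.e., that $G_R$ is block-diagonal in the index register. This is a natural and standard way to realize such a state-preparation unitary, but it is an extra structural assumption beyond the literal statement of the lemma, and your argument really does need it --- without it, $G_R^\dagger\ket{i}\ket{\psi}$ can scramble the index register, and the subsequent $G_L$ and $P_L^\dagger$ would pick up uncontrolled cross terms (even using $c_j^* d_j = 0$ for $j\geq p$). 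The paper's ordering largely sidesteps this: by evaluating the matrix element as an inner product of $(G_R\otimes I)(P_L\otimes I)\ket{0^{r+s}}\ket{m}$ with $(\operatorname{SWAP})(G_L\otimes I)(P_R\otimes I)\ket{0^{r+s}}\ket{n}$, each of $G_L, G_R$ only ever acts on states of the form $\ket{i}\ket{0^s}$, which is exactly the domain where its action is prescribed. Your observation that the $c_j^*d_j = 0$ condition in \Cref{def:preppair} kills the $j\geq p$ contributions is correct and well-spotted; just be aware it addresses the $j\geq p$ indexing issue and not the off-diagonal-in-index issue, which is why the controlled structure of $G_R$ is the load-bearing assumption in your version.
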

The above lemma provides one method to block-encode the approximately low-rank matrices described in \Cref{sec:hmatrices}, but as we will see, is not the only option one has at their disposal. When the rank of a matrix is small and bounded, the time needed to prepare unitaries $P_L, P_R$ is typically negligible; \textit{e.g.,} they can be efficiently implemented with an oracle providing access to $\sigma_i$  (and $\beta$ is close to the sum of the singular values). In addition, the unitaries $G_R, G_L$ can be constructed using existing state preparation methods based on the structure of the entries in the singular vectors (\textit{e.g.,} \cite{grover2002creating} utilized efficiently integrable distributions, \cite{kerenidis2016quantum} used a QRAM data structure).
In \Cref{app:proofa3}, we also present an approach to efficiently construct $G_R, G_L$ from oracles providing access to the entries of $\mathbf{u}_i, \mathbf{v}_i$ for cases where the entries change slowly
(\Cref{lem:rank1-oracle}).

Next, the following lemma block-encodes a block-sparse matrix, whose blocks are also block-encoded.

\begin{lemma}[Block-encoding of block-sparse matrices] Let $A = \sum_{i,j=0}^{2^t-1} \ket{i}\bra{j} \otimes A^{ij}$ be a $d_r$-row-block-sparse and $d_c$-column-block-sparse matrix, where each $A^{ij}$ is an $s$-qubit operator. Let $U^{ij}$ be an $(\alpha_{ij},a,\varepsilon)$-block-encoding of $A^{ij}$. Suppose that we have the access to the following $2(t+1)$-qubit oracles
\begin{equation*}
\begin{aligned}
    \mathcal{O}_r: &\ket{i}\ket{k} \rightarrow \ket{i} \ket{r_{ik}},\\
    \mathcal{O}_c: &\ket{l}\ket{j} \rightarrow \ket{c_{lj}} \ket{j},
\end{aligned}
\end{equation*}
where $0\leq i  < 2^t, 0\leq  k <d_r$, $0\leq j < 2^t, 0 \leq l < d_c$, $r_{ik}$ is the index for the $k$-th non-zero block of the $i$-th block-row of A, or if there are less than $k$ non-zero
blocks, then $r_{ik}=k+2^t$, and similarly $c_{lj}$ is the index for the $l$-th non-zero block of the $j$-th block-column of $A$, or if there are less than $l$ non-zero blocks, then $c_{lj}=l+2^t$. Additionally, suppose the following oracle is provided: 
\begin{equation*}
\begin{aligned}
    &\mathcal{O}_{\alpha}: \ket{i}\ket{j}\ket{z} \rightarrow \ket{i}\ket{j}\ket{z\oplus \Tilde{\alpha}_{ij}}\\
\end{aligned}
\end{equation*}
where $0\leq i,j <2^t$ and $\Tilde{\alpha}_{ij}$ is the $b$-qubit description of $\alpha_{ij}$ (if $i$ or $j$ is out of range then $\Tilde{\alpha}_{ij}=0$), and let the $(2t+2+s+a)$-qubit unitary $W = \sum_{i,j: A^{ij}\neq 0} (\ket{i}\ket{j} \bra{i}\bra{j}) \otimes U^{ij} + \left(I-\sum_{i,j: A^{ij}\neq 0} (\ket{i}\ket{j} \bra{i}\bra{j}) \right) \otimes I_{s+a}$. Let $\hat{\alpha} = \max_{i,j} \alpha_{ij}$. Then one can implement an $(\hat{\alpha}\sqrt{d_r d_c},t+a+3, 2\sqrt{d_r d_c} \varepsilon)$-block-encoding of $A$, with one use of each of $\mathcal{O}_r, \mathcal{O}_c,$ and $W$, two uses of $\mathcal{O}_{\alpha}$, $O(t+\operatorname{polylog}(\frac{\hat{\alpha} }{\varepsilon}))$ additional one- and two-qubit gates, and $O(b,\operatorname{polylog}(\frac{\hat{\alpha}}{\varepsilon}))$ added ancilla qubits.
\label{lem:block-sparse}
\end{lemma}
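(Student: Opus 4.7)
The plan is to generalize the standard block-encoding construction for sparse matrices (e.g., Lemma~48 of \cite{gilyen2019quantum}) by replacing scalar-entry access with controlled block-encodings. The intended circuit acts on registers holding a row index $i$ ($t$ qubits), a column index $j$ ($t$ qubits), an $s$-qubit ``inner'' register on which the $A^{ij}$ act, an $a$-qubit ancilla for the inner block-encodings, and a few extra ancillas used for row/column sparsity indexing, the $\alpha_{ij}$-dependent rotation, and its flag.

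First I would prepare, on the $i$-register starting from $\ket{0^t}$, a uniform superposition over $d_c$ indices $\ket{l}$ (via Hadamards padded with zeros) and then apply $\mathcal{O}_c$ to obtain $\frac{1}{\sqrt{d_c}}\sum_{l=0}^{d_c-1}\ket{c_{lj}}\ket{j}$; this puts the row register in superposition over exactly the nonzero block-rows of column $j$ (out-of-range values $l+2^t$ land outside the $t$-qubit subspace and will be discarded by post-selection). Symmetrically, on a fresh $t$-qubit ancilla I would prepare $\frac{1}{\sqrt{d_r}}\sum_{k}\ket{k}$ and apply $\mathcal{O}_r$ controlled on the row register to compute $\ket{r_{i,k}}$; a coherent equality test against $j$, followed by uncomputation of this helper register, implements the standard sparse-matrix projection onto ``$(i,j)$ really is a nonzero block.'' Third, I would call $\mathcal{O}_\alpha$ to write $\tilde\alpha_{ij}$ into a $b$-qubit register, use it to apply a controlled $Y$-rotation implementing the amplitude $\sqrt{\alpha_{ij}/\hat\alpha}$ onto a one-qubit flag (as in standard amplitude-encoding routines), and then uncompute $\tilde\alpha_{ij}$ with a second use of $\mathcal{O}_\alpha$. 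Fourth, with $(i,j)$ now coherently in the two index registers and a correctly weighted flag, applying $W$ acts as the correct $U^{ij}$ on the inner $s+a$ qubits whenever $A^{ij}\neq 0$. Finally I would invert the row-ancilla preparation (Hadamards and $\mathcal{O}_c^\dagger$) so that the selection of $\ket{0^a}$ on \emph{all} ancillas projects out the desired normalization.

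A direct bookkeeping shows that, conditioned on all ancilla flags reading $\ket{0}$, the amplitude on $\ket{i}\ket{j}\ket{\phi}\ket{0^a}$ is $\frac{1}{\sqrt{d_c d_r}}\cdot \sqrt{\alpha_{ij}/\hat\alpha}\cdot\sqrt{\alpha_{ij}/\hat\alpha}\cdot A^{ij}\ket{\phi}/\alpha_{ij} \;=\; \frac{1}{\hat\alpha\sqrt{d_r d_c}}A^{ij}\ket{\phi}$, which is exactly the top-left block of an $(\hat\alpha\sqrt{d_r d_c},\,t+a+3,\,\cdot)$-block-encoding of $A$. The three additional ancilla qubits above $t+a$ are the rotation flag qubit plus two qubits from the row/column sparsity scaffolding (the equality-test qubit and one ``in-range'' flag); the $O(b,\operatorname{polylog}(\hat\alpha/\varepsilon))$ temporary qubits holding $\tilde\alpha_{ij}$ and rotation-angle garbage are uncomputed and discarded before the post-selection. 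The error contributed by each inner block-encoding is at most $\varepsilon/\alpha_{ij}$ at the operator-norm level; summing block-wise using the $d_r$-row- and $d_c$-column-sparsity via the Schur test (or equivalently $\|M\|\le \sqrt{\|M\|_{1}\|M\|_\infty}$ applied to the error matrix) yields an overall error of $2\sqrt{d_r d_c}\,\varepsilon$, matching the claim.

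The main obstacle is the bookkeeping around the $\alpha_{ij}$-dependent rotation: we must apply it in a way that contributes $\alpha_{ij}/\hat\alpha$ (not its square root twice, not its square once) into the final amplitude, while leaving no residual entanglement with $\tilde\alpha_{ij}$ that would spoil the normalization when the ancilla is uncomputed. Resolving this with a ``prepare, controlled-rotate, unprepare'' sandwich around $\mathcal{O}_\alpha$ is what forces the two uses of $\mathcal{O}_\alpha$ stated in the lemma. A secondary subtlety is ensuring that the asymmetric pair $(\mathcal{O}_r,\mathcal{O}_c)$ together projects onto each nonzero block exactly once (and never introduces amplitude on zero blocks), which is the standard mechanism that turns the naive $d_r d_c$ normalization into the optimal $\sqrt{d_r d_c}$ one and is the source of the $\sqrt{d_r d_c}$ factor in both $\hat\alpha\sqrt{d_r d_c}$ and the error $2\sqrt{d_r d_c}\,\varepsilon$.
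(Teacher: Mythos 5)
Your high-level plan is the right one and matches the paper's: sandwich the controlled block-encoding $W$ between sparsity-preparation unitaries built from $\mathcal{O}_r,\mathcal{O}_c$, and use a rotation read off from $\mathcal{O}_\alpha$ to compensate for the heterogeneous subnormalization constants $\alpha_{ij}$, exactly as a block-level generalization of Lemma~48 of \cite{gilyen2019quantum}. The claimed final parameters and error bound are also what you should get. But two of the concrete mechanics you describe would not actually produce them.

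First, the $\alpha$-rotation bookkeeping is inconsistent. You describe a \emph{single} controlled $Y$-rotation placing amplitude $\sqrt{\alpha_{ij}/\hat\alpha}$ on the flag's $\ket{0}$ branch, yet your amplitude count multiplies in $\sqrt{\alpha_{ij}/\hat\alpha}\cdot\sqrt{\alpha_{ij}/\hat\alpha}$. With one rotation and one post-selection you only ever get one factor of whatever amplitude the rotation wrote, so as written the construction would give $\sqrt{\alpha_{ij}/\hat\alpha}\cdot \tfrac{1}{\alpha_{ij}}$ per block, which is wrong. The fix (and what the paper does) is to have the rotation write the amplitude $\bar\alpha_{ij}\approx \alpha_{ij}/\hat\alpha$ directly onto the flag, not its square root; then a single rotation, sandwiched between two uses of $\mathcal{O}_\alpha$ to compute and uncompute $\tilde\alpha_{ij}$, contributes exactly the needed factor $\alpha_{ij}/\hat\alpha$.

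Second, the ``coherent equality test plus uncompute the helper'' mechanism for enforcing $A^{ij}\neq 0$ does not give the stated $1/\sqrt{d_r d_c}$ factor. After preparing $\tfrac{1}{\sqrt{d_r}}\sum_k\ket{k}$, applying $\mathcal{O}_r$, and conditioning the flag on $r_{ik}=j$, the amplitude on the matching branch is already $1/\sqrt{d_r}$; trying to return the helper to $\ket{0}$ by reversing $\mathcal{O}_r$ and the Hadamards and then projecting would cost a \emph{second} factor of $1/\sqrt{d_r}$, for $1/d_r$ total, not $1/\sqrt{d_r}$. The standard mechanism avoids the equality test entirely: define $V_R=\mathcal{O}_c\,(G_{d_c}\otimes I_{t+1})$ and $V_L=\mathcal{O}_r\,(I_{t+1}\otimes G_{d_r})\,\swap{}_{t+1}$, and the desired $1/\sqrt{d_r d_c}$ selectivity comes purely from the overlap $\bra{0^{t+1}}\bra{i}V_L^\dagger V_R\ket{0^{t+1}}\ket{j}$, which equals $1/\sqrt{d_r d_c}$ on nonzero blocks and $0$ otherwise. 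Relatedly, your final step ``invert the row-ancilla preparation (Hadamards and $\mathcal{O}_c^\dagger$)'' undoes the $V_R$ you applied at the start; you should instead be applying $V_L^\dagger$ (built from $\mathcal{O}_r$, not $\mathcal{O}_c$). Undoing $V_R$ would not select the nonzero blocks correctly. Once these two points are fixed, the error propagation and ancilla count go through as you sketched (the operator-norm error is bounded by splitting into the $U^{ij}$ error and the $\bar\alpha_{ij}$ rounding error and bounding each block-sparse error matrix by $\sqrt{d_r d_c}$ times the maximum block norm).
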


\begin{remark}
\Cref{lem:block-sparse} is efficient when the matrix is sparse over blocks (where each block may be dense), which is not in general the same as the matrix being sparse. It generalizes Lemma 48 of \cite{gilyen2019quantum} (also see \Cref{lem:sparse}) for block-encoding matrices with oracle access to entries. In addition, if $\alpha_{ij}$ are the same for all blocks, we do not need the oracle $\mathcal{O}_{\alpha}$, hence some ancilla qubits and gates can be saved (see proof in \Cref{app:proofa6}).
\label{remark:blocksparse}
\end{remark}

\section{Block-encoding of kernel matrices via hierarchical splitting}\label{sec:main}


In this section, we apply the block-encoding techniques delineated in \Cref{sec:blockenc} to block-encode $\mathcal{H}$-matrices. \qn{Our goal is to obtain good block-encodings (\Cref{def:optimal}) of these matrices where the ratio between the normalization factor and the operator norm is at most $O( \operatorname{polylog}(N))$.} We provide an optimal block-encoding procedure for kernel matrices arising from a variety of polynomially decaying and exponentially decaying kernels. We show that using only the hierarchical splitting \emph{without subsequent low-rank approximation} gives an optimal block-encoding for these kernel matrices, enabling matrix transformations (\textit{e.g.,} multiplication, inversion, polynomial transformations) with optimal query and gate complexities. Then, we provide a block-encoding procedure for general $\mathcal{H}$-matrices whose admissible blocks are approximated by low-rank matrices and discuss the applicability of our procedure on other variants of hierarchical splitting.

\subsection{Polynomially decaying kernels}\label{sec:polykernel}
We use the hierarchical splitting described in \Cref{sec:hmatrices} to implement an optimal block-encoding procedure for polynomially decaying kernels, which take the following form
\begin{equation}
    k(x,x')=\begin{cases} C & \text{ if } x =x'\\
\|x- x'\|^{-p} & \text{ otherwise}
\end{cases},
\label{eq:polykernel}
\end{equation}
where $p>0$ and \qn{$|C| \leq  1$ represents self-interaction terms} (\textit{e.g.,} $C=0$ in Coulomb point source interactions). For concreteness, we consider the problem of computing pairwise interactions in a system of $N$ particles. Particularly, consider a 1D system of $N$ particles $\{(x_i,m_i)\}_{i=0}^{N-1}$, where $x_i$ and $m_i$ are the location and mass of the particle $i$, respectively. The potential at $x_i$ is the sum over the pairwise interactions
\begin{equation}
    \Phi (x_i) = \sum_{j=0}^{N-1} m_j k(x_i, x_j).
    \label{eq:polydecaysum}
\end{equation}
For simplicity and ease of analysis, we consider a system of equally distanced particles in which $x_j = j$ and $m_j=1$ for any $ j \in [N]$, where $N=2^L$. Note that we choose the domain to be $[1,N]$ instead of $[0,1]$ as considered in \Cref{sec:hmatrices}, so that the maximum entry in the kernel matrix is $1$. \qn{Had the particles been placed in the region $[0,1]$, we could simply downscale the matrix by dividing it by $N^p$ so that the entries are bounded by $1$. Recall from \Cref{sec:blockenc} that doing so does not change the block-encoding and algorithmic runtime as these depend on the ratio between the operator norm and the normalization factor. This ratio is unchanged under global rescaling of the matrix because the normalization factor will rescale accordingly in the block-encoding procedure.}

To access the kernel function on a quantum computer, we assume the following oracle is given
\begin{equation}
    \mathcal{O}_{k}: \ket{i}\ket{j}\ket{0}^{\otimes b} \rightarrow \ket{i}\ket{j}\ket{\Tilde{k}(x_i,x_j)},
\end{equation}
where $\Tilde{k}(x_i,x_j)$ is the $b$-qubit description of $k(x_i,x_j)$. In fact, if the function $k(x_i,x_j)$ can be computed via an efficient classical circuit, one can construct a comparably efficient quantum circuit for $\mathcal{O}_{k}$ \cite{mikeike}. In particular, this is the case when the distribution of particles is known and efficiently computable (e.g. $x_j$ are equally spaced points in the discretization of integral operators). 

We optimally block-encode the kernel matrix $\mathbf{K}= (k(x_i,x_j))_{i,j=0}^{N-1}$ up to error $\varepsilon$ with a normalization factor of $\Theta(\|\mathbf{K}\|)$ by using only two queries to $\mathcal{O}_k$ and $O(\operatorname{polylog}(N,\frac{1}{\varepsilon}))$ additional resources (i.e., one- and two-qubit gates and ancilla qubits). At a high level, our method proceeds as follows. First, we decompose $\mathbf{K}$ into a hierarchical structure of admissible blocks as described in \Cref{sec:hmatrices}
\begin{equation}
    \mathbf{K} = \sum_{\ell=2}^{L} \mathbf{K}^{(\ell)}+\mathbf{K}_{ad}.
\end{equation}
Note that this hierarchical decomposition can also be performed in higher dimensional systems (see \Cref{app:high-dim}). Next, for each admissible block (which is a dense matrix) in the level $\mathbf{K}^{(\ell)}$, we block-encode it by using the naive procedure for dense matrices (\Cref{lem:densenaive}). Then, we use the procedure for block-sparse matrices (\Cref{lem:block-sparse}) to form a block-encoding of $\mathbf{K}^{(\ell)}$. Finally, we sum over hierarchical levels to obtain a block-encoding of $\mathbf{K}$.

Specifically, at level $\ell$ of the hierarchy, each admissible block has dimension $2^{L-\ell}$. In addition, the minimum pairwise distance between particles in an admissible block in $\mathbf{K}^{(\ell)}$ is $d_{\min}^{(\ell)}=2^{L-\ell}$, so the maximum entry of an admissible block $\mathbf{K}^{\sigma,\rho}$ in $\mathbf{K}^{(\ell)}$ is bounded by $k^{(\ell)}_{\max}=d_{\min}^{-p}= 2^{-(L-\ell)p}$. It follows that an admissible block at level $\ell$ can be $(\alpha_{\ell},L-\ell+1,\frac{\varepsilon}{3})$-block-encoded via the naive procedure for dense matrices (\Cref{lem:densenaive}), where $\alpha_{\ell} = 2^{(L-\ell)(1-p)}$, using two queries to $\mathcal{O}_k$ and $O(\operatorname{polylog}( \frac{N}{\varepsilon}))$ additional one- and two-qubit gates. Note that in using \Cref{lem:densenaive} to block-encode a block $\mathbf{K}^{\sigma,\rho}$ at level $\ell$, one actually needs to query $k(x_i,x_j)/k^{(\ell)}_{\max}$ rather than $k(x_i,x_j)$ itself. This is readily achieved from the oracle $\mathcal{O}_k$ by conditioning on a register that indexes the level $\ell$ and the block indices $\sigma,\rho$ (see \Cref{app:detail-encode} for details). From here, each $\mathbf{K}^{(\ell)}$, which is a block-sparse matrix of block-sparsity 3, can be $(3\alpha_{\ell},L+3,\varepsilon)$-block-encoded via \Cref{lem:block-sparse} (particularly \Cref{remark:blocksparse}). Observe that while $\mathbf{K}^{(\ell)}$ is a $2^L \times 2^L$ matrix, the normalization factor of this block-encoding is only $3\alpha_{\ell}$, which is better than the normalization factor of $N$ that one would get when naively applying \Cref{lem:densenaive} to block-encode $\mathbf{K}^{(\ell)}$. In addition, we still only need two queries to $\mathcal{O}_k$ and $O(\operatorname{polylog}( \frac{N}{\varepsilon}))$ additional gates because all blocks $\mathbf{K}^{\sigma,\rho}$ are constructed in parallel in \Cref{lem:block-sparse} (see detailed analysis in \Cref{app:detail-encode}).

The adjacent interaction part $\mathbf{K}_{ad}$, which is a tridiagonal matrix, can be $(3,L+3,\varepsilon)$-block-encoded with the procedure for sparse matrices (\Cref{lem:sparse}), using two queries to $\mathcal{O}_k$ and $O(\operatorname{polylog}( \frac{N}{\varepsilon}))$ one- and two-qubit gates. Finally, we use the procedure for linearly combining block-encoded matrices (\Cref{lem:linear_comb}) to obtain a block-encoding of $\mathbf{K} = \sum_{\ell=2}^{L} \mathbf{K}^{(\ell)} + \mathbf{K}_{ad}$ as $U = \sum_{\ell=2}^{L} 3 \alpha_{\ell} U^{(\ell)} +  3 U_{ad}$. This step requires a $(\log L)$-qubit state preparation pair (\Cref{def:preppair}) which prepares the coefficients in the linear combination. We neglect the error in implementing this state preparation pair since it can be performed with $\log L = \log \log N$ qubit operations (see \Cref{app:detail-encode}). This entire procedure results in a $(\alpha, L +\log L + 3,\alpha \varepsilon)$-block-encoding of the \emph{exact} kernel matrix $\mathbf{K}$, where for $L = \log N$,
\begin{equation}
\begin{aligned}
    \alpha &= 3+3\sum_{\ell=2}^{L} 2^{(L-\ell)(1-p)}\\
    &=\begin{cases} 3(1+ \frac{2^{1-p} -N^{1-p}}{2^{1-p}- 4^{1-p}}) \leq \Theta(1)  & \text{ if } p >1 \\
     3\log N  & \text{ if } p = 1\\
     3(1+ \frac{N^{1-p}-2^{1-p}}{4^{1-p}-2^{1-p}}) & \text{ if } p < 1\\
    \end{cases}.
\end{aligned}
\label{eq:alphasum}
\end{equation}
As before, the linear combination step does not significantly increase the circuit complexity since all levels can be constructed in parallel with only two queries to the oracle $\mathcal{O}_k$. See \Cref{app:detail-encode} for more detailed analysis of the entire block-encoding procedure.

\begin{remark}[Optimality] The block-encoding of the polynomially decaying kernels in \autoref{eq:polykernel} is optimal in the normalization factor $\alpha$. Indeed, a lowerbound on the operator norm of $\mathbf{K}$ can be obtained by evaluating the norm of the vector $\mathbf{K}\ket{\mathbf{1}}$, where $\ket{\mathbf{1}}$ is the normalized all-ones state. Observe that, $\|\mathbf{K}\| \geq \|\mathbf{K}\ket{\mathbf{1}}\| \geq \int_1^{N} x^{-p} dx$. Thus, for $p\neq 1$, $\|\mathbf{K}\| \geq \frac{N^{1-p}-1}{1-p}$ and for $p=1$, $\|\mathbf{K}\| \geq \ln N$. Clearly, the existence of our block-encoding procedure also implies that these (up to a constant factor) are an upper bound for $\|\mathbf{K}\|$.
\label{remark:opt}
\end{remark}

\begin{remark}[Generalized polynomially decaying kernel]
Since the above analysis only relies on the maximum entry in admissible blocks, it still holds for the case of non-uniform (bounded) masses $m_i$ or more general kernels of the form $k(x,x')=|x-x'|^{-p}G(x,x')$, where $G(x,x')$ is a bounded function such as $G(x,x')=\frac{1}{\sqrt{1+\epsilon|x-x'|^2}}$, $G(x,x')=e^{-i|x-x'|}$, etc. In these cases, we simply scale the kernel matrix such that the maximum entry is $1$ (if needed) and apply the above analysis. This generalized class includes a variety of kernels, ranging from power-law interactions to common Green's functions \cite{green}.
\label{remark:generalized}
\end{remark}

The above block-encoding procedure similarly applies for higher-dimensional settings where coordinates $x_j$ are in a $d$-dimensional space with $d$ constant. In \Cref{app:high-dim}, we show that there are still only $\log N$ hierarchical levels in this higher dimensional setting and that each level $\mathbf{K}^{(\ell)}$ is still block-sparse. The two key features of the hierarchical splitting that allow for optimal block-encodings are maintained, and the normalization factor is still $\alpha=\Theta(\|\mathbf{K}\|)$. The above optimal block-encoding construction and remarks are summarized in the following lemma.

\begin{lemma}[Optimal block-encoding of polynomially decaying kernel matrix using $\mathcal{H}$-matrix] Let $k(x,x')=\|x-x'\|^{-p}G(x,x')$, where $p >0$  and $G(x,x')$ is a bounded function, i.e. $|G(x,x')|< C$. Assume there exists an  efficient quantum circuit $\mathcal{O}_k$ that performs $\mathcal{O}_k: \ket{i}\ket{j}\ket{z} \rightarrow \ket{i}\ket{j}\ket{z\oplus \Tilde{k}(x_i,x_j)},$
where $\Tilde{k}(x_i,x_j)$ is a qubit string description of $k(x_i,x_j)$.
Then the matrix $\mathbf{K}= (k(x_i,x_j))_{i,j=0}^{N-1}$ can be $(\alpha, \operatorname{polylog} N, \varepsilon)$-block-encoded with $\alpha=\Theta(\|\mathbf{K}\|)$ using the hierarchical matrix splitting. That is, the block-encoding is optimal according to \Cref{def:optimal}. This block-encoding can be constructed with two queries to $\mathcal{O}_k$ and $O(\operatorname{polylog}  \frac{N}{\varepsilon})$ additional one- and two-qubit gates.
\label{lem:main}
\end{lemma}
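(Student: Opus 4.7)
The plan follows the five-step outline sketched informally in the preceding subsection, so the main job is to execute the construction carefully and then verify that the resulting normalization factor is tight. First, I would apply the hierarchical splitting from \Cref{sec:hmatrices} to write $\mathbf{K}=\sum_{\ell=2}^{L}\mathbf{K}^{(\ell)}+\mathbf{K}_{ad}$. An admissible block $\mathbf{K}^{\sigma,\rho}$ at level $\ell$ has dimension $2^{L-\ell}$, and since $|k(x_i,x_j)|\le C\,|x_i-x_j|^{-p}$ with minimum pairwise distance $d_{\min}^{(\ell)}=2^{L-\ell}$, its maximum entry is bounded by $k^{(\ell)}_{\max}=C\cdot 2^{-(L-\ell)p}$. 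Access to the rescaled entries $k(x_i,x_j)/k^{(\ell)}_{\max}$ is obtained from $\mathcal{O}_k$ by conditioning the division on a register that indexes the level $\ell$ and the block indices $(\sigma,\rho)$.

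Second, I would block-encode each admissible block with \Cref{lem:densenaive} at normalization $\alpha_\ell = 2^{L-\ell}\cdot k^{(\ell)}_{\max} = C\cdot 2^{(L-\ell)(1-p)}$, using target error chosen so that the final error telescopes to $\alpha\varepsilon$. Because $\mathbf{K}^{(\ell)}$ has block-sparsity at most $3$ in one dimension, \Cref{lem:block-sparse} (together with \Cref{remark:blocksparse}, since all blocks at level $\ell$ share the same $\alpha_\ell$) lifts these to a block-encoding of $\mathbf{K}^{(\ell)}$ with normalization $3\alpha_\ell$. The tridiagonal remainder $\mathbf{K}_{ad}$ is handled by \Cref{lem:sparse} at normalization $3$, and \Cref{lem:linear_comb} combines everything into a block-encoding of $\mathbf{K}$ with total normalization $\alpha = 3 + 3\sum_{\ell=2}^{L}2^{(L-\ell)(1-p)}$. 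Evaluating the geometric sum recovers \autoref{eq:alphasum}: $\alpha=\Theta(1)$ if $p>1$, $\alpha=\Theta(\log N)$ if $p=1$, and $\alpha=\Theta(N^{1-p})$ if $p<1$. The LCU coefficients live on $\log L=\log\log N$ qubits, and since all levels and all admissible blocks are addressed in parallel through the level and block registers, the final circuit uses only two queries to $\mathcal{O}_k$ and $O(\operatorname{polylog}(N/\varepsilon))$ additional gates.

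The main obstacle is the optimality claim $\alpha = \Theta(\|\mathbf{K}\|)$. The upper bound $\|\mathbf{K}\|\le\alpha$ is automatic from validity of the block-encoding. For the matching lower bound, I would follow \Cref{remark:opt} and apply $\mathbf{K}$ to the normalized all-ones vector $|\mathbf{1}\rangle$: because $|x_i-x_j|^{-p}$ dominates the magnitude, each row sum is bounded below by $\int_{1}^{N} x^{-p}\,dx$, yielding $\|\mathbf{K}\|\ge \|\mathbf{K}|\mathbf{1}\rangle\| = \Theta(1),\,\Theta(\log N),\,\Theta(N^{1-p})$ in the three regimes respectively, matching $\alpha$ up to constants. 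For the generalized kernels of \Cref{remark:generalized}, where $G$ may have varying sign or phase, I would replace $|\mathbf{1}\rangle$ by an indicator vector supported on a single central cluster where $G$ is essentially constant; the integral lower bound survives up to the overall bound $C$ on $G$.

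Finally, to extend to higher constant dimension I would invoke \Cref{app:high-dim}, which preserves the two structural features used above, namely that there are $O(\log N)$ hierarchical levels and each $\mathbf{K}^{(\ell)}$ remains block-sparse with $O(1)$ block-sparsity. Feeding these constants into \Cref{lem:block-sparse} changes only the implicit constants in $\alpha$, and the same row-sum lower bound on $\|\mathbf{K}\|$ goes through with the $d$-dimensional integral replacing the $1$-dimensional one. Combining these steps yields the claimed $(\alpha,\operatorname{polylog}N,\varepsilon)$-block-encoding with $\alpha=\Theta(\|\mathbf{K}\|)$, two queries to $\mathcal{O}_k$, and $O(\operatorname{polylog}(N/\varepsilon))$ additional one- and two-qubit gates.
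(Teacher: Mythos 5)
Your proposal is correct and takes essentially the same route as the paper: the hierarchical splitting $\mathbf{K}=\sum_{\ell=2}^{L}\mathbf{K}^{(\ell)}+\mathbf{K}_{ad}$, per-block naive encodings with $\alpha_\ell = 2^{(L-\ell)(1-p)}$, assembly via \Cref{lem:block-sparse} and \Cref{lem:linear_comb}, the $\|\mathbf{K}\|\ge\|\mathbf{K}\ket{\mathbf{1}}\|\ge\int_1^N x^{-p}\,dx$ lower bound of \Cref{remark:opt}, and the appeal to \Cref{app:high-dim} for constant dimension. Your extra paragraph worrying about a sign- or phase-varying $G$ in the optimality step flags a real issue that the paper itself leaves implicit in \Cref{remark:generalized}; the cluster-indicator workaround you sketch is not airtight (contributions to off-support rows can still interfere destructively), but this subtlety predates your write-up, and for $p>1$ it is moot since $\alpha=\Theta(1)$ already matches $\|\mathbf{K}\|\geq\max_{ij}|\mathbf{K}_{ij}|=\Omega(1)$.
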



Our block-encoding procedure is optimal for common kernel matrices, which are neither sparse nor low-rank, thus expanding the scope of application of quantum computers in the block-encoding framework. In \Cref{sec:applications}, for example, we discuss applications of this block-encoding in $N$-body force computations and solving quantum linear systems.

\begin{remark} When sources are not close to uniformly distributed, one must first apply classical adaptive methods \cite{adaptiveHmatrix, Krishnan1995APA} which discretize the domain to form a finer mesh, such that the number of nonzero-mass sources in each mesh site is $O(1)$. The dimension of the kernel matrix in this method depends on the inverse of the smallest pairwise distances between the actual sources. Therefore, the runtime of the block-encoding remains polylogarithmic in the number of particles $N$ so long as the minimum pairwise distance is not exponentially small (see \Cref{app:adaptive}).
\label{remark:adaptive}
\end{remark}

\subsection{Exponentially decaying kernels}
The procedure used in the previous section readily generalizes to exponentially decaying kernels. Consider, for example, the following kernel
\begin{equation}
    k(x,x') = e^{-\|x-x'\|^{q}} P(\|x-x'\|),
\end{equation}
where  $q>0$ and $P(\|x-x'\|)$ is a function growing polynomially or slower. Specifically, $|P(y)| \leq O(y^k)$.
Typical kernels of this form include the Laplace and Gaussian kernels, which are commonly used in machine learning \cite{Rasmussen2003gp}.

As before, we first hierarchically decompose the matrix into admissible blocks. For each admissible block $\mathbf{K}^{\sigma,\rho}$ at level $\ell$, the minimum pairwise distance between particles is $d_{\min}=2^{L-\ell}$ and therefore the maximum entry in the block can be bounded as
\begin{equation}
\begin{aligned}
    \max_{(i,j) \in (\sigma,\rho)} |k(x_i,x_j)| & \leq \max_{(i,j) \in (\sigma,\rho)} \frac{|x_i-x_j|^k}{\sum_{t=0}^{\infty} |x_i-x_j|^{qt}/t!} \\
    & \leq \max_{(i,j) \in (\sigma,\rho)} \frac{t!}{|x_i-x_j|^{qt-k}} \\
    &\leq t!2^{-(L-\ell)(qt-k)},
\end{aligned}
\end{equation}
where we choose an integer $t$ that satisfies $qt -k> 1$.

Thus, an admissible block $\mathbf{K}^{\sigma,\rho}$ at level $\ell$ can be $(\alpha_{\ell},L-\ell+3,\varepsilon/3)$-block-encoded via the naive procedure for dense matrices (\Cref{lem:densenaive}), where $\alpha_{\ell}=t!2^{(L-\ell)(1-(qt-k))}$. From here, an efficient block-encoding of $\mathbf{K}$ can be obtained via the same procedure described in the previous section, where $p$ is now replaced by $qt-k >1$. It can be verified that the normalization factor of this block-encoding is $\Theta(1)$, which is optimal since the maximum entry of $\mathbf{K}$ is of magnitude $\Theta(1)$. 

We note, however, that one can make use of a sparsification approach, in which the $\mathbf{K}$ is approximated as a sparse band matrix, and apply the block-encoding procedure for a sparse matrix (\Cref{lem:sparse}). This approach yields a block-encoding with normalization factor logarithmic in the sparsification error (see \Cref{sec:sparsify}).

\subsection{Block-encoding of general $\mathcal{H}$-matrices}
In the setting most consistent with that of classical $\mathcal{H}$-matrix literature, one may be provided oracle access to the low-rank vectors in $\boldsymbol{\Psi}^{\sigma, \rho}$ and $\boldsymbol{\Phi}^{ \rho}$ which approximate an admissible block  $\mathbf{\widetilde{K}}^{\sigma, \rho} = \boldsymbol{\Psi}^{\sigma, \rho} \mathbf{D} \left(\boldsymbol{\Phi}^{ \rho}\right)^{\dagger}$ (as seen in \autoref{eq:admissible-block}). Such a situation may also arise when one wants to apply an arbitrary $\mathcal{H}$-matrix which has no connection to any kernel function, \textit{e.g.,} see \cite{fan-multiscale-net}. In this case, we can use the block-encoding procedure for low-rank matrices (\Cref{lem:lowrank-encode}) to block-encode  $\mathbf{\widetilde{K}}^{\sigma, \rho}$. 




After block-encoding the admissible blocks in each level of the hierarchical splitting, we follow the same procedure in previous sections to obtain the entire kernel matrix. Namely we apply the procedure for block-sparse matrices (\Cref{lem:block-sparse}) to construct a block-encoding of $\mathbf{K}^{(\ell)}$ for each level $\ell$ of the hierarchy; then we sum over all levels using the procedure for linear combination of block-encoded matrices (\Cref{lem:linear_comb}) to obtain a block-encoding of $\mathbf{K}$. The normalization factor of the block-encoding of $\mathbf{K}$ is bounded as
\begin{equation}
    \alpha = O\left(\sum_{\ell=2}^{\log N} \alpha_{\ell}\right),
\end{equation}
where $\alpha_{\ell}$ is normalization factor of the admissible blocks $\mathbf{K}^{\sigma,\rho}$ at level $\ell$ (for ease of analysis we assume admissible blocks at the same level have the same block-encoding normalization factor). As before, assuming the maximum entry of $\mathbf{K}$ is 1, the naive block-encoding (\Cref{lem:densenaive}) yields a block-encoding with a normalization factor of $N$, the dimension of the matrix. Whereas, $\mathcal{H}$-matrix block-encoding can yield an exponentially better normalization factor over the naive approach when $\alpha_{\ell}$ are $O(\operatorname{polylog}(N))$. This is \emph{not} the case for some kernels studied in classical $\mathcal{H}$-matrix literature \cite{Fenn2002FMMAH, fmm_shortcourse} such as the log kernel $k(x,x')=\log (|x-x'|)$ and the multiquadric kernel $k(x,x')=\sqrt{c+|x-x'|^2}$. In fact, we show in \Cref{app:naiveoptimal} that these particular kernels and most polynomially \emph{growing} kernels can already be optimally block-encoded using the naive approach of \Cref{lem:densenaive}. The naive approach works intuitively because these kernels are numerically low-rank and their top singular vectors are nearly uniform.
 


\subsection{Variants of hierarchical splitting}\label{sec:variants}


The form of the hierarchical splitting can be slightly modified for other types of decaying kernels. For example, consider kernels of the form $k(x,x')= \frac{1}{|(x-x')+c|^p}$ on the domain $\Omega=[0,N-1)$ and $-N<c<N$ is an integer. In this case, we can use a ``shifted'' hierarchical splitting to obtain an optimal block-encoding of the kernel matrix $\mathbf{K}=(|i-j+c|^{-p})_{i,j=0}^{N-1}$, in which the hierarchy is shifted in the horizontal direction (along the rows). Similarly, for kernels of the form $k(x,x')= \frac{1}{(|x-x'|-c)^p}$ ($0\leq c<N$), we can apply a shifted hierarchical splitting along the skew-diagonal direction. We illustrate this approach in \Cref{app:variantHmatrix}.

In \Cref{app:generalized}, we also describe a more general block-encoding that exploits the structure of entry magnitudes in a given matrix even if the entries of the admissible blocks are not neighboring (in fact, \Cref{def:admissible} does not require the entries of an admissible block to be contiguous). At a high level, we use a pointer oracle which, conditioned on the level $\ell$, points to the entries whose magnitudes are in the range $[2^{-(\ell+1)}, 2^{-\ell})$. When this oracle can be efficiently implemented, this block-encoding procedure might find applications in other classes of dense matrices.

\begin{prop}[Block-encoding using generalized hierarchical splitting] Let $A \in \mathbb{R}^{N \times N}$ be a Hermitian matrix with non-negative entries which are provided via an oracle $\mathcal{O}_{A}: \ket{i}\ket{j}\ket{z} \rightarrow \ket{i}\ket{j} \ket{z\oplus \Tilde{a}_{ij}}$, where $\Tilde{a}_{ij}$ is an exact bit string description of $a_{ij}$. Suppose $N=2^L$ and $2^{-L}\leq a_{ij} \leq 1$. For any $0\leq \ell < L=\log N$ and a column $j$, define its level-$\ell$ row-index subset to be $I_{\ell}(j)=\{i|2^{-(\ell+1)} < a_{ij} \leq 2^{-\ell}\}$ and let $n_{\ell}(j)=|I_{\ell}(j)|$. Suppose there exist $n_{\ell}$ and $\gamma$ such that $\gamma  n_{\ell} \leq n_{\ell}(j) \leq n_{\ell}$ for any $j$. Furthermore, suppose there exists an \emph{invertible} function $f_j(\ell,k)$ which returns the row index of the $k$-th element (according to some fixed ordering, e.g. from top to bottom in the column) in $I_{\ell}(j)$. Then, one can construct a block-encoding of $A$ with a normalization factor of at most $\frac{2 \log N}{\gamma} \|A\|$ using two queries to $\mathcal{O}_A$, $O(\operatorname{polylog}(\frac{N}{\varepsilon}))$ elementary gates, and $O(\operatorname{polylog}(\frac{N}{\varepsilon}))$ ancilla qubits.
\label{prop:generalized}
\end{prop}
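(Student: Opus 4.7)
The plan is to partition $A$ into $L = \log N$ single-magnitude matrices $A^{(\ell)}$, apply the sparse block-encoding primitive \Cref{lem:sparse} to each level, and then linearly combine them via \Cref{lem:linear_comb}. Explicitly, let $A^{(\ell)}_{ij} = a_{ij}$ when $a_{ij} \in (2^{-(\ell+1)}, 2^{-\ell}]$ and $0$ otherwise, so that $A = \sum_{\ell=0}^{L-1} A^{(\ell)}$. Because $A$ is Hermitian with non-negative entries, each $A^{(\ell)}$ is also Hermitian, so its row- and column-sparsity are both bounded by $n_\ell$, while every non-zero entry has magnitude at most $2^{-\ell}$.

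Next, I would construct the sparse-matrix oracles required by \Cref{lem:sparse} from the given primitives $\mathcal{O}_A$ and $f_j(\ell,k)$. The column-sparsity pointer, ``the row index of the $k$-th non-zero entry of column $j$ of $A^{(\ell)}$,'' is exactly $f_j(\ell, k)$; the row-sparsity pointer is identical by symmetry of $A^{(\ell)}$. The entry oracle specific to $A^{(\ell)}$ is obtained by calling $\mathcal{O}_A$ to produce $a_{ij}$, comparing the bit-string against the thresholds $2^{-\ell}$ and $2^{-(\ell+1)}$, and gating the result to zero when it falls outside the level. The invertibility of $f_j(\ell,k)$ assumed in the hypothesis is precisely what allows the pointer register to be uncomputed reversibly, keeping the ancilla budget logarithmic. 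Applying \Cref{lem:sparse} (after the standard rescaling by $2^{\ell}$) then yields an $(n_\ell \cdot 2^{-\ell},\, O(\log N),\, \varepsilon/L)$-block-encoding of $A^{(\ell)}$ with $O(1)$ queries to $\mathcal{O}_A$ and $f$ plus $O(\operatorname{polylog}(N/\varepsilon))$ additional gates.

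Combining the $L$ level block-encodings through \Cref{lem:linear_comb} produces a block-encoding of $A$ whose normalization factor satisfies $\alpha \leq O(\log N)\cdot \max_\ell (n_\ell \cdot 2^{-\ell})$, with the $\log N$ slack coming from the uniform state preparation over the level register. To relate this to $\|A\|$, fix any column $j$ and observe
\[
\|A_{\cdot,j}\|_1 \;=\; \sum_{\ell=0}^{L-1} \sum_{i \in I_\ell(j)} a_{ij} \;\geq\; \sum_{\ell=0}^{L-1} n_\ell(j)\cdot 2^{-(\ell+1)} \;\geq\; \frac{\gamma}{2}\sum_{\ell=0}^{L-1} n_\ell \cdot 2^{-\ell},
\]
where the last step uses the hypothesis $n_\ell(j) \geq \gamma n_\ell$. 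Because $A$ is Hermitian with non-negative entries, the Perron--Frobenius inequality $\|A\| \geq \min_j \|A_{\cdot,j}\|_1$ applies, so $\max_\ell n_\ell\cdot 2^{-\ell} \leq \sum_\ell n_\ell\cdot 2^{-\ell} \leq \frac{2}{\gamma}\|A\|$, and the claimed bound $\alpha \leq \frac{2\log N}{\gamma}\|A\|$ follows.

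The main obstacle will be the careful synthesis of the sparse-access oracles for each $A^{(\ell)}$ from the weaker primitives at hand: \Cref{lem:sparse} expects clean, reversible sparse-row and sparse-column pointers plus an entry oracle specific to $A^{(\ell)}$, whereas we start with only one entry oracle $\mathcal{O}_A$ (for all of $A$) and one within-level pointer $f_j(\ell,k)$. Reversibility of the synthesized oracles is enabled by the stated invertibility of $f_j(\ell,k)$, and level classification is performed by bit-level comparison on $\mathcal{O}_A$'s output, which must then be uncomputed so that no garbage leaks into the post-selection subspace. Once these oracles are assembled, the per-level normalization, the Perron--Frobenius argument, and the linear combination across levels are routine compositions of the lemmas already established.
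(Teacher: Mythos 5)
Your proposal is correct and essentially mirrors the paper's own argument. Both start from the same magnitude-level decomposition $A = \sum_\ell A^{(\ell)}$, block-encode each level as a sparse matrix with normalization $n_\ell 2^{-\ell}$ (using the pointer oracle $f_j(\ell,\cdot)$ for both the row- and column-sparsity access, justified by symmetry), combine over levels via a linear combination controlled on a $\log\log N$-qubit level register, and lower-bound $\|A\|$ by the minimum column $\ell_1$-norm (the paper obtains this via $\|A\ket{\mathbf{1}}\|$; you invoke Perron--Frobenius for a nonnegative Hermitian matrix, which gives the same $\|A\| \geq \min_j\|A_{\cdot,j}\|_1$). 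The only presentational difference: the paper directly unrolls the circuit with a \emph{weighted} level-register state $\propto \sum_\ell \sqrt{n_\ell 2^{-\ell}}\ket{\ell}$ to get normalization $(\sum_\ell n_\ell 2^{-\ell})\cdot\log N$, whereas you rescale every level to a common $\hat\alpha = \max_\ell n_\ell 2^{-\ell}$ and use a \emph{uniform} level superposition for normalization $(\log N)\cdot\max_\ell n_\ell 2^{-\ell}$; both quantities are bounded by $\frac{2\log N}{\gamma}\|A\|$, so the claimed bound follows either way. One point you gloss over but which matters for the "two queries to $\mathcal{O}_A$" claim is that the per-level block-encodings must be implemented \emph{in parallel}, controlled on the level register (the $W$ of \Cref{lem:linear_comb}), so that $\mathcal{O}_A$ is queried only twice total rather than $2L$ times; this is the same observation the paper makes explicitly in \Cref{sec:polykernel}.
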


Intuitively, $\gamma$ measures how balanced the magnitudes of the matrix entries are across rows and columns. If $\frac{1}{\gamma}= \operatorname{polylog}(N)$, this yields a good block-encoding according to \Cref{def:optimal}. For instance, $\frac{1}{\gamma}$ is a constant for kernel matrices. In the proof of \Cref{prop:generalized}, we also present an efficient state preparation procedure using methods of $\mathcal{H}$-matrix splitting in \Cref{app:generalized}, which could be of independent interest.






\section{Applications}\label{sec:applications}

In this section, we discuss applications of our quantum hierarchical block-encoding method in two commonly studied settings: (i) gravitational potential calculation \cite{Barnes1986} and (ii) solving integral equations \cite{fmm_shortcourse}. The first problem requires performing matrix-vector multiplication on a quantum computer, while the second problem is equivalent to solving a quantum linear system. We also compare our procedure to the sparsification approach and address potential issues with large condition numbers.

We first state the following lemmas, which allow one to apply a block-encoded matrix or its inverse to a quantum state. Other polynomial transformations of block-encoded matrices can be performed by the quantum singular value transform (Theorem 17 of \cite{gilyen2019quantum}).

\begin{lemma}[Applying block-encoded matrices] If $U$ is an $(\alpha, a, \varepsilon)$-block-encoding of the matrix $A \in \mathbb{C}^{2^s \times 2^s}$ and $U_{\mathbf{x}}$ is a unitary that prepares the $s$-qubit quantum state $\ket{\mathbf{x}}$, then it takes $O\left(\frac{\alpha}{\|A\|} \kappa\right)$ queries to $U$ and $U_{\mathbf{x}}$ to obtain a quantum state $\ket{\mathbf{y}}$ such that $\| \ket{\mathbf{y}} - \ket{A\mathbf{x}} \| \leq \varepsilon$, where $\kappa$ is the condition number of $A$.
\label{lem:forward}
\end{lemma}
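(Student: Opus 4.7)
The plan is to apply $U$ to the state $\ket{0^a}\ket{\mathbf{x}}$ (preparing $\ket{\mathbf{x}}$ first with $U_{\mathbf{x}}$) and then use amplitude amplification to boost the post-selection success on the ancilla register. By definition of a block-encoding,
\begin{equation*}
U\ket{0^a}\ket{\mathbf{x}} = \tfrac{1}{\alpha}\ket{0^a}\bigl(A\ket{\mathbf{x}}\bigr) + \ket{G},
\end{equation*}
where $\ket{G}$ is orthogonal to the $\ket{0^a}$ subspace (up to the $\varepsilon$-error from the block-encoding). So, on paper, successfully post-selecting the ancilla on $\ket{0^a}$ yields the target state $\ket{A\mathbf{x}} = A\ket{\mathbf{x}}/\|A\ket{\mathbf{x}}\|$.

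The key quantitative step is to lower bound the amplitude of the ``good'' subspace. Since $\ket{\mathbf{x}}$ is a unit vector, $\|A\ket{\mathbf{x}}\|\ge \sigma_{\min}(A) = \|A\|/\kappa$, and therefore the amplitude of the $\ket{0^a}$ component is at least $\|A\|/(\alpha\kappa)$. Plain measurement would succeed with probability at least $(\|A\|/(\alpha\kappa))^2$, so I would instead invoke (fixed-point) amplitude amplification, using as the reflection about the ``initial'' state the unitary $V = U(I_a\otimes U_{\mathbf{x}})$ built from $U_{\mathbf{x}}$ and $U$, and as the reflection about the ``good'' subspace the projector onto $\ket{0^a}$ on the ancilla register. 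Standard amplitude amplification turns an amplitude $p^{1/2}$ into $\Theta(1)$ success using $O(1/p^{1/2})$ calls to $V$ and $V^\dagger$, giving the advertised $O(\alpha\kappa/\|A\|)$ query count to $U$ (and to $U_{\mathbf{x}}$).

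The main technical obstacle is propagating the $\varepsilon$-error through amplitude amplification. The block-encoding only approximately realizes $A/\alpha$ on the $\ket{0^a}$ sector, so after $T=O(\alpha\kappa/\|A\|)$ iterations the accumulated operator-norm error on the output is at most $O(T\varepsilon/\alpha)$ in the standard analysis (see, e.g., the QSVT error composition in~\cite{gilyen2019quantum}). To obtain the target precision $\varepsilon$ on the normalized output $\ket{\mathbf{y}}$, I would thus run the block-encoding at an internal precision $\varepsilon' = \Theta(\varepsilon\|A\|/(\alpha\kappa))$, which is absorbed in the $\widetilde{O}$ notation and preserves the stated query complexity. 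I would also invoke a fixed-point variant of amplitude amplification (or the amplitude amplification-for-block-encodings primitive from~\cite{gilyen2019quantum}) so that no knowledge of the exact amplitude $\|A\ket{\mathbf{x}}\|/\alpha$ is required — only the lower bound $\|A\|/(\alpha\kappa)$ coming from $\kappa$.

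In summary, the proof reduces to: (i) write the block-encoding identity for $U\ket{0^a}\ket{\mathbf{x}}$; (ii) use $\kappa$ to lower-bound the good-subspace amplitude by $\|A\|/(\alpha\kappa)$; (iii) apply fixed-point amplitude amplification with $O(\alpha\kappa/\|A\|)$ iterations; (iv) choose the internal block-encoding precision so the accumulated error matches the desired $\varepsilon$ on the normalized output. Steps (i)–(iii) are essentially one-line arguments; the careful bookkeeping in (iv) is where I expect most of the work to lie.
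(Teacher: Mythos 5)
Your proposal is correct and follows the same route as the paper's proof: apply $U(I_a\otimes U_{\mathbf{x}})$, observe that the post-selection amplitude on the $\ket{0^a}$ sector is $\|A\ket{\mathbf{x}}\|/\alpha\ge \|A\|/(\alpha\kappa)$ since $\sigma_{\min}(A)=\|A\|/\kappa$, and use amplitude amplification to boost the success probability from $\Omega((\|A\|/(\alpha\kappa))^2)$ to $\Omega(1)$ with $O(\alpha\kappa/\|A\|)$ queries. You additionally make explicit two technical points the paper leaves implicit — the need for a fixed-point (or amplitude-oblivious) variant of amplitude amplification, and the error-budget calculation relating the block-encoding precision to the target precision on the normalized output — which is sound and, if anything, more careful than the paper's one-paragraph argument.
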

\begin{proof}
Applying the operator $U (I_a \otimes U_{\mathbf{x}})$ on the state $\ket{0^a}\otimes \ket{0^s}$ we obtain
\begin{equation}
    \ket{0^a} \frac{A}{\alpha}\ket{\mathbf{x}} + \ket{\text{garbage}},
\end{equation}
where $\ket{\text{garbage}}$ is orthogonal to the subspace $\ket{0^a}$. By post-selecting the subspace $\ket{0^a}$, we obtain the desired quantum state $\ket{A \mathbf{x}}$. This step succeeds with a probability of $\left\|\frac{A\ket{\mathbf{x}}}{\alpha} \right\|^2= \Omega((\frac{\|A\|}{\alpha \kappa} )^2)$. Using amplitude amplification \cite{amplitudeamp}, this can be improved to $\Omega(\frac{\|A\|}{\alpha \kappa} )$, giving us the specified query complexity.
\end{proof}

\begin{lemma}[Applying inverse block-encoded matrices, adapted from \cite{costa2021optimal}] If $U$ is an $(\alpha, a, \varepsilon)$-block-encoding of an invertible matrix $A \in \mathbb{C}^{2^s \times 2^s}$ and $U_{\mathbf{b}}$ is a unitary that prepares the $s$-qubit quantum state $\ket{\mathbf{b}}$, then it takes $O\left(\frac{\alpha \kappa}{\|A\|}\log (1/\varepsilon)\right)$ queries to $U$ and $U_{\mathbf{b}}$ to obtain a quantum state $\ket{\mathbf{y}}$ such that $\|\ket{\mathbf{y}} - \ket{A^{-1}\mathbf{b}}\| < \varepsilon$.
\label{lem:inv}
\end{lemma}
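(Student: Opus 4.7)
The plan is to invoke the discrete adiabatic linear-systems solver of Costa et al.\ \cite{costa2021optimal}, which achieves the optimal $O(\kappa \log(1/\varepsilon))$ query scaling when the system matrix is given by an optimally normalized block-encoding. All that is needed is to track how a possibly suboptimal normalization factor $\alpha \geq \|A\|$ enters the cost. Observe that $U$ is an $(\alpha, a, \varepsilon)$-block-encoding of $A$, equivalently a $(1, a, \varepsilon/\alpha)$-block-encoding of $A/\alpha$, whose singular values lie in $[\|A\|/(\alpha\kappa),\, \|A\|/\alpha]$.

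Using $U$ and $U_{\mathbf{b}}$, I would (i) set up the standard quantum linear systems Hamiltonian interpolation $H(s) = (1-s)H_0 + s H_1$ whose zero-energy eigenstate deforms continuously from a flagged copy of $\ket{\mathbf{b}}$ at $s=0$ to a flagged encoding of $\ket{A^{-1}\mathbf{b}}$ at $s=1$, with $H_1$ built from the block-encoded $A/\alpha$; (ii) implement the corresponding qubitized quantum walk operator $W(s)$ via $U$; (iii) run the discrete adiabatic evolution along the accelerated schedule of Costa et al., whose number of walk steps scales as the inverse of the minimum spectral gap of $H(s)$. This minimum gap is controlled by the smallest singular value of $A/\alpha$, namely $\|A\|/(\alpha\kappa)$, so the total cost comes out to $O\!\left(\frac{\alpha\kappa}{\|A\|}\log(1/\varepsilon)\right)$ queries to $W(s)$, each of which uses $O(1)$ queries to $U$ and to $U_{\mathbf{b}}$. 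Measuring the flag register then returns $\ket{A^{-1}\mathbf{b}}$ to error $\varepsilon$ with constant success probability by the schedule design of \cite{costa2021optimal}, so no outer amplitude amplification is required.

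The main bookkeeping obstacles are (a) verifying that the internal error $\varepsilon$ of the block-encoding propagates only logarithmically into the walk-step budget---taking $\varepsilon_U \lesssim \varepsilon \cdot \|A\|/(\alpha\kappa)$ suffices, and this is absorbed inside the $\log(1/\varepsilon)$ factor---and (b) confirming that the Costa et al.\ guarantees, which are stated assuming the block-encoded matrix has unit operator norm, carry over verbatim to $A/\alpha$ by substituting its effective condition number $\kappa_{\mathrm{eff}} = 1/\sigma_{\min}(A/\alpha) = \alpha\kappa/\|A\|$ in place of $\kappa$. Both are routine once the rescaling $A \mapsto A/\alpha$ is made explicit, so the lemma follows by direct adaptation of \cite{costa2021optimal}.
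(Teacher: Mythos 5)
Your proposal is essentially the same argument the paper gives: invoke the discrete adiabatic solver of Costa et al., rescale to the normalized block-encoded matrix $A/\alpha$, and observe that its smallest singular value is $\|A\|/(\alpha\kappa)$, so the effective condition number in the Costa et al.\ bound becomes $\alpha\kappa/\|A\|$. The additional detail you supply about the Hamiltonian interpolation and qubitized walk is correct background on how \cite{costa2021optimal} works internally, but the key substitution and the resulting $O(\frac{\alpha\kappa}{\|A\|}\log(1/\varepsilon))$ cost match the paper's proof exactly.
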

\begin{proof}
We apply the main theorem of \cite{costa2021optimal}, which assumed $\alpha=\|A\|=1$ to achieve a query complexity of $O(\kappa \log(1/\varepsilon))$. In general, $\kappa$ needs to be replaced by the inverse of the minimum singular value of the block-encoded part. In our case, this singular value is $\frac{\|A\|}{\alpha \kappa}$.
\end{proof}

\qn{As noted in \Cref{sec:blockenc}, the query complexities of the above procedures, and more generally quantum algorithms in the block-encoding framework \cite{gilyen2019quantum}, only depend on the ratio $\alpha/\|A\|$. Hence, a block-encoding is optimal (\Cref{def:optimal}) when $\alpha = \Theta(\|A\|)$. In the following, we apply such optimal block-encodings in \Cref{sec:main} to two specific applications of kernel matrices.}

\subsection{Gravitational potential calculation: Quantum fast multipole method}
We apply the block-encoding obtained in \Cref{sec:polykernel} to solve the gravitational potential problem \cite{Barnes1986} on a quantum computer. Since we use a quantum version of the hierarchical splitting, this is a quantum analogue to the classical fast multipole method \cite{Greengard1987}. Consider a 1D system of $N$ particles which are approximately uniformly distributed in the domain $[0,O(N)]$, where the mass and location of particle $j$ are $m_j$ and $x_j$, respectively. Furthermore, assume that $m_j$ are $\Theta(1)$ and the smallest pairwise distance between the particles is $\Omega(1)$. We would like to compute the accumulated potential at every particle $i$, stored in a vector $\Phi$ with entries $\Phi_i=\sum_{j\neq i} m_j|x_i-x_j|^{-1}$.

\begin{prop}[Quantum fast multipole method] Given an oracle $\mathcal{O}_k$ that performs $\mathcal{O}_k: \ket{i}\ket{j}\ket{z} \rightarrow \ket{i}\ket{j}\ket{z\oplus \Tilde{k}_{ij}},$
where $\Tilde{k}_{ij}$ is a $b$-qubit string description of $k(x_i,x_j)$, and a procedure $P_m$ that prepares the quantum state $\ket{\mathbf{m}}$ which stores the normalized masses of the particles, we can obtain a quantum state $\ket{\phi}$ that $\varepsilon$-approximates the normalized vector $\Phi/\|\Phi\|$, whose $j$-th entry is the accumulated potential at particle $j$, using $O(1)$ queries to $P_m$, $O\left(\operatorname{polylog}\frac{N}{\varepsilon}\right)$ one- and two-qubit gates, and $O\left(b,\operatorname{polylog}\frac{N}{\varepsilon}\right)$ additional qubits. 
\end{prop}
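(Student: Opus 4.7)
The plan is to recognize that $\Phi = \mathbf{K}\mathbf{m}$, where $\mathbf{K}$ is the polynomially decaying kernel matrix corresponding to $k(x,x') = |x-x'|^{-1}$ (with an arbitrary bounded constant on the diagonal to absorb the self-interaction), and then compose the block-encoding from \Cref{lem:main} with the state-application procedure of \Cref{lem:forward}. Concretely, I first invoke \Cref{lem:main} with $p=1$ to get an $(\alpha, \operatorname{polylog}(N), \varepsilon')$-block-encoding $U$ of $\mathbf{K}$ with $\alpha = \Theta(\|\mathbf{K}\|) = \Theta(\log N)$ (the $p=1$ branch of \eqref{eq:alphasum} combined with \Cref{remark:opt}); this costs $2$ queries to $\mathcal{O}_k$ and $O(\operatorname{polylog}(N/\varepsilon))$ extra gates. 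The hypothesis that the minimum pairwise distance is $\Omega(1)$ and the domain is $[0,O(N)]$ puts us in the regime of (an adaptive variant of) \Cref{lem:main} via \Cref{remark:adaptive}, so the block-encoding carries over to the non-uniform setting.

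Next, I prepare $\ket{\mathbf{m}}$ using the single call to $P_m$, and apply \Cref{lem:forward} with input state $\ket{\mathbf{m}}$ and block-encoding $U$ to obtain a quantum state $\varepsilon$-close to $\ket{\mathbf{K}\mathbf{m}} = \ket{\Phi}/\|\Phi\|$. The number of invocations of $U$ and $P_m$ given by \Cref{lem:forward}, before bounding by the worst-case condition number, is $O(\alpha/\|\mathbf{K}\ket{\mathbf{m}}\|)$.

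The main obstacle is therefore to show that $\|\mathbf{K}\ket{\mathbf{m}}\| = \Omega(\alpha) = \Omega(\log N)$, so that the post-selection in \Cref{lem:forward} succeeds with constant probability (after amplitude amplification). This is exactly the place where the distribution of masses and particle positions enters. Since $m_j = \Theta(1)$ and the minimum pairwise distance is $\Omega(1)$, each coordinate is bounded below by
\begin{equation*}
\Phi_i \;=\; \sum_{j \neq i} \frac{m_j}{|x_i - x_j|} \;\geq\; c \sum_{j \neq i} \frac{1}{|x_i - x_j|} \;=\; \Omega(\log N),
\end{equation*}
where the last equality is the usual harmonic-sum estimate (valid uniformly in $i$ up to a constant factor, even for $i$ near the boundary of the domain). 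Consequently $\|\Phi\|^2 = \Omega(N \log^2 N)$, and using $\|\mathbf{m}\| = \Theta(\sqrt{N})$ gives $\|\mathbf{K}\ket{\mathbf{m}}\| = \|\Phi\|/\|\mathbf{m}\| = \Omega(\log N)$, which matches $\alpha$ up to a constant. The ratio $\alpha/\|\mathbf{K}\ket{\mathbf{m}}\|$ is thus $O(1)$, so \Cref{lem:forward} terminates with $O(1)$ queries to $U$ and $P_m$.

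Combining the costs: $O(1)$ applications of $U$ translate to $O(1)$ queries to $\mathcal{O}_k$, $O(\operatorname{polylog}(N/\varepsilon))$ one- and two-qubit gates, and $O(b,\operatorname{polylog}(N/\varepsilon))$ ancilla qubits (from the block-encoding in \Cref{lem:main}), together with $O(1)$ uses of $P_m$. Propagating the block-encoding error $\varepsilon'$ through \Cref{lem:forward} by setting $\varepsilon' = \Theta(\varepsilon)$ absorbs into the polylog factors, yielding the stated resource bound.
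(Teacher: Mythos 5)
Your proof is correct and follows essentially the same route as the paper: block-encode $\mathbf{K}$ via \Cref{lem:main} with $\alpha = \Theta(\log N)$, apply \Cref{lem:forward} to $\ket{\mathbf{m}}$, and observe that the post-selection succeeds with $\Theta(1)$ probability because $\|\Phi\| = \Omega(\sqrt{N}\log N)$ while $\|\mathbf{m}\| = \Theta(\sqrt{N})$. The only cosmetic difference is that you spell out the harmonic-sum lower bound $\Phi_i = \Omega(\log N)$ explicitly, whereas the paper defers this to the calculation in \Cref{remark:opt}.
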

\begin{proof} We first use $P_m$ to create the state $\ket{\mathbf{m}}= \sum_{j=0}^{N-1} \frac{m_j}{\|\mathbf{m}\|}\ket{j}$ (assume $\log N$ is an integer). Let $\mathbf{K}=( k(x_i,x_j))_{i,j}$, observe that $\ket{\Phi}= \mathbf{K}\ket{\mathbf{m}} = \frac{\Phi}{\|\mathbf{m}\|}$. Using \Cref{lem:main}, we can construct a unitary $U$, which is an $(\alpha, a, \varepsilon)$-block-encoding of $\mathbf{K}$, with the specified gate complexity and wherein $\alpha = \Theta(\log N)$ and $a=O(\log N)$. 
Then, we apply \Cref{lem:forward}, with a more careful analysis on the post-selection step. Here, the probability of succesfully post-selecting the subspace $\ket{0^a}$ is $\|\frac{\Phi}{\alpha \|\mathbf{m}\|} \|^2= \Theta(1)$ since $\|\mathbf{m}\|=\Theta(\sqrt{N})$ and $\|\Phi\|=\Omega(\sqrt{N} \log N)$ (by noting that the particle masses are $\Theta(1)$ and using similar calculations to those in \Cref{remark:opt}). This gives us the specified query complexity.
\end{proof}
We remark that the (non-unitary) state preparation procedure $P_m$ can be efficiently implemented when $m_j=\Theta(1)$ (see Theorem 1 of \cite{mitarai2019quantum}). In addition, if particles are not (nearly) uniformly distributed, we can discretize the domain to form a finer mesh as described in \Cref{remark:adaptive}. Here we have assumed these distances are $\Omega(1)$, hence the size of the mesh (and the size of the kernel matrix) is $O(N)$. This method also enables computing the potential at locations other than the sources themselves. This proposition also applies to $N$-body calculations in higher-dimensional systems as the hierarchical splitting and \Cref{lem:main} can be generalized to these cases (see \Cref{app:high-dim}).

\subsection{Integral equation: Quantum algorithm for linear systems of kernel matrix}
We provide an application of our hierarchical block-encoding procedure in solving a quantum linear systems of kernel matrices. These problems arise in many research areas, such as Gaussian processes \cite{Rasmussen2003gp} and integral equations \cite{atkinson2009numerical}. Consider the following integral equation over the circular strip of unit radius and height and width $\lambda \ll 1$ as shown in \autoref{fig:strip}.

\begin{figure*}[]
    \centering
    \includegraphics[width=0.99\textwidth]{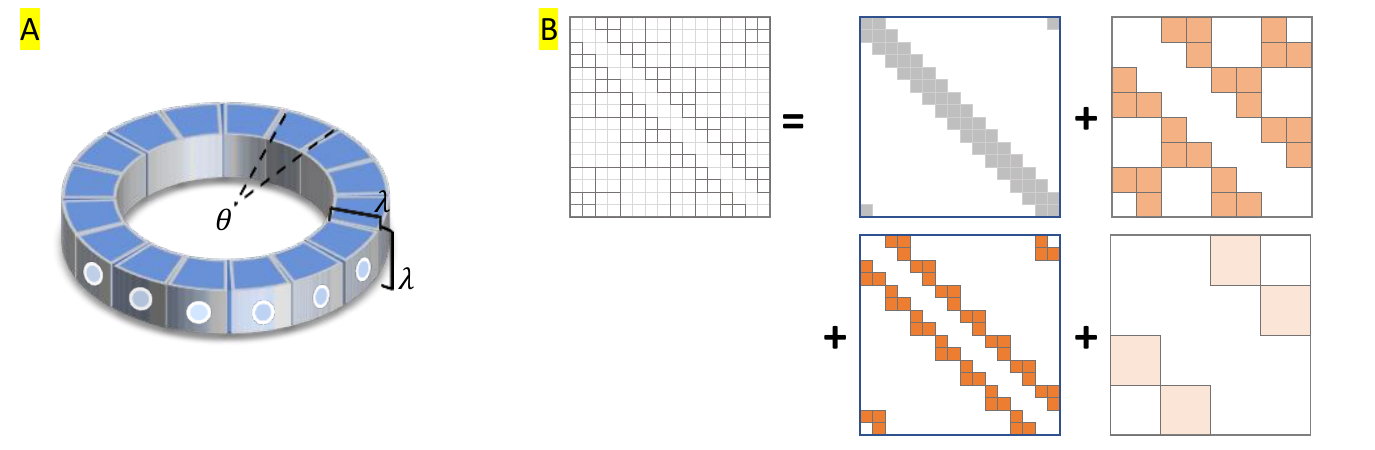}
    \caption{(A) A thin circular strip of unit radius and height $\lambda \ll 1$. When using the collocation method to solve an integral equation on its surface, one divides the strip into $N$ ``panels'' such that \qn{$\lambda=O(1/N)$}. (B) The ``cyclic'' hierarchical splitting of the kernel matrix that arises when applying the collocation method to solve for an integral equation defined over the strip (\autoref{eq:ie}).}
    \label{fig:strip}
\end{figure*}

\begin{equation}
    g(\mathbf{x}) = f(\mathbf{x}) +  \int_{\Omega} k(\mathbf{x},\mathbf{x'}) f(\mathbf{x'}) d\mathbf{x'},
    \label{eq:ie}
\end{equation}
where $k(\mathbf{x},\mathbf{x'})= \|\mathbf{x}-\mathbf{x'}\|^{-p}$ with $0< p\leq 2$, $g(\mathbf{x})$ is a known smooth function, and $f(\mathbf{x})$ is the unknown function we would like to solve. Equations of this form are commonly found in, \textit{e.g.,}, Dirichlet problems \cite{atkinson2009numerical}, where $k(\mathbf{x},\mathbf{x'})$ is the Coulomb potential, $g(\mathbf{x})$ is the given potential on the strip, and $f(\mathbf{x})$ is the unknown charge density. Here, we use the collocation method to numerically solve this integral equation which represents the unknown function in a basis $f(\mathbf{x})= \sum_{i=1}^{N} f_i \varphi_i(x)$ and solves for the coefficients $f_i$ such that the integral equation is satisfied at specified locations. Particularly, we use the piecewise-constant function as the basis $\varphi_i(\theta)=1$ if $\theta \in [2\pi i/N,2 \pi (i+1)/N)$ and 0 otherwise. We refer to the segment $[2\pi i/N,2 \pi (i+1)/N)$ as \emph{panel} $i$. Then, the integral on the RHS of \autoref{eq:ie} can be rewritten as
\begin{equation}
    \sum_{i=1}^{N} f_i \int_{\text{panel } i} \frac{1}{\|\mathbf{x}-\mathbf{x'}\|^{p}} d\mathbf{x'},
\end{equation}
In the collocation method, we solve for the integral equation at the centroids of the panels located at $c_i = 2 \pi (i+0.5)/N$. In other words, we solve the following linear system
\begin{equation}
    \mathbf{g} = (I + \mathbf{K}) \mathbf{f},
    \label{eq:discretized}
\end{equation}
where bold text indicates discretized values, \textit{e.g.,}, $\mathbf{g}_i=g(c_i)$ and $\mathbf{K}_{ij}= \int_{\text{panel } j} \|c_i-\mathbf{x'}\|^{-p} d\mathbf{x'}$.

Observe that this collocation method can handle the singularity when $i=j$ as $\mathbf{K}_{ii}=O(\lambda^{3-p})$. Furthermore, when $|i-j|$ is large, $\mathbf{K}_{ij}\approx \text{distance}^{-p} \cdot \text{panel volume} \approx \frac{\lambda^2}{N(2\sin(\pi |i-j|/N))^{p}}$. For panels $i$, $j$ close to each other (\textit{e.g.,} $|i-j|<O(1)$), simple adaptive methods can be used to approximate the integral. For instance, one can subdivide the panel $j$ into eight smaller panels and approximate the integral as the sum of these eight sub-panels; we neglect the details and refer the interested reader to \cite{bem}. In summary, the entries of $\mathbf{K}$ are simple to calculate and they approximately are
\begin{equation}
    \mathbf{K}_{ij} \approx  \begin{cases}
    \frac{\lambda^2}{N(2\sin(\pi |i-j|/N))^{p}}& \text{ for } i\neq j \\
    O(\lambda^{3-p})& \text{ for } i=j
    \end{cases}.
    \label{eq:col-kernel}
\end{equation}
The error bound and the uniqueness of the solution in the collocation method are well-studied in classical literature, we refer to \cite{atkinson2009numerical, bem, Hackbusch_BEM} for more details.

We now proceed to block-encode this kernel matrix. Observe that $\frac{1}{N (2\sin(\pi |i-j|/N))^{p}} \leq \frac{1}{(\pi |i-j|)^p}$ for any \qn{panels $i,j$ such that $|i-j|\leq N/2$}, hence we can apply a ``cyclic'' version of the hierarchical splitting in \Cref{sec:hmatrices} to \emph{optimally} block-encode $\mathbf{K}$ (see \Cref{fig:strip}B). As analyzed in \Cref{sec:polykernel}, this block-encoding only has a circuit depth of $O(\operatorname{polylog}\frac{N}{\varepsilon})$ and a normalization factor of $\alpha = \Theta(\|\mathbf{K}\|)$, as the operator norm can be lowerbounded using the same technique as in \Cref{remark:opt} by noticing that $\frac{1}{N (2\sin(\pi |i-j|/N))^p} \geq \frac{1}{(2\pi |i-j|)^p}$ for any $|i-j|\leq N/2$.


 Finally, it is straightforward to use \Cref{lem:linear_comb} (linear combination of block-encoded matrices) to obtain an optimal block-encoding of $I+\mathbf{K}$. Then, we directly apply \Cref{lem:inv} to solve the linear system in \autoref{eq:discretized}. Since the block-encoding is optimal, the query complexity is $O(\kappa \log(1/\varepsilon))$, for a total runtime of $O(\kappa \operatorname{polylog}(N/\varepsilon))$ when including  the circuit complexity of the block-encoding (\Cref{lem:main}).
 
 \begin{prop}[Solving discretized integral equations] Consider the integral equation on a ring as described in \autoref{eq:ie}. We first cast the equation into the $N\times N$ linear system in \autoref{eq:discretized} using the collocation method with $N$ discretized points. Then, we can obtain its numerical solution as a quantum state in time $O(\kappa \operatorname{polylog}(N/\varepsilon))$, where $\kappa$ is the condition number of the matrix $I+\mathbf{K}$ in \autoref{eq:discretized} and $\varepsilon$ is the error bound of the quantum operation.
 \label{prop:ie}
 \end{prop}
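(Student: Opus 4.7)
The plan is to assemble the claim directly from the machinery already built in the paper, namely the hierarchical block-encoding of Lemma~\ref{lem:main} applied to a cyclic variant of the splitting, followed by the adiabatic inversion primitive of Lemma~\ref{lem:inv}. First, I would observe that the kernel matrix $\mathbf{K}$ whose entries are given by \autoref{eq:col-kernel} is, up to the diagonal correction and a rescaling by a constant, a circulant version of the polynomially decaying kernel $|i-j|^{-p}$ on the ring $\mathbb{Z}/N\mathbb{Z}$. Concretely, I would use the two-sided comparison $\frac{1}{(2\pi|i-j|)^p}\leq \frac{1}{N(2\sin(\pi|i-j|/N))^p}\leq \frac{1}{(\pi|i-j|)^p}$ (valid for $|i-j|\leq N/2$, with the convention that distance is taken cyclically) to reduce the analysis to the polynomially decaying case already treated in \Cref{sec:polykernel}.

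Next, I would build the cyclic hierarchical splitting of $\mathbf{K}$, as suggested by Figure~\ref{fig:strip}B. The splitting differs from that of \Cref{sec:hmatrices} only in that admissibility is measured with respect to distance on the circle; the two structural properties that Lemma~\ref{lem:main} relies on, namely that there are $O(\log N)$ levels and that each level is block-sparse with $O(1)$ non-zero admissible blocks per block-row and block-column, are preserved. Applying Lemma~\ref{lem:densenaive} to each admissible block with the correct per-block normalization $\alpha_\ell=2^{(L-\ell)(1-p)}$ (or its analog coming from the $\sin$-based bound), then Lemma~\ref{lem:block-sparse} to combine blocks at each level, and finally Lemma~\ref{lem:linear_comb} to sum over levels and add the tridiagonal ``adjacent'' part plus the diagonal correction, produces an $(\alpha,O(\operatorname{polylog}\!\tfrac{N}{\varepsilon}),\varepsilon)$-block-encoding of $\mathbf{K}$ with $\alpha=\Theta(\|\mathbf{K}\|)$. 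Optimality of $\alpha$ follows from the same $\mathbf{K}\ket{\mathbf{1}}$ lower bound used in \Cref{remark:opt}, now applied via the lower bound $\mathbf{K}_{ij}\geq C/|i-j|^p$ coming from the $\sin$-comparison.

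Having an optimal block-encoding of $\mathbf{K}$ in hand, I would use Lemma~\ref{lem:linear_comb} once more to obtain an $(\alpha',O(\operatorname{polylog}\!\tfrac{N}{\varepsilon}),\varepsilon)$-block-encoding of $I+\mathbf{K}$ with $\alpha'=\Theta(\|I+\mathbf{K}\|)$; this is straightforward since $I$ is trivially block-encoded with normalization factor $1$. Plugging this block-encoding into Lemma~\ref{lem:inv}, together with a unitary $U_{\mathbf{g}}$ preparing $\ket{\mathbf{g}}$ (assumed available), yields the state $\ket{(I+\mathbf{K})^{-1}\mathbf{g}}$ to accuracy $\varepsilon$ in $O(\tfrac{\alpha'}{\|I+\mathbf{K}\|}\kappa\log(1/\varepsilon))=O(\kappa\log(1/\varepsilon))$ queries to the block-encoding. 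Multiplying by the $O(\operatorname{polylog}(N/\varepsilon))$ circuit depth of one query yields the total runtime $O(\kappa\operatorname{polylog}(N/\varepsilon))$.

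The main obstacle, and the only part that is not an immediate citation, is verifying that Lemma~\ref{lem:main} genuinely transfers to the cyclic setting with the modified admissibility relation and the modified kernel $\tfrac{1}{N\sin^p(\pi|i-j|/N)}$: one must check that admissible blocks still sit in an $O(\log N)$-level tree, that per-level block-sparsity is still $O(1)$, and that the per-level maximum entry still decays like $2^{-(L-\ell)p}$ up to constants so that the geometric-series bound on $\alpha$ in \autoref{eq:alphasum} continues to hold. Once this structural check is done, the rest of the argument is mechanical composition of the block-encoding lemmas and Lemma~\ref{lem:inv}. I would not attempt to analyze $\kappa$ itself; Proposition~\ref{prop:ie} explicitly parameterizes the runtime by $\kappa$, which is a property of the discretized integral operator inherited from the classical collocation-method literature cited immediately before the proposition.
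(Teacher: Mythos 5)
Your proposal matches the paper's own argument essentially step for step: both use the two-sided $\sin$-vs-$|i-j|^{-p}$ comparison to reduce to the polynomially decaying setting of \Cref{sec:polykernel}, apply the cyclic hierarchical splitting with the block-encoding pipeline (\Cref{lem:densenaive}, \Cref{lem:block-sparse}, \Cref{lem:linear_comb}) to get an $\alpha=\Theta(\|\mathbf{K}\|)$ block-encoding of $\mathbf{K}$, add the identity via \Cref{lem:linear_comb}, and invoke \Cref{lem:inv} to obtain $O(\kappa\log(1/\varepsilon))$ queries at $O(\operatorname{polylog}(N/\varepsilon))$ depth each. Your explicit flagging of the structural checks needed for the cyclic splitting (logarithmic depth, $O(1)$ block-sparsity, per-level entry decay) is a slightly more careful rendering of what the paper asserts in passing, but it is the same proof.
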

 
 We address two practical aspects when applying \Cref{prop:ie}. First, we must prepare $\mathbf{g}$ as a quantum state. In \Cref{app:stateprep}, we present an efficient procedure for preparing $\ket{\mathbf{g}}$ in the case where $\mathbf{g}$ is sampled from a smooth function. At a high level, our procedure prepares the quantum state in the Fourier regime, which is approximately sparse due to the smoothness of $g$; then we apply the quantum Fourier transform to obtain the desired real-regime state. To prove \Cref{prop:generalized}, we also present another state preparation procedure using ideas from $\mathcal{H}$-matrix splitting in \Cref{app:generalized} which could be of independent interest. Second, the query complexity of the block-encoding depends on the condition number $\kappa$ of the matrix $I+\mathbf{K}$, which can be bounded as follows
\begin{equation}
    \kappa \leq \frac{1+\|\mathbf{K}\|}{1-\|\mathbf{K}\|}.
    \label{eq:boundkappa}
\end{equation}
Note that here, $\|\mathbf{K}\|$ scales with $\lambda$. Since we require $\lambda \ll 1$ for the collocation method to work, the operator norm $\|\mathbf{K}\|$, and thus $\kappa$, is bounded. Therefore, the runtime of \Cref{prop:ie} is $O(\operatorname{polylog}\frac{N}{\varepsilon})$. 




\subsection{Notes on sparsification approach}\label{sec:sparsify}
Since the entries of kernel matrices often decay along a row or column, one may assume that they can be approximated by only considering a sparse set of matrix entries and applying the naive block-encoding method for sparse matrices (\Cref{lem:sparse}). Here, we analyze this approach showing that it is only favorable when kernel entries decay exponentially and not polynomially. In fact, the runtime of this method is exponentially worse in the  target error bound $\varepsilon$ compared to that of our approach. Consider the polynomially decaying kernel in \autoref{eq:polykernel}. One could attempt to approximate this kernel matrix by a band matrix $\mathbf{K}_b$ of sparsity $d$. Defining the error to be $\|\mathbf{K}-\mathbf{K}_b\|$, it can be seen that the error is of magnitude $\int_{d}^{N} x^{-p} dx$. Since this integral diverges when $p\leq 1$, one can only sparsify the matrix when $p>1$. In this case, to achieve an error bound of $\varepsilon_{s}$, the sparsity $d$ needs to be $\Omega(\varepsilon_s^{\frac{1}{1-p}})$. Then, one can apply \Cref{lem:sparse} (block-encoding of sparse matrices) to directly block-encode the sparsified matrix with a normalization factor of $\Theta(\varepsilon_{s}^{\frac{1}{1-p}})$ using $O(\log N+ \operatorname{polylog}(\frac{1}{\varepsilon}))$ extra gates and $O(1)$ oracle queries. 
This results in an exponentially worse runtime in terms of the error bound compared to our method, which has a normalization factor of $O(1)$. 

For the exponential kernels, however, the sparsification method requires keeping $\Theta(\log(1/\varepsilon_s))$ or more entries per row or column. With \Cref{lem:sparse}, the sparsified block-encoding has a normalization factor of $\Theta(\log(1/\varepsilon_s))$ using the same gate and query complexities as above. Thus, exponential kernels may admit a sparsification approach with an extra factor of $\Theta(\log(1/\varepsilon))$ in the runtime.

\subsection{Notes on condition number}
The condition number $\kappa$ and its relation to the dimension $N$ of the matrix play a central role in the runtime of our algorithms.  Numerical experiments shown in \Cref{app:conditionnumber} indicate that for purely polynomially decaying kernels without any regularization and ignoring self-interacting terms (diagonal entries set to zero), the condition number grows polynomially with $N$, \textit{i.e.,} the minimum singular value decays with $N$. In practice, however, the input vector often lives in the well-conditioned subspace where these smaller singular values can be ignored as observed in the gravitation example earlier. For force calculation problems, this is the case when particles have comparable masses/charges. For numerical integration or integral equations, this is equivalently the case when the input functions change slowly.

We note that \Cref{lem:inv} for matrix inversion uses the discrete adiabatic theorem, whose runtime depends directly on the condition number $\kappa$ of the matrix operation. Hence, in the case where the matrix operation $A$ is ill-conditioned but the input vector lives in the well-conditioned subspace, we can use techniques that allow for filtering out small eigenvalues before the inversion step such as the QSVT \cite{gilyen2019quantum}. The query complexity of the block-encoding in matrix inversion in this case is $O(\frac{ \log (1/\zeta, 1/\varepsilon) }{\zeta \|A^{-1} \ket{b}\| })$, where $\zeta$ is a chosen threshold bounding the singular values in the well-conditioned subspace.

Fortunately, for applications in numerical analysis or machine learning, the kernel matrix is often regularized by an added matrix, \textit{e.g.,} the addition of an identity matrix as in Fredholm integral equations of the second kind or the noise matrices $\sigma_n^{2}I$ in Gaussian processes and kernel ridge regression \cite{atkinson2009numerical,Rasmussen2003gp}. In the integral equation setting, this regularization is equivalent to scaling the magnitude of self-interacting terms (diagonal entries).
Furthermore, if the input vector is potentially in the ill-conditioned subspace, preconditioning methods in \cite{precondition} may be used to improve the condition number. 

\section{Conclusion}

Our results show that factoring certain kernel matrices into hierarchical or $\mathcal{H}$-matrices leads to efficient algorithms for implementing such kernel matrices on a quantum computer, thus expanding the scope of application of quantum computers to a class of matrices which are neither low-rank nor sparse. In fact, our approach to implementing such hierarchical matrices, in a sense, blends prior quantum methods for implementing unitary block encodings by hierarchically splitting a matrix into a sparse matrix component close to the diagonal and progressively larger blocks farther away from the diagonals. Looking forward, it is an interesting question whether any extensions to the $\mathcal{H}$-matrix framework can lead to further improvements in the quantum setting. Classically, the most significant extension of the $\mathcal{H}$-matrix framework is perhaps the development of $\mathcal{H}^2$-matrices, which incorporates a further level of hierarchical nested bases to provide a logarithmic speedup in the matrix dimension for performing matrix operations \cite{borm2003,lin2011fast}. Furthermore, recent classical work has parameterized $\mathcal{H}$-matrices to integrate them into neural networks and design learning algorithms with inductive biases that match the form of the hierarchical splitting \cite{fan-multiscale-net,fan2019multiscale}. Future work can study whether such parameterizations can be performed as well in the quantum setting via variational quantum algorithms or other quantum analogues to classical neural networks \cite{cerezo2021variational}.

\acknowledgements{We thank the anonymous reviewers for helpful feedback and corrections on this paper. This project was funded by AFOSR, ARO, and DARPA. BTK acknowledges funding from the MIT Energy Initiative fellowship.}

\printbibliography

\appendix

\section{Prior works on block-encoding and quantum linear systems algorithms}\label{app:compare}
In this section, we provide a brief overview and discuss the runtimes of previous works in \Cref{tab:compare}. As a reminder, the quantum linear system problem (QLSP) seeks an approximate solution $\ket{x'}$ to
\begin{equation}
    A \Vec{x} = \Vec{b}
\end{equation}
such that
\begin{equation}
    \left\|\ket{x'}-\frac{\Vec{x}}{\|\Vec{x}\|}\right\| \leq \varepsilon.
\end{equation}

Let $N$ be the dimension; $d$ be the sparsity; $\kappa$ be the condition number; and $k$ be the rank of the matrix $A$, which is assumed to be square, normalized ($\|A\|=1$), and Hermitian  \footnote{If not, one can consider the equivalent linear system $\begin{pmatrix}0&A\\A^{\dagger}&0 \end{pmatrix}\Vec{y}=\begin{pmatrix}\Vec{b}\\0 \end{pmatrix}$, whose solution is $\Vec{y}=\begin{pmatrix}0\\\Vec{x} \end{pmatrix}$.}. The original HHL algorithm \cite{hhl} runs in time $O(d^2 \kappa^2 \operatorname{polylog} (N)/\varepsilon)$ by performing a Hamiltonian simulation of $A$ (i.e. $e^{-iAt}$) and quantum phase estimation (QPE) subroutines. Here, the Hamiltonian simulation \cite{Berry2007} and QPE steps together contribute a factor of $O(d^2 \kappa\operatorname{polylog}(N)/\varepsilon)$, while the post-selection step contributes the remaining factor of $O(\kappa)$.

Later, \cite{childshhl} noticed that the QPE step can be circumvented by utilizing the fact that the inverse function $\frac{1}{x}$ can be written, to the desired precision $\varepsilon/\kappa$, as a linear combination of $O(\kappa \operatorname{polylog}(\kappa/\varepsilon))$ Chebyshev polynomials. Combining this expansion with the variable-time amplitude amplification technique in \cite{ambainis2010variable}, they achieved a runtime of $O(d \kappa \operatorname{polylog} (N \kappa /\varepsilon))$. This polynomial expansion approach is generalized in the block-encoding framework by \cite{gilyen2019quantum, power-block} to perform almost any polynomial matrix transformation with the same complexity. All these works use a naive block-encoding in the oracle-access model (Lemma 48 in \cite{gilyen2019quantum}, same as \Cref{lem:sparse} or \Cref{lem:densenaive}).

In another approach, \cite{wossnig2017} used a QRAM data structure to implement a unitary $W$ whose eigenvalues are related to the singular values of $A$ (this algorithm does not require $A$ to be square or Hermitian) via its Frobenius norm $\|A\|_F$. Thus, performing QPE on $W$ yields a quantum singular value estimation algorithm for $A$. Matrix multiplication and inversion can then be implemented immediately with a rotation operator controlled on the register storing the singular values, just like done in HHL. The runtime of both operations, including the post-selection step, is $O(\kappa^2 \|A\|_F \operatorname{polylog}(N)/\varepsilon).$ For a full-rank matrix whose singular values are $O(1)$, we have $\|A\|_F=\sqrt{N}$. This QRAM model, however, is strictly stronger than the oracle-access model as it requires the entries be loaded and pre-processed in the data structure.

Recent adiabatic algorithms for QLSP \cite{adiabaticQLSP, An_2022, Lin_2020} also enjoy success and culminate in \cite{costa2021optimal}, which achieved an optimal query complexity of $O(\kappa \log (1/\varepsilon))$ by constructing a sophisticated schedule function for the adiabatic evolution. This query complexity is better than the algorithms in \cite{childshhl, gilyen2019quantum, wossnig2017} and has been shown to be optimal \cite{costa2021optimal}.
However, all of these assume that the matrix $A$ is perfectly block-encoded inside a unitary $U = \begin{pmatrix}\frac{A}{\|A\|}&\cdot \\ \cdot&\cdot \end{pmatrix}$ which is generally hard to construct without prior knowledge of the structure of $A$. The exact query complexity and total runtime in the general case is stated in \Cref{lem:inv} and governed by the ratio $\frac{A}{\alpha}$, where $\alpha$ is the normalization factor in the block-encoding.

The runtimes shown in \Cref{tab:compare} are obtained by combining the optimal adiabatic solver in \cite{costa2021optimal} (see \Cref{lem:inv}) with the corresponding block-encoding used in \cite{childshhl, gilyen2019quantum, power-block, kerenidis2016quantum, wossnig2017}. In particular, the naive oracle-access block-encoding in \cite{childshhl, gilyen2019quantum, power-block} yields a normalization factor of $\alpha = d=N$ for dense matrices. The QRAM-based block-encoding in \cite{kerenidis2016quantum, wossnig2017} yields $\alpha=\|A\|_F=\sqrt{N}$ for full-rank matrices. The runtime for matrix multiplication is simply obtained by computing the probability of success in the post-selection step after applying the block-encoding, as described in \Cref{sec:blockenc} and \Cref{lem:forward}.

Quantum-inspired algorithms are a class of classical algorithms equipped with sample-query access, a classical analogue to QRAM. The currently fastest quantum-inspired algorithms for QLSP \cite{gilyen2022improved, shao2021faster} are based on stochastic gradient descent (specifically randomized Kaczmarz method) for a runtime of $\widetilde{O}(\|A\|_F^6 \kappa^8 /\varepsilon^2)$ (where the notation $\widetilde{O}$ suppresses the log factors). Assuming $\kappa=O(1)$ and the matrix has rank $k$, this runtime is $\widetilde{O}(k^6/\varepsilon^2)$, which is probitive when the matrix is full-rank.

We note that all the algorithms above are ``general-purpose'' since they do not assume any inherent structures in the matrix $A$. Whereas, our work builds on these past works and focuses on optimizing the performance of QLSP algorithms for dense and full-rank kernel matrices and potentially other classes of matrices that suit the $\mathcal{H}$-matrix framework.


\section{Details of block-encodings}\label{sec:deferred-proofs}
We list some previously established results in the block-encoding framework and prove the helper lemmas presented in the main text. Most of the following results are drawn from Section 4 of \cite{gilyen2019quantum}. As a reminder, a block-encoding is defined as follows.


\begin{definition}[Block-encoding \cite{gilyen2019quantum}] Suppose that $A$ is an $s$-qubit operator, $\alpha, \varepsilon>0$, then we say that the $(s+a)$-qubit unitary $U$ is an $(\alpha, a, \varepsilon)$-block-encoding of $A$, if
\begin{equation*}
\left\|A-\alpha(\bra{0}^{\otimes a} \otimes I_s) U (\ket{0}^{\otimes a} \otimes I_s )\right\| \leq \varepsilon,
\end{equation*}
where $\|\cdot\|$ denotes the operator norm of a matrix.
\end{definition}

\subsection{Linear combinations and multiplications of block-encodings}
Combinations of block-encoded matrices can be efficiently combined linearly or multiplied with each other. First we define state-preparation-pairs used to prepare the coefficients in a linear combination.
\begin{definition}[State preparation pair \cite{gilyen2019quantum}] Let $y \in \mathbb{C}^{m}$ and $\|y\|_1\leq \beta$, the pair of unitaries $(P_L,P_R)$ is called a $(\beta,n,\varepsilon)$-state-preparation-pair of $y$ if $P_{L}\ket{0}^{\otimes n}=\sum_{j=0}^{2^{n}-1} c_{j}\ket{j}$ and $P_{R}\ket{0}^{\otimes n}=\sum_{j=1}^{2^{n}-1} d_{j}\ket{j}$ such that $\sum_{j=0}^{m-1}\left|\beta c_{j}^{*} d_{j} -y_{j}\right| \leq \varepsilon_1$ and $c_j^*d_j=0$ for any $j \in \{m,\hdots, 2^n-1\}$.
\end{definition}

We can then implement a linear combination of block-encoded matrices using an above state preparation pair.
\begin{lemma}[Linear combination of block-encoded matrices, adapted from \cite{gilyen2019quantum}] Let  $A = \sum_{j=0}^{m-1} y_j A_j$ be an $s$-qubit operator where $\|y\|_1 \leq \beta$. Suppose $(P_{L}, P_{R})$ is a $(\beta, n,\varepsilon_1)$-state-preparation-pair for $y$, $W=\sum_{j=0}^{m-1}\ket{j}\bra{j} \otimes U_{j}+\left(\left(I-\sum_{j=0}^{m-1}\ket{j}\bra{j}\right) \otimes I_{a} \otimes I_{s}\right)$ is an $(n+a+s)$-qubit unitary such that for all
$j \in [m]$ we have that $U_{j}$ is an $\left(\alpha, a, \varepsilon_{2}\right)$-block-encoding of $A_{j}$. Then the unitary $\widetilde{W}=(P_{L}^{\dagger} \otimes I_{a} \otimes I_{s}) W (P_{R} \otimes I_{a} \otimes I_{s})$ is an $\left(\alpha \beta, a+n, \alpha \varepsilon_{1}+ \beta \varepsilon_{2}\right)$-block-encoding of $A$.
\label{lem:linear_comb}
\end{lemma}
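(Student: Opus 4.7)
The plan is to directly verify the block-encoding definition by computing the top-left block of $\widetilde{W}$ and comparing it to $A/(\alpha\beta)$. First, I would compute
\begin{equation*}
(\bra{0^n}\otimes\bra{0^a}\otimes I_s)\,\widetilde{W}\,(\ket{0^n}\otimes\ket{0^a}\otimes I_s)
=(\bra{0^n}P_L^\dagger\otimes\bra{0^a}\otimes I_s)\,W\,(P_R\ket{0^n}\otimes\ket{0^a}\otimes I_s).
\end{equation*}
Substituting $P_L\ket{0^n}=\sum_j c_j\ket{j}$, $P_R\ket{0^n}=\sum_j d_j\ket{j}$ and the block-diagonal structure of $W$, the cross terms $\braket{j|k}$ kill off all off-diagonal contributions, and since $c_j^*d_j=0$ for $j\geq m$, the expression collapses to $\sum_{j=0}^{m-1}c_j^*d_j\,B_j$, where $B_j:=(\bra{0^a}\otimes I_s)U_j(\ket{0^a}\otimes I_s)$ is the block-encoded part of $U_j$.

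Next, I would scale by $\alpha\beta$ and compare to the target $A=\sum_j y_j A_j$. The natural split via triangle inequality is
\begin{equation*}
\Bigl\|A-\alpha\beta\sum_j c_j^*d_j\,B_j\Bigr\|
\leq \Bigl\|\sum_j(y_j-\beta c_j^*d_j)A_j\Bigr\|
+ \beta\Bigl\|\sum_j c_j^*d_j(A_j-\alpha B_j)\Bigr\|.
\end{equation*}
For the first term, using $\|A_j\|\leq\alpha$ (which follows from $\|B_j\|\leq 1$ since $B_j$ is a sub-block of a unitary, together with the block-encoding error) and the state-preparation-pair condition $\sum_j|y_j-\beta c_j^*d_j|\leq\varepsilon_1$, we get a bound of $\alpha\varepsilon_1$. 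For the second term, $\|A_j-\alpha B_j\|\leq\varepsilon_2$ by hypothesis, and Cauchy--Schwarz gives $\sum_j|c_j^*d_j|\leq\|P_L\ket{0^n}\|\cdot\|P_R\ket{0^n}\|=1$, yielding a bound of $\beta\varepsilon_2$. Adding these gives the claimed total error $\alpha\varepsilon_1+\beta\varepsilon_2$.

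The main obstacle, which is minor, is the bound $\|A_j\|\leq\alpha$: strictly speaking the block-encoding only guarantees $\|A_j\|\leq\alpha+\varepsilon_2$, so one must either absorb a subleading $\varepsilon_1\varepsilon_2$ term or simply remark that the weaker inequality $\|A_j\|\leq\alpha$ is the standard convention used when stating block-encoding error bounds (this is how Gilyén et al.\ handle it). Apart from this bookkeeping point, the proof is a routine unfolding of definitions followed by the two-part triangle inequality, and the decomposition into ``state-preparation error'' and ``sub-block-encoding error'' makes the origin of each $\varepsilon$ contribution transparent.
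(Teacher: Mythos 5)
Your computation of the top-left block and the overall two-term triangle-inequality strategy match the paper, but the \emph{split} you choose is different, and this difference is exactly what forces the bookkeeping caveat you flag at the end. The paper writes
\begin{equation*}
A-\alpha\beta\sum_j c_j^*d_j\,B_j \;=\; \Bigl(A-\alpha\sum_j y_j B_j\Bigr)\;-\;\alpha\sum_j(\beta c_j^*d_j-y_j)B_j,
\end{equation*}
so the state-preparation error term carries $\|B_j\|$ (which is exactly $\leq 1$ because $B_j$ is a corner block of a unitary) and the block-encoding error term carries $|y_j|$ (which sums to $\leq\beta$). This yields $\alpha\varepsilon_1+\beta\varepsilon_2$ with no approximation at all. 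Your split attaches the state-preparation error to $\|A_j\|$ instead of $\|B_j\|$, and $\|A_j\|$ is only guaranteed to be $\leq\alpha+\varepsilon_2$, not $\leq\alpha$; hence the spurious $\varepsilon_1\varepsilon_2$ cross term and the need to invoke a convention. Your Cauchy--Schwarz bound $\sum_j|c_j^*d_j|\leq 1$ for the second term is fine and is an acceptable alternative to the paper's use of $\sum_j|y_j|\leq\beta$. So the argument is essentially correct, but if you swap the roles so that $\|B_j\|\leq 1$ absorbs the $\varepsilon_1$ term, the extra term disappears and the proof becomes tight with no handwaving.
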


\begin{proof} We have:
\begin{equation*}
\begin{aligned}
    &\left\| A-\alpha \beta \left(\bra{0}^{n} \bra{0}^{a} \otimes I_s \right) \widetilde{W} \left(\ket{0}^{n} \ket{0}^{a} \otimes I_s \right) \right\| \\
    = &\left\| A-\alpha \sum_{j=0}^{m-1} \beta\left(c_{j}^{*} d_{j}\right) \left( \bra{0}^{a} \otimes I_s \right) U_{j} \left( \ket{0}^{a} \otimes I_s \right) \right\|  \\
    \leq &\alpha \varepsilon_1 + \left\| A-\alpha \sum_{j=0}^{m-1} y_j \left( \bra{0}^{a} \otimes I_s \right) U_{j} \left( \ket{0}^{a} \otimes I_s \right) \right\| \\
    \leq &\alpha \varepsilon_1 +\sum_{j=0}^{m-1} |y_j| \left\|A_j-\alpha \left( \bra{0}^{a} \otimes I_s \right) U_{j} \left( \ket{0}^{a} \otimes I_s \right) \right\|\\
    \leq &\alpha \varepsilon_1 + \beta \varepsilon_2. \qedhere
\end{aligned} 
\end{equation*} 
\end{proof}

We note that in the original work \cite{gilyen2019quantum}, Lemma 52 states that the error of the block-encoding is $\alpha \varepsilon_1+\alpha \beta \varepsilon_2$, which is a typo. \Cref{lem:linear_comb} also applies to cases where $P_L$ and $P_R$ are non-unitary. As long as block-encodings of $P_L$, $P_R$ are provided, we can still implement the unitary $\widetilde{W}$ in \Cref{lem:linear_comb} by using the following lemma, which allows one to multiply block-encoded matrices.

\begin{lemma}[Product of block-encoded matrices \cite{gilyen2019quantum}] If $U$ is an $(\alpha, a, \delta)$-block-encoding of an $s$-qubit operator $A$, and $V$ is an $(\beta, b, \varepsilon)$-block-encoding of an $s$-qubit operator $B$ then $\left(I_{b} \otimes U\right)\left(I_{a} \otimes V\right)$ is an $(\alpha \beta, a+b, \alpha \varepsilon+\beta \delta)$-block-encoding of $A B$. Note that here $I_a$ ($I_b$) acts on the ancilla qubits of $U$ ($V$).
\label{lem:multiply-encode}
\end{lemma}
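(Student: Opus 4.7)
The plan is to show, by direct calculation, that the top-left $(s\text{-qubit})\times(s\text{-qubit})$ block of $W := (I_b\otimes U)(I_a\otimes V)$ equals $\tilde A\tilde B/(\alpha\beta)$, where $\tilde A := \alpha(\bra{0^a}\otimes I_s)U(\ket{0^a}\otimes I_s)$ and $\tilde B := \beta(\bra{0^b}\otimes I_s)V(\ket{0^b}\otimes I_s)$ are the subblocks realized by the hypothesized block-encodings. Once that is in hand, a triangle inequality will give the error bound $\alpha\varepsilon + \beta\delta$.

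For the block computation, I fix the register ordering as ($a$-ancillas of $U$, $b$-ancillas of $V$, $s$-qubit system), so that $I_a\otimes V$ acts trivially on the first factor and as $V$ on the latter two, while $I_b\otimes U$ acts as $U$ on the first and last factors and trivially on the middle. Applying $I_a\otimes V$ to $\ket{0^a}\ket{0^b}\ket\psi$ yields $\ket{0^a}\otimes\bigl[\ket{0^b}(\tilde B/\beta)\ket\psi + \ket{G_V}\bigr]$, where $\ket{G_V}$ has no $\ket{0^b}$ component on the $V$-ancillas by definition of $\tilde B$. Now apply $I_b\otimes U$ and project onto $\bra{0^a}\bra{0^b}\otimes I_s$: the $\ket{G_V}$ piece is annihilated because $I_b\otimes U$ preserves the $b$-register labels appearing in its $\ket{\bar b}$-decomposition, so the outer $\bra{0^b}$ kills it; the surviving piece is $\bra{0^a}U\ket{0^a}\,(\tilde B/\beta)\ket\psi = (\tilde A/\alpha)(\tilde B/\beta)\ket\psi$. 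Hence $(\bra{0^a}\bra{0^b}\otimes I_s)\,W\,(\ket{0^a}\ket{0^b}\otimes I_s) = \tilde A\tilde B/(\alpha\beta)$, so $W$ is an $(\alpha\beta, a+b, \eta)$-block-encoding of $AB$ with $\eta := \|AB - \tilde A\tilde B\|$.

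To bound $\eta$, I split
\[
AB - \tilde A\tilde B \;=\; (A-\tilde A)\,\tilde B \;+\; A\,(B-\tilde B),
\]
and apply the triangle inequality together with $\|A-\tilde A\|\le\delta$, $\|B-\tilde B\|\le\varepsilon$, $\|\tilde B\|\le\beta$ (since $\tilde B/\beta$ is a subblock of the unitary $V$ and hence a contraction), and the standard block-encoding convention $\|A\|\le\alpha$. This yields $\eta\le\delta\beta + \alpha\varepsilon$, matching the claimed error. The only non-routine piece is the register bookkeeping in the first step, where one must carefully track that the ``garbage'' produced by $V$ lives outside the $\ket{0^b}$ sector and is therefore killed by the final projection regardless of what $I_b\otimes U$ does to its other factors; the norm estimates are then immediate, with no amplitude amplification or auxiliary constructions required.
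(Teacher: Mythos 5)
Your proof is correct and takes essentially the same approach as the paper's: the paper also establishes that the projected block of $(I_b\otimes U)(I_a\otimes V)$ is $\widetilde A\widetilde B/(\alpha\beta)$ and then applies a telescoping split plus the triangle inequality. The only cosmetic difference is the choice of telescoping term, the paper writes $AB - \widetilde A\widetilde B = (A-\widetilde A)B + \widetilde A(B-\widetilde B)$ and bounds it by $\delta\,\|B\| + \|\widetilde A\|\,\varepsilon$, while you write $(A-\widetilde A)\widetilde B + A(B-\widetilde B)$ and bound by $\delta\,\|\widetilde B\| + \|A\|\,\varepsilon$; both variants give $\alpha\varepsilon+\beta\delta$ under the same (standard but not strictly guaranteed by the lemma hypothesis) convention that $\|A\|\le\alpha$ and $\|B\|\le\beta$.
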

\begin{proof}
\begin{equation*}
\begin{aligned}
    &\|AB - \alpha \beta (\bra{0^{a+b}}\otimes I_s)(I_b \otimes U)(I_a \otimes V) (\ket{0^{a+b}} \otimes I_s) \| \\
    = &\|AB - \underbrace{\alpha  (\bra{0^{a}}\otimes I_{s}) U (\ket{0^{a}} \otimes I_{s})}_{\widetilde{A}} \underbrace{ \beta  (\bra{0^{b}}\otimes I_{s}) V  (\ket{0^{b}} \otimes I_{s})}_{\widetilde{B}} \| \\
    =& \|AB - \widetilde{A}B + \widetilde{A} B - \widetilde{A}\widetilde{B} \| \\
    =& \|(A-\widetilde{A})B + \widetilde{A}(B-\widetilde{B})\| \\
    \leq & \|A-\widetilde{A}\|\beta +  \alpha \|B-\widetilde{B}\| \\
    \leq & \alpha \varepsilon + \beta \delta. \qedhere
\end{aligned}
\end{equation*}
\end{proof}

\subsection{Block-encoding of oracle-access matrices} 
The following lemma can be used to block-encode any matrix whose entries are provided by an oracle.

\begin{lemma}[Block-encoding of oracle-access matrices, adapted from Lemma 48 of \cite{gilyen2019quantum}] 
Let $A \in \mathbb{C}^{2^s \times 2^s}$ be a $d_r$-row-sparse and $d_c$-column-sparse matrix. Suppose that we have the access to the following $2(s+1)$-qubit oracles
\begin{equation*}
\begin{aligned}
    &\mathcal{O}_r: \ket{i}\ket{k} \rightarrow \ket{i} \ket{r_{ik}} & 
    0\leq  i < 2^s, 0 \leq k < d_r,\\
    & \mathcal{O}_c: \ket{l}\ket{j} \rightarrow \ket{c_{lj}} \ket{j} & 0 \leq j < 2^s, 0 \leq l <d_c,
\end{aligned}
\end{equation*}
where $r_{ik}$ is the index for the $k$-th non-zero entry of the $i$-th row of A, or if there are less than $k$ non-zero
entries, then $r_{ik}=k+2^s$, and similarly $c_{lj}$ is the index for the $l$-th non-zero entry of the $j$-th column of $A$, or if there are less than $l$ non-zero entries, then $c_{lj}=l+2^s$. Additionally, let $\hat{a} \geq \max_{i,j} |a_{ij}|$ and suppose the following oracle is provided
\begin{equation*}
    \mathcal{O}_{A}: \ket{i}\ket{j}\ket{0}^{\otimes b} \rightarrow \ket{i}\ket{j}\ket{\Tilde{a}_{ij}}, \hspace{0.5cm}  0 \leq i,j < 2^s,
\end{equation*}
where $\Tilde{a}_{ij}$ is the (exact) $b$-qubit description of $a_{ij}/\hat{a}$, and if $i$ or $j$ is out of range then $\Tilde{a}_{ij}=0$. Then one can implement an $(\hat{a}\sqrt{d_r d_c},s+3, \varepsilon)$-block-encoding of $A$ with a single use of $\mathcal{O}_r, \mathcal{O}_c$, two uses of $\mathcal{O}_A$ and additionally $O(s+\operatorname{polylog}(\frac{\hat{a} \sqrt{d_r d_c}}{\varepsilon}))$ one- and two-qubit gates while using $O(b,\operatorname{polylog}(\frac{\hat{a} \sqrt{d_r d_c}}{\varepsilon}))$ ancilla qubits (which are discarded before the post-selection step).
\label{lem:sparse}
\end{lemma}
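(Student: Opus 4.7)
}
The plan is to mimic the construction of Lemma 48 in \cite{gilyen2019quantum}, adapted to the row/column-sparsity conventions stated here. I would organize the block-encoding $U$ as a product of three pieces acting on an $s$-qubit ``row'' register, an $s$-qubit ``column'' register, a $1$-qubit amplitude register, and two $O(s)$-qubit sparsity-index registers (giving $s+3$ ancillas total after careful counting, exactly as in the cited lemma), plus $O(b,\operatorname{polylog}(\hat a\sqrt{d_rd_c}/\varepsilon))$ extra qubits that are uncomputed before post-selection. The three pieces are (i) a diffusion step that prepares a uniform superposition over the sparsity indices and then applies $\mathcal{O}_r,\mathcal{O}_c$ to turn these indices into the actual non-zero row/column positions, (ii) an amplitude-loading step $V$ built from $\mathcal{O}_A$, and (iii) a $\swap$ between the row and column registers that aligns the produced index with the ``left'' side of the block-encoding.

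First I would construct $V$. Using one query to $\mathcal{O}_A$, a controlled rotation $R_y$ of a fresh amplitude qubit with angle determined by the $b$-bit string $\tilde a_{ij}$, and one uncomputation query to $\mathcal{O}_A$, one obtains a unitary satisfying
\begin{equation*}
V\ket{i}\ket{j}\ket{0}\ket{0^b}=\ket{i}\ket{j}\Bigl(\sqrt{\tfrac{|a_{ij}|}{\hat a}}\,e^{i\arg(a_{ij})}\ket{0}+\sqrt{1-\tfrac{|a_{ij}|}{\hat a}}\,\ket{1}\Bigr)\ket{0^b},
\end{equation*}
up to an error bounded by the precision with which the rotation angle is synthesized; by choosing $O(\operatorname{polylog}(\hat a\sqrt{d_rd_c}/\varepsilon))$ bits of precision (using standard Solovay--Kitaev or direct synthesis) the contribution to the block-encoding error is at most $\varepsilon$. (Complex phases can be absorbed into $R_y$ followed by a diagonal phase gate also controlled on $\tilde a_{ij}$.) For the unsatisfiable out-of-range index case, $\tilde a_{ij}=0$ forces the amplitude to lie entirely in $\ket{1}$, cleanly excluding these terms.

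Next I would construct the diffusion step. Let $H_r,H_c$ denote Hadamard-type gates that produce the uniform superpositions $\tfrac{1}{\sqrt{d_r}}\sum_{k=0}^{d_r-1}\ket{k}$ and $\tfrac{1}{\sqrt{d_c}}\sum_{l=0}^{d_c-1}\ket{l}$ on the sparsity registers. Then $(\mathcal{O}_r(H_r\otimes I))\ket{0}\ket{i}=\tfrac{1}{\sqrt{d_r}}\sum_{k}\ket{i}\ket{r_{ik}}$ converts a row index $i$ into a uniform superposition over its non-zero column positions; an analogous formula holds for $\mathcal{O}_c$. Define
\begin{equation*}
U=(H_r^{\dagger}\otimes I)\,\mathcal{O}_r^{\dagger}\,\swap\,V\,\mathcal{O}_c\,(I\otimes H_c),
\end{equation*}
with $\swap$ exchanging the two $s$-qubit registers. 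A direct bra-ket computation shows
\begin{equation*}
\bigl(\bra{0^{a}}\otimes\bra{i}\bigr)U\bigl(\ket{0^{a}}\otimes\ket{j}\bigr)=\frac{a_{ij}}{\hat a\sqrt{d_rd_c}},
\end{equation*}
because the two uniform superpositions each contribute a factor $1/\sqrt{d_r}$ and $1/\sqrt{d_c}$, the oracles align the registers on a valid $(i,j)$ position, and $V$ deposits the amplitude $a_{ij}/\hat a$ in the post-selected subspace. Hence $U$ is an $(\hat a\sqrt{d_rd_c},s+3,\varepsilon)$-block-encoding of $A$, using exactly one call each to $\mathcal{O}_r,\mathcal{O}_c$, two calls to $\mathcal{O}_A$, and $O(s+\operatorname{polylog}(\hat a\sqrt{d_rd_c}/\varepsilon))$ additional elementary gates.

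The main obstacle is purely bookkeeping: tracking which qubits are ancillas of the block-encoding (the $s+3$ that must end in $\ket{0}$ for post-selection) versus which are the $O(b,\operatorname{polylog}(\cdot))$ scratch qubits used inside $V$ and the rotation synthesis (which must be cleanly uncomputed), and propagating the controlled-rotation precision through to the final operator-norm bound $\varepsilon$. With the error in $V$ bounded by $\varepsilon$ in operator norm, the triangle inequality immediately gives the claimed precision for $U$, completing the proof.
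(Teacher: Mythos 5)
Your plan is the same as the paper's: mimic Lemma~48 of \cite{gilyen2019quantum} using $\mathcal{O}_r$, $\mathcal{O}_c$, and a controlled-rotation unitary $V$ built from two $\mathcal{O}_A$ queries. The key error is in $V$. You have $V$ deposit amplitude $\sqrt{|a_{ij}|/\hat a}\,e^{i\arg(a_{ij})}$ on $\ket{0}$, but that square root is only correct in constructions that load the amplitude \emph{twice}, once inside the right-side state-preparation unitary and once inside the left-side one, so that the product of the two square roots recovers $a_{ij}/\hat a$. Your circuit $U=(H_r^{\dagger}\otimes I)\,\mathcal{O}_r^{\dagger}\,\swap{}\,V\,\mathcal{O}_c\,(I\otimes H_c)$ contains a single $V$ (consistent with your claim of two $\mathcal{O}_A$ calls), so the bra-ket computation actually produces $(\bra{0^{a}}\otimes\bra{i})\,U\,(\ket{0^{a}}\otimes\ket{j})=a_{ij}/\bigl(\sqrt{|a_{ij}|\hat a}\sqrt{d_r d_c}\bigr)$, not $a_{ij}/(\hat a\sqrt{d_r d_c})$ as you assert. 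The paper avoids this by having the single controlled rotation deposit the full, un-rooted amplitude in one pass, $\ket{0}\ket{\tilde a_{ij}}\mapsto\bigl(a_{ij}/\hat a\,\ket{0}+\sqrt{1-|a_{ij}/\hat a|^2}\,\ket{1}\bigr)\ket{\tilde a_{ij}}$; making this change is what lets a single $V$ with two $\mathcal{O}_A$ calls suffice.

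Secondarily, your register layout lists five registers (row, column, amplitude, and two sparsity-index registers), which over-counts: the paper's construction uses only two $(s+1)$-qubit registers plus one amplitude qubit, with $\mathcal{O}_r$ and $\mathcal{O}_c$ converting a sparsity index into an actual row/column index \emph{in place} inside one of the two main registers, and this is precisely why the ancilla count is exactly $s+3$. For the same reason, with the lemma's stated convention $\mathcal{O}_r:\ket{i}\ket{k}\to\ket{i}\ket{r_{ik}}$ (row index in the first register), the $\swap{}$ must be applied \emph{after} $\mathcal{O}_r^{\dagger}$, as in the paper's $V_L^{\dagger}=\swap{}\,(I\otimes G_{d_r}^{\dagger})\,\mathcal{O}_r^{\dagger}$, not between $V$ and $\mathcal{O}_r^{\dagger}$. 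In your ordering, $\mathcal{O}_r^{\dagger}$ receives the column index $j$ in the row slot and the row index $c_{lj}$ where $r_{jk}$ is expected, which does not uncompute correctly unless $A$ happens to be symmetric.
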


\begin{proof}
Let $G_d$ be the $(s+1)$-qubit unitary that performs: $\ket{0} \rightarrow \frac{1}{\sqrt{d}} \sum_{i=0}^{d-1} \ket{i}$, which can be implemented with $O(s)$ Hadamard gates (assume $d$ is a power of 2 and note that $d\leq 2^s$). Additionally, define the following $2(s+1)$-qubit operators: $V_R = \mathcal{O}_c (G_{d_c}\otimes I_{s+1})$ and $V_L = \mathcal{O}_r (I_{s+1} \otimes G_{d_r})\swap_{s+1}$, where $\swap_{s+1}$ swaps two $(s+1)$-qubit registers (which uses $s$ two-qubit swap gates). For any $0\leq i,j <2^s$, we have that
\begin{equation*}
\begin{aligned}
    &\bra{0^{s+1}} \bra{i} V_L^{\dagger} V_R \ket{0^{s+1}} \ket{j} \\ 
    =&\left(\frac{1}{\sqrt{d_r}}\sum_{k=0}^{d_r-1} \bra{i} \bra{r_{ik}} \right) \left(\frac{1}{\sqrt{d_c}} \sum_{l=0}^{d_c-1} \ket{c_{lj}} \ket{j}  \right) \\
    =& \frac{1}{\sqrt{d_r d_c}} \text{ if } a_{ij}\neq 0 \text{ and } 0 \text{ otherwise.} 
\end{aligned}
\end{equation*}

Observe that after applying $V_R$, instead of applying $V_L^{\dagger}$ right away, we can first append a $b$-qubit register and call $\mathcal{O}_A$ to query the matrix entry, then append the register $\ket{0}$ and perform the controlled rotation: $\ket{0}\ket{\Tilde{a}_{ij}} \rightarrow (a_{ij}/\hat{a} \ket{0} +\sqrt{1-|a_{ij}/\hat{a}|^2}\ket{1}) \ket{\Tilde{a}_{ij}} $, and finally call $\mathcal{O}_A$ again to uncompute the $b$-qubit ancilla register. Conditioning on the ancilla qubit being $\ket{0}$, we obtain the desired block-encoding. We only count $s+3$ qubits as ancilla qubits in the block-encoding notation because the others can be discarded immediately after the uncomputation step.

We analyze the circuit complexity of the controlled rotation step above. Observe that, assuming the $b$-qubit description $\ket{\Tilde{a}_{ij}}$ is exact, if the controlled rotation can be done with an accuracy of $O(\frac{\varepsilon}{\hat{a} \sqrt{d_r d_c}})$ then we achieve the overall accuracy of the block-encoding. This step implements circuits to compute elementary functions such as the square root and trigonometric functions with $O(\log (\frac{\hat{a} \sqrt{d_r d_c}}{\varepsilon}))$ additional qubits. Implementing these elementary functions on these additional qubits requires $O(b,\operatorname{polylog}(\frac{\hat{a} \sqrt{d_r d_c}}{\varepsilon}))$ elementary logic gates \cite{computational, barak}.
After computing the rotation angles, it takes $O(\log (\frac{\hat{a} \sqrt{d_r d_c}}{\varepsilon}))$ controlled quantum gates to perform the controlled rotation.
\end{proof}

We can see that the normalization factor of the above block-encoding depends on the sparsity of the matrix. Without loss of generality, assume $\hat{a}=1$. Then, the above lemma when applied to dense matrices produces a block-encoding with a normalization factor of $O(N)$. This is not optimal (see definition in \Cref{sec:blockenc}) if the operator norm of the matrix is significantly smaller than $O(N)$. Hence, we refer to it as the ``naive'' block-encoding. We provide some examples in which this lemma works optimally in \cref{app:naiveoptimal}.

\begin{remark}[``Naive'' block-encoding of dense oracle-access matrices (same as \Cref{lem:densenaive})]
In the dense case, the oracles $\mathcal{O}_r, \mathcal{O}_c$ are not needed; we can simply use $s$ Hadamard gates to query all matrix entries. It can be easily verified that in this case we can obtain an $(\hat{a}2^s,s+1, \varepsilon)$-block-encoding  with two uses of $\mathcal{O}_A$ and additionally $O(s+\operatorname{polylog}(\frac{\hat{a}2^s}{\varepsilon}))$ one- and two-qubit gates while using $O(b,\operatorname{polylog}(\frac{\hat{a}2^s}{\varepsilon}))$ ancilla qubits.
\label{remark:densenaive}
\end{remark}

If the matrix $A$ is rectangular, that is $A \in \mathbb{C}^{2^s \times 2^t}$, where we assume $s>t$ without loss of generality, we can block-encode $A$ by applying the above lemmas on the $2^{s}$ by $2^s$ matrix $\widetilde{A}$ defined as
\begin{equation}
    \widetilde{A}_{ij} = \begin{cases} A_{ij}& \text{if } i < 2^s \text{ and } j < 2^t  \\
    0 & \text{otherwise}
    \end{cases}.
\end{equation}
Then for an input state $\ket{\psi}$, we can obtain $A \ket{\psi}$ by applying $\widetilde{A}$ on the state $\ket{0^{s-t}}\otimes \ket{\psi}$.

\subsection{Block-encoding of low-rank matrices} \label{app:proofa3}
Recall \Cref{lem:lowrank-encode}, copied below.
\begin{lemma}[Block-encoding of low-rank operators with state preparation unitaries, inspired by Lemma 1 of \cite{quek2021fast}]
 Let $A = \sum_{i=0}^{p-1} \sigma_i \mathbf{u}_i \mathbf{v}_i^{\dagger} \in \mathbb{C}^{2^s \times 2^s}$. 
Let $r=\lceil \log p \rceil$ and  $\sum_{i=0}^{p-1} |\sigma_i| \leq \beta$. Suppose the following $(r+s)$-qubit unitaries are provided:
\begin{equation*}
    G_L: \ket{i}\ket{0^s} \rightarrow \ket{i}\ket{\mathbf{u}_i},
    G_R: \ket{i}\ket{0^s} \rightarrow \ket{i}\ket{\mathbf{v}_i},
\end{equation*}
where $0 \leq i <p$ and $\ket{\mathbf{u}_i}$ $(\ket{\mathbf{v}_i})$ is the quantum state whose amplitudes are the entries of $\mathbf{u}_i$ $(\mathbf{v}_i)$. Let $(P_L,P_R)$ be a $(\beta,r, \varepsilon)$-state-preparation-pair for the vector $[\sigma_0, \hdots, \sigma_{p-1}]$.
Then, one can construct a $(\beta, r+s, \varepsilon)$-block-encoding of $A$ with one use of each of $G_L, G_R^{\dagger}, P_L^{\dagger}, P_R$ and a $\swap{}$ gate on two $s$-qubit registers.
\end{lemma}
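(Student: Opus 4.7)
The plan is to construct $U$ as a composition of the four given unitaries interleaved with a $\operatorname{SWAP}$, in the spirit of LCU-style block-encoding of $A=\sum_j \sigma_j \mathbf{u}_j\mathbf{v}_j^\dagger$. I would use three registers: an $r$-qubit index register $A$, an $s$-qubit preparation register $B$, and the $s$-qubit input register $C$, where $A$ and $B$ together form the $(r+s)$-qubit ancilla. The candidate block-encoding I propose is
\[
U \;=\; (P_L^\dagger)_A \,(G_R^\dagger)_{AB} \,\operatorname{SWAP}_{BC} \,(G_L)_{AB} \,(P_R)_A,
\]
where each factor acts as identity on the registers not listed. This uses exactly one call of each of $P_R$, $G_L$, $G_R^\dagger$, $P_L^\dagger$ together with one SWAP on two $s$-qubit registers, matching the stated resource count.

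To verify this is a $(\beta, r+s, \varepsilon)$-block-encoding, I would track the state on input $\ket{0^r}_A\ket{0^s}_B\ket{\psi}_C$ step by step: (i) $(P_R)_A$ produces $\sum_j d_j\ket{j}_A\ket{0^s}_B\ket{\psi}_C$; (ii) $(G_L)_{AB}$ loads the left vectors to give $\sum_j d_j\ket{j}_A\ket{\mathbf{u}_j}_B\ket{\psi}_C$; (iii) $\operatorname{SWAP}_{BC}$ shuttles $\ket{\mathbf{u}_j}$ to the output register, leaving $\ket{\psi}$ in $B$; (iv) $(G_R^\dagger)_{AB}$ is the key step, and I would use the identity $\bra{k}_A\bra{0^s}_B G_R^\dagger = \bra{k}_A\bra{\mathbf{v}_k}_B$, which follows immediately by taking the Hermitian conjugate of $G_R\ket{k}_A\ket{0^s}_B=\ket{k}_A\ket{\mathbf{v}_k}_B$; (v) $(P_L^\dagger)_A$ uncomputes the index register via $\bra{0^r}_A P_L^\dagger = \sum_k c_k^* \bra{k}_A$. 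Projecting onto $\ket{0^r}_A\ket{0^s}_B$ at the end collapses the double sum to $\sum_j c_j^* d_j\,\langle \mathbf{v}_j|\psi\rangle\,\ket{\mathbf{u}_j}_C$, so the top-left $s$-qubit block of $U$ equals $\sum_j c_j^* d_j\,\mathbf{u}_j \mathbf{v}_j^\dagger$.

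For the error analysis, scaling by the normalization factor $\beta$ and subtracting $A$ gives the error operator $E = \sum_j (\beta c_j^* d_j - \sigma_j)\,\mathbf{u}_j\mathbf{v}_j^\dagger$. Since $\|\mathbf{u}_j\mathbf{v}_j^\dagger\|=\|\mathbf{u}_j\|\|\mathbf{v}_j\|=1$, the triangle inequality together with the state-preparation-pair property $\sum_j |\beta c_j^* d_j - \sigma_j|\leq \varepsilon$ immediately yields $\|E\|\leq \varepsilon$, which is exactly the required block-encoding error bound.

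I do not expect a substantial obstacle here; the verification is essentially a one-line computation once the order of operations is pinned down. The main subtlety is ensuring $G_R^\dagger$ sits on the correct side of the $\operatorname{SWAP}$ so that its defining action on $\ket{k}_A\ket{0^s}_B$ can be invoked after projecting register $B$ back to $\ket{0^s}$; this is precisely what forces the asymmetric appearance of $P_L^\dagger$ and $G_R^\dagger$ (rather than $P_L$ and $G_R$) in the composition, and it is why the three-register layout with a single SWAP is essential.
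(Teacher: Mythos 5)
Your proposal is correct and uses the same construction as the paper: the unitary $U=(P_L^\dagger\otimes I_{2s})(G_R^\dagger\otimes I_s)(I_r\otimes \swap_s)(G_L\otimes I_s)(P_R\otimes I_{2s})$, verified by the same matrix-element/state-tracking computation and the same triangle-inequality error bound using the state-preparation-pair property.
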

\begin{proof}
    Observe that the $(2s+r)$-qubit unitary $U = (P_L^{\dagger}\otimes I_{2s})(G_R^{\dagger}\otimes I_{s})(I_r \otimes \swap_s)(G_L \otimes I_{s})(P_R \otimes I_{2s})$ is a $(\beta, r+s, \varepsilon)$-block-encoding of $A$. Indeed, for any $0\leq m, n < 2^s$ we have that
\begin{equation*}
\begin{aligned}
    &(\bra{0^{r+s}}\otimes \bra{m}) U (\ket{0^{r+s}}\otimes \ket{n}) \\
    = &\left(\sum_{j=0}^{p-1} c_j^*\bra{j} \bra{\mathbf{v}_j} \bra{m} \right)(I_r \swap_s) \left(\sum_{i=0}^{p-1} d_i \ket{i} \ket{\mathbf{u}_i} \ket{n}\right) \\
    = & \sum_{i=0}^{p-1} c_i^* d_i \bra{\mathbf{v_i}} n \rangle \langle m \ket{\mathbf{u}_i} = \langle m |  \left( \sum_{i=0}^{p-1} c_i^* d_i  \ket{\mathbf{u}_i} \bra{\mathbf{v_i}} \right)| n \rangle,
\end{aligned}
\end{equation*}
and 
\begin{equation*}
\begin{aligned}
    & \left\|\beta \left(\sum_{i=0}^{p-1} c_i^* d_i  \ket{\mathbf{u}_i} \bra{\mathbf{v_i}}\right) - A \right\| \\
    = &\left\| \left(\sum_{i=0}^{p-1} (\beta c_i^* d_i - \sigma_i)  \ket{\mathbf{u}_i} \bra{\mathbf{v_i}}\right) \right\|\\
    \leq & \sum_{i=0}^{p-1} |(\beta c_i^* d_i - \sigma_i)| \leq \varepsilon. \qedhere
\end{aligned}
\end{equation*}
\end{proof}
The unitaries $G_R, G_L$ can be constructed using existing state preparation methods based on the structure of the entries in the singular vectors (\textit{e.g.,} \cite{grover2002creating} utilized efficiently integrable distributions, \cite{kerenidis2016quantum} used a QRAM data structure). As an example,
when given oracle-access to the entries of the vectors $\mathbf{u}, \mathbf{v}$, one can use the following lemma to block-encode rank-1 operators.
\begin{lemma}[Block-encoding of rank-1 operators with oracle access entries] Let $A=\mathbf{u} \mathbf{v}^{\dagger} \in \mathbb{C}^{2^s \times 2^s}$ be a rank-1 operator. Suppose the following oracles are provided:
\begin{equation*}
    \mathcal{O}_{\mathbf{u}}: \ket{i}\ket{z}^{b} \rightarrow \ket{i}  \ket{z \oplus \Tilde{u}_i}, \mathcal{O}_{\mathbf{v}}: \ket{j}\ket{z}^{b} \rightarrow \ket{j}  \ket{z \oplus \Tilde{v}_j},
\end{equation*}
where $\Tilde{u}_i (\Tilde{v}_j)$ is the $b$-qubit exact description of $u_i (v_j)$. Let $\max_i |u_i| \leq \hat{u}$ and $\max_j |v_j| \leq \hat{v}$. Then one can obtain a $(2^{s}\hat{u}\hat{v},s+2b+2,0)$-block-encoding of $A$ with two uses of $\mathcal{O}_{\mathbf{u}}$, $\mathcal{O}_{\mathbf{v}}$ each.
\label{lem:rank1-oracle}
\end{lemma}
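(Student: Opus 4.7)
The plan is to specialize \Cref{lem:lowrank-encode} to the rank-one setting, constructing the two requisite state-preparation unitaries directly from the oracles $\mathcal{O}_{\mathbf{u}}, \mathcal{O}_{\mathbf{v}}$ in the style of the naive oracle-access block-encoding of \Cref{lem:densenaive}.

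First I would build a subnormalized state preparation $G_u$ acting on an $s$-qubit register, a flag qubit $f_u$, and a $b$-qubit ancilla $a_u$: apply Hadamards on the $s$-qubit register to obtain $\frac{1}{\sqrt{2^s}}\sum_i\ket{i}$; call $\mathcal{O}_{\mathbf{u}}$ once to write $\tilde u_i$ into $a_u$; perform a controlled rotation on $f_u$ producing amplitude $u_i/\hat u$ on $\ket{0}_{f_u}$; uncompute $a_u$ with a second call to $\mathcal{O}_{\mathbf{u}}$. The resulting unitary satisfies
\begin{equation*}
G_u \ket{0^s}\ket{0}_{f_u}\ket{0^b}_{a_u} = \tfrac{1}{\sqrt{2^s}\,\hat u}\sum_{i} u_i\ket{i}\ket{0}_{f_u}\ket{0^b}_{a_u} + \ket{\perp_u},
\end{equation*}
with $\ket{\perp_u}$ supported on $f_u = 1$. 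I would then build $G_v$ analogously (with amplitudes $v_j/\hat v$) using two calls to $\mathcal{O}_{\mathbf{v}}$.

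Next, on the data register $D$ (holding the input $\ket{\psi}$) together with an $s$-qubit ancilla register $E$ and the flag/oracle registers, apply $G_u$ on $(E,f_u,a_u)$, swap $D\leftrightarrow E$, and apply $G_v^\dagger$ on $(E,f_v,a_v)$. Taking the adjoint of the defining equation of $G_v$ gives $\bra{0^s}\bra{0}_{f_v}\bra{0^b}_{a_v}\, G_v^\dagger \ket{\psi}\ket{0}_{f_v}\ket{0^b}_{a_v} = \mathbf{v}^\dagger\psi/(\sqrt{2^s}\,\hat v)$, which combines with the $\frac{1}{\sqrt{2^s}\,\hat u}\sum_i u_i\ket{i}$ left on $D$ by $G_u$ to yield the post-selected state
\begin{equation*}
\tfrac{\mathbf{v}^\dagger\psi}{2^s\hat u\hat v}\sum_i u_i\ket{i}_D = \tfrac{\mathbf{u}\mathbf{v}^\dagger\ket{\psi}}{2^s\hat u\hat v} = \tfrac{A\ket{\psi}}{2^s\hat u\hat v},
\end{equation*}
confirming the normalization factor $\alpha = 2^s\hat u\hat v$. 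The post-selection ancilla is $E$ ($s$ qubits), $f_u, f_v$ (two qubits), and $a_u,a_v$ ($2b$ qubits), totaling $s+2b+2$, and the total oracle usage is exactly two calls each of $\mathcal{O}_{\mathbf{u}}, \mathcal{O}_{\mathbf{v}}$.

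The main obstacle is achieving two rather than four queries per oracle: the obvious route of assembling a combined entry oracle $\mathcal{O}_A$ for $u_iv_j^*/(\hat u\hat v)$ via compute-product-uncompute and then invoking \Cref{lem:densenaive} would cost four uses of each. Handling $\mathbf{u}$ and $\mathbf{v}$ through independent subnormalized preparations and composing them via the swap avoids ever materializing the product $u_iv_j^*$ in an intermediate register. Finally, $\varepsilon = 0$ is attained provided the controlled rotations in $G_u, G_v$ are implemented exactly, which is consistent with the exact $b$-bit descriptions supplied by the oracles.
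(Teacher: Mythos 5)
Your construction is essentially identical to the paper's: the paper also builds a subnormalized state-preparation unitary $G_{\mathbf{u}}$ as $(\mathcal{O}_{\mathbf{u}} \otimes I)(I_s \otimes \mathrm{CR}_{\hat u})(\mathcal{O}_{\mathbf{u}} \otimes I)(H^{\otimes s}\otimes I_{b+1})$, the analogous $G_{\mathbf{v}}$, and then block-encodes $A$ as $(G_{\mathbf{v}}^{\dagger}\otimes I)(I\otimes\swap_s)(G_{\mathbf{u}}\otimes I)$, giving the same normalization $2^s\hat u\hat v$, the same $s+2b+2$ ancillas, and two queries to each oracle. Your remark about why the compose-then-invoke-\Cref{lem:densenaive} route would cost four queries is a correct observation that the paper does not spell out, but it does not change the argument.
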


\begin{proof}
    Let $\operatorname{CR}_{\hat{u}}$ denote the conditioned rotation operator: $\operatorname{CR}_{\hat{u}} \ket{\Tilde{u}_i}\ket{0} = \ket{\Tilde{u}_i} \left(\frac{u_i}{\hat{u}} \ket{0} + \sqrt{1-\left|\frac{u_i}{\hat{u}}\right|^2 }\ket{1}\right)$. \\
Let $G_{\mathbf{u}}=  (\mathcal{O}_{\mathbf{u}} \otimes I) (I_{s} \otimes CR_{\hat{u}}) ( \mathcal{O}_{\mathbf{u}} \otimes I) ( H^{\otimes s} \otimes I_{b+1})$. Observe that
\begin{equation*}
    (I_{s} \otimes \bra{0}^{b+1}) G_{\mathbf{u}} \ket{0}^{s+b+1}  = 2^{-s/2} \sum_{i=0}^{2^s-1}\frac{u_i}{\hat{u}} \ket{i}.
\end{equation*}
Define $G_{\mathbf{v}}$ similarly as above. Then we have\footnote{Here,  $G_{\mathbf{u}}$ and $G_{\mathbf{v}}$ act on each other register $\ket{0}^{b+1}$.}
\begin{equation*}
\begin{aligned}
    &\bra{0}^{s} \bra{m} \bra{0}^{b+1} \bra{0}^{b+1}(G_{\mathbf{v}}^{\dagger} \otimes I)(I \otimes  \swap_{s}) (G_{\mathbf{u}}\otimes I) \ket{0}^{s} \ket{n} \ket{0}^{b+1} \ket{0}^{b+1}\\
    & = \frac{1}{2^{s}} \sum_{i,j=0}^{2^{s}-1} \braket{j|n} \braket{m|i} \frac{u_i}{\hat{u}}\frac{v_j}{\hat{v}} = \frac{1}{2^{s}\hat{u}\hat{v}} u_m v_n. \qedhere
\end{aligned}
\end{equation*}
\end{proof}
This low-rank block-encoding can also be applied for rectangular matrices. Suppose $\mathbf{u} \in \mathbb{C}^{2^m}$ and $\mathbf{v} \in \mathbb{C}^{2^n}$, where we assume without loss of generality $m>n$, then we can apply the lemmas on the matrix $\widetilde{A} =\mathbf{u} (\bra{0}^{m-n} \otimes \mathbf{v}^{\dagger})$.

\begin{remark} The normalization factor $2^s\hat{u}\hat{v}$ is to ensure that the block-encoding is unitary. Thus, \Cref{lem:rank1-oracle} works best when the entries $u_i$ and $v_j$ change slowly. If $\mathbf{u}$ or $\mathbf{v}$ is sparse, we can apply the techniques in \Cref{lem:sparse} to improve the normalization factor of this low-rank block-encoding. In general, however, the above lemma is rather redundant since we can directly apply \Cref{lem:sparse} naively by constructing the matrix entry-access oracle from $\mathcal{O}_{\mathbf{u}}$ and $\mathcal{O}_{\mathbf{v}}$. We simply present this lemma here for completeness.

\end{remark}

\subsection{Block-encoding of block-sparse matrices (\Cref{lem:block-sparse})} \label{app:proofa6}
In this section, we provide a block-encoding procedure for block-sparse matrices.
First we consider the block-diagonal case and later use these techniques for constructing general block-sparse matrices.

\begin{lemma}[Block-encoding of block-diagonal matrices] Let $\{A^j: 0\leq j \leq 2^t-1\}$ be a set of $s$-qubit operators and each $U^j$ be an $(\alpha_j,a,\varepsilon)$-block-encoding of $A^j$. Let $W=\sum_{j=0}^{2^t-1} \ket{j}\bra{j} \otimes U^j$ and $\hat{\alpha} = \max_j \alpha_j$. Furthermore, suppose that the following oracle is provided: $\mathcal{O}_{\alpha}: \ket{j}\ket{z} \rightarrow \ket{j}\ket{z\oplus \Tilde{\alpha}_j}$, where $\Tilde{\alpha}_j$ is the (exact) $b$-qubit description of $\alpha_j$ and $z$ is a $b$-qubit string. Then one can obtain an $(\hat{\alpha}, a+1, 2 \varepsilon)$-block-encoding of $A=\sum_{j=0}^{2^t-1} \ket{j}\bra{j} \otimes A^j$ with two uses of $\mathcal{O}_{\alpha}$, one use of $W$, and additionally $O(\operatorname{polylog}(\frac{\hat{\alpha}}{\varepsilon}))$ one- and two-qubit gates while using $O(b,\operatorname{polylog}(\frac{\hat{\alpha}}{\varepsilon}))$ ancilla qubits (which are discarded before the post-selection step).
\label{lem:block-diagonal}
\end{lemma}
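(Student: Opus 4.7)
The core obstacle is that the individual block-encodings $U^j$ have different normalization factors $\alpha_j$, whereas we want a single uniform normalization $\hat{\alpha}=\max_j\alpha_j$ for the block-diagonal operator $A$. My plan is to correct for this mismatch by introducing one extra ancilla qubit that loads the $j$-dependent rescaling amplitude $\alpha_j/\hat{\alpha}\in[0,1]$, so that after applying $W$ the top-left block is uniformly $A/\hat{\alpha}$. This is the same mechanism that underlies state preparation pairs in \Cref{lem:linear_comb}, specialized to the block-diagonal case where only a single controlled amplitude is needed.

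Concretely, the circuit I would build uses the following steps on registers ordered as (new 1-qubit ancilla, $t$-qubit $j$-register, $a$-qubit ancilla of $W$, $s$-qubit data, plus a $b$-qubit scratch register and a $\operatorname{polylog}(\hat{\alpha}/\varepsilon)$-qubit arithmetic workspace): (i) call $\mathcal{O}_{\alpha}$ on the $j$-register to write $\ket{\tilde{\alpha}_j}$ into the scratch register; (ii) using the scratch value, reversibly compute the rotation angle $\theta_j=\arccos(\alpha_j/\hat{\alpha})$ into the workspace and apply the controlled rotation $\ket{0}_{\text{new}}\mapsto (\alpha_j/\hat{\alpha})\ket{0}+\sqrt{1-(\alpha_j/\hat{\alpha})^2}\ket{1}$ to precision $\varepsilon/\hat{\alpha}$; (iii) uncompute the workspace and invoke $\mathcal{O}_{\alpha}$ a second time to clear the scratch; (iv) apply $W$ once. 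The new ancilla together with the $a$ ancillas of $W$ form the $(a+1)$-qubit ancilla of the final block-encoding.

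For correctness, note that on an input $\ket{0}_{\text{new}}\ket{0^a}\sum_j\ket{j}\ket{\phi_j}$, after steps (i)--(iii) the state becomes $\sum_j\bigl((\alpha_j/\hat{\alpha})\ket{0}+\sqrt{1-(\alpha_j/\hat{\alpha})^2}\ket{1}\bigr)\ket{0^a}\ket{j}\ket{\phi_j}$, and applying $W$ replaces $\ket{0^a}\ket{j}\ket{\phi_j}$ by $\ket{0^a}\ket{j}(A^j/\alpha_j)\ket{\phi_j}+\ket{G_j}$ with $(\bra{0^a}\otimes I_{t+s})\ket{G_j}=0$. Projecting onto $\ket{0}_{\text{new}}\ket{0^a}$ yields $\sum_j\ket{j}(\alpha_j/\hat{\alpha})(A^j/\alpha_j)\ket{\phi_j}=A\sum_j\ket{j}\ket{\phi_j}/\hat{\alpha}$, so the construction is indeed an $(\hat{\alpha},a+1,\cdot)$-block-encoding of $A$. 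For the error, each $U^j$ satisfies $\|A^j-\alpha_j(\bra{0^a}\otimes I_s)U^j(\ket{0^a}\otimes I_s)\|\leq\varepsilon$, and because the block-diagonal structure leaves the blocks on orthogonal subspaces this contributes at most $\max_j(\alpha_j/\hat{\alpha})(\varepsilon/\alpha_j)\cdot\hat{\alpha}=\varepsilon$ to the overall block-encoding error; the finite-precision rotation in step (ii), implemented to accuracy $\varepsilon/\hat{\alpha}$, contributes another $\varepsilon$, totalling $2\varepsilon$ as claimed.

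The main obstacle is step (ii): reversibly computing $\arccos(\alpha_j/\hat{\alpha})$ from the $b$-bit string $\tilde{\alpha}_j$ and performing the controlled rotation to precision $\varepsilon/\hat{\alpha}$. This is standard reversible arithmetic on finite-precision fixed-point numbers and can be done using $O(b,\operatorname{polylog}(\hat{\alpha}/\varepsilon))$ scratch qubits and $O(\operatorname{polylog}(\hat{\alpha}/\varepsilon))$ one- and two-qubit gates \emph{á la} the angle-computation step in the proof of \Cref{lem:sparse}. Since these workspace qubits are uncomputed before the post-selection step, they do not enter the $(a+1)$-qubit ancilla count of the final block-encoding, matching the resource bounds in the statement.
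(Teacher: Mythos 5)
Your proposal matches the paper's proof essentially step for step: both use the oracle $\mathcal{O}_{\alpha}$ together with a controlled rotation on a single extra ancilla to rescale each block's normalization from $\alpha_j$ to the uniform $\hat{\alpha}$, apply $W$ once (the order of $W$ versus the rotation circuit commutes since both only read the $j$-register), uncompute the scratch with a second $\mathcal{O}_{\alpha}$ call, and bound the error by $\varepsilon$ from the imperfect $U^j$ plus $\varepsilon$ from the finite-precision rotation. The resource accounting and error decomposition are likewise identical in substance, so this is the same argument as in the paper.
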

\begin{proof}
Let $\operatorname{CR}$ denote the controlled rotation operator: $\operatorname{CR} \ket{\Tilde{\alpha}_j}\ket{0} = \ket{\Tilde{\alpha}_j} \left(\Bar{\alpha}_j \ket{0} + \sqrt{1-|\Bar{\alpha}_j|^2 }\ket{1}\right)$, with $|\Bar{\alpha}_j-\frac{\alpha_j}{\hat{\alpha}}|\leq \frac{\varepsilon}{\hat{\alpha}}$. Implementing this operator requires $O(\operatorname{polylog}(\frac{\hat{\alpha}}{\varepsilon}))$ one- and two-qubit gates while using $O(b,\operatorname{polylog}(\frac{\hat{\alpha}}{\varepsilon}))$ ancilla qubits (see explanation in the proof of \Cref{lem:sparse}).
Observe that $\widetilde{W}= (\mathcal{O}_{\alpha} \otimes I_{s+a+1}) (\operatorname{CR} \otimes I_{s+t+a}) (\mathcal{O}_{\alpha} \otimes I_{s+a+1}) (I_{b+1}\otimes W)$ is the desired block-encoding\footnote{\label{firstnote}The subscripts indicate the registers the identity act on.}. Indeed, we have that
\begin{equation*}
    \widetilde{W} (\ket{0}\otimes \ket{0}^b \otimes I_{s+t+a}) = \sum_{j=0}^{2^t-1} \left(\Bar{\alpha}_j\ket{0}+\sqrt{1-|\Bar{\alpha}_j|^2 } \ket{1}\right)\otimes \ket{0}^b\otimes \ket{j} \bra{j} \otimes U^j,
\end{equation*}

Therefore,
\begin{equation*}
\begin{aligned}
    &\left\| \hat{\alpha} \left(\bra{0}^{1+b+a} \otimes I_{s+t}\right) \widetilde{W} \left(\ket{0}^{1+b+a} \otimes I_{s+t}\right)- A \right\| \\
    =&   \left\| \bigoplus_{0\leq j<2^t} \left( \hat{\alpha} \Bar{\alpha}_j \bra{0}^a  U^j\ket{0}^a  - A^j\right)  \right\| \\ 
    \leq &   \left\| \bigoplus_{0\leq j<2^t} \left( \alpha_j \bra{0}^a  U^j\ket{0}^a  - A^j\right)  \right\|  + \left\| \bigoplus_{0\leq j<2^t} (\hat{\alpha}\Bar{\alpha}_j - \alpha_j) \bra{0}^a  U^j\ket{0}^a  \right\| \\ 
    \leq & \max_{0\leq j<2^t} \left\| \alpha_j \bra{0}^a  U^j\ket{0}^a  - A^j\right\| + \max_{0\leq j<2^t} \left\| (\hat{\alpha} \Bar{\alpha}_j - \alpha_j) \bra{0}^a  U^j\ket{0}^a \right\| \\ \leq & 2 \varepsilon.
\end{aligned}
\end{equation*}
We do not include $b$ in the notation of the block-encoding because this register can actually be taken out of the system right after the uncomputation step.
\end{proof}

In the above lemma, if the normalization factors $\alpha_j$ are the same for all blocks, then the operator $W$ is already an $(\hat{\alpha}, a,\varepsilon)$-block-encoding of the desired block-diagonal matrix.

Next, we apply the proof techniques above to generalize to the block-sparse case. The following lemma is the same as \Cref{lem:block-sparse}.

\begin{lemma}[Block-encoding of block-sparse matrices] Let $A = \sum_{i,j=0}^{2^t-1} \ket{i}\bra{j} \otimes A^{ij}$ be a $d_r$-row-block-sparse and $d_c$-column-block-sparse matrix, where each $A^{ij}$ is an $s$-qubit operator. Let $U^{ij}$ be an $(\alpha_{ij},a,\varepsilon)$-block-encoding of $A^{ij}$. Suppose that we have the access to the following $2(t+1)$-qubit oracles
\begin{equation*}
\begin{aligned}
    &\mathcal{O}_r: \ket{i}\ket{k} \rightarrow \ket{i} \ket{r_{ik}} \qquad \forall i \in \{0,\hdots, 2^t-1\}, k \in \{0,\hdots, d_r-1\},\\
    & \mathcal{O}_c: \ket{l}\ket{j} \rightarrow \ket{c_{lj}} \ket{j} \qquad \forall j \in \{0,\hdots, 2^t-1\}, l \in \{0,\hdots, d_c-1\},
\end{aligned}
\end{equation*}
where $r_{ik}$ is the index for the $k$-th non-zero block of the $i$-th block-row of A, or if there are less than $k$ non-zero
blocks, then $r_{ik}=k+2^t$, and similarly $c_{lj}$ is the index for the $l$-th non-zero block of the $j$-th block-column of $A$, or if there are less than $l$ non-zero blocks, then $c_{lj}=l+2^t$. Additionally, suppose the following oracle is provided: 
\begin{equation*}
\begin{aligned}
    &\mathcal{O}_{\alpha}: \ket{i}\ket{j}\ket{z} \rightarrow \ket{i}\ket{j}\ket{z\oplus \Tilde{\alpha}_{ij}}, \forall i,j \in \{0,\hdots,2^t-1\}\\
\end{aligned}
\end{equation*}
where $\Tilde{\alpha}_{ij}$ is the $b$-qubit description of $\alpha_{ij}$ (if $i$ or $j$ is out of range then $\Tilde{\alpha}_{ij}=0$), and let the $(2t+2+s+a)$-qubit unitary $W = \sum_{i,j: A^{ij}\neq 0} (\ket{i}\ket{j} \bra{i}\bra{j}) \otimes U^{ij} + \left(I-\sum_{i,j: A^{ij}\neq 0} (\ket{i}\ket{j} \bra{i}\bra{j}) \right) \otimes I_{s+a}$. Let $\hat{\alpha} = \max_{i,j} \alpha_{ij}$. Then one can implement an $(\hat{\alpha}\sqrt{d_r d_c},t+a+3, 2\sqrt{d_r d_c} \varepsilon)$-block-encoding of $A$, with one use of each of $\mathcal{O}_r, \mathcal{O}_c,$ and $W$, two uses of $\mathcal{O}_{\alpha}$, $O(t+\operatorname{polylog}(\frac{\hat{\alpha} }{\varepsilon}))$ additional one- and two-qubit gates, and $O(b,\operatorname{polylog}(\frac{\hat{\alpha}}{\varepsilon}))$ extra ancilla qubits (which are discarded before the post-selection step).
\end{lemma}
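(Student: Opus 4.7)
The plan is to combine the sparse state-preparation construction from \Cref{lem:sparse} with the controlled-rotation technique for rescaling different normalization factors from \Cref{lem:block-diagonal}. The target unitary will have the schematic form $\widetilde{W} = (V_L^{\dagger} \otimes I)(\text{rescaling block})\,(I_r \otimes \operatorname{SWAP}_t \otimes I_{s+a})\,W\,(V_R \otimes I)$, where $V_L,V_R$ act on the two block-index registers (each of size $t+1$), $W$ acts on the two block-index registers together with the $(s+a)$-qubit register carrying the block-encoded matrices, and the rescaling block acts on an additional $\ket{0}$ flag qubit through the oracle $\mathcal{O}_\alpha$.

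First I would construct $V_R = \mathcal{O}_c(G_{d_c}\otimes I_{t+1})$ and $V_L = \mathcal{O}_r(I_{t+1}\otimes G_{d_r})\operatorname{SWAP}_{t+1}$ exactly as in the proof of \Cref{lem:sparse}, where $G_d$ takes $\ket{0}$ to the uniform superposition over $d$ basis states. Exactly as there, one obtains
\begin{equation*}
\bra{0^{t+1}}\bra{i}\,V_L^{\dagger} V_R\,\ket{0^{t+1}}\ket{j} \;=\; \frac{1}{\sqrt{d_r d_c}} \quad \text{whenever } A^{ij}\neq 0,
\end{equation*}
and zero otherwise. Between $V_R$ and $V_L^{\dagger}$ I would insert, first, the unitary $W$, which (conditioned on the block indices $(i,j)$ of the two registers) applies $U^{ij}$ on the $(s+a)$-qubit register. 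Then, following \Cref{lem:block-diagonal}, I query $\mathcal{O}_\alpha$ on the two block-index registers into a fresh $b$-qubit register, perform a controlled rotation $\ket{\widetilde{\alpha}_{ij}}\ket{0}\mapsto \ket{\widetilde{\alpha}_{ij}}(\bar{\alpha}_{ij}\ket{0}+\sqrt{1-|\bar{\alpha}_{ij}|^2}\ket{1})$ with $|\bar{\alpha}_{ij}-\alpha_{ij}/\hat{\alpha}|\leq \varepsilon/\hat{\alpha}$, and uncompute with a second call to $\mathcal{O}_\alpha$. The $b$ ancilla qubits can be discarded, leaving only the single flag qubit in the block-encoding count, which yields the stated $t+a+3$ ancilla qubits ($t+1$ from each index register, $a$ from the blocks, and $1$ from the rescaling flag).

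Next I would compute matrix elements of the all-zero-projected unitary and verify it implements $A/(\hat{\alpha}\sqrt{d_r d_c})$ up to the claimed error. By design, the projected amplitude from $\ket{n}$ to $\ket{m}$ in the block-index register, combined with $\bra{p}\bra{0^a}U^{mn}\ket{q}\ket{0^a}$ on the block register, equals $\tfrac{1}{\sqrt{d_r d_c}}\cdot \bar{\alpha}_{mn}\cdot \bra{p}(\bra{0^a}U^{mn}\ket{0^a})\ket{q}$ whenever $A^{mn}\neq 0$, and vanishes otherwise. Multiplied by the normalization $\hat{\alpha}\sqrt{d_r d_c}$, the block-by-block discrepancy with $A^{mn}_{pq}$ is controlled by $\|\alpha_{mn}(\bra{0^a}U^{mn}\ket{0^a})-A^{mn}\| \le \varepsilon$ together with an additional rescaling error $\hat{\alpha}|\bar{\alpha}_{mn}-\alpha_{mn}/\hat{\alpha}|\le \varepsilon$, for a per-block bound of $2\varepsilon$.

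The main obstacle is promoting this per-block $2\varepsilon$ estimate to the global operator-norm bound $2\sqrt{d_r d_c}\,\varepsilon$. The point is that the difference $E = A - \hat{\alpha}\sqrt{d_r d_c}(\bra{0\cdots 0}\otimes I)\widetilde{W}(\ket{0\cdots 0}\otimes I)$ inherits the block-sparsity pattern of $A$: it has at most $d_r$ non-zero blocks per block-row and $d_c$ per block-column, each of operator norm at most $2\varepsilon$. Applying the standard sparsity-of-blocks bound $\|E\| \le \sqrt{d_r d_c}\,\max_{ij}\|E^{ij}\|$ (which follows from writing $E$ as a sum of block-diagonal pieces indexed by a colouring argument, or directly from bounding $\|E\|_{1\to 1}$ and $\|E\|_{\infty\to\infty}$ and using $\|E\| \le \sqrt{\|E\|_{1\to 1}\|E\|_{\infty\to\infty}}$) yields the claimed $2\sqrt{d_r d_c}\,\varepsilon$. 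Counting resources (one use each of $\mathcal{O}_r,\mathcal{O}_c,W$, two uses of $\mathcal{O}_\alpha$, $O(t)$ Hadamards and swaps for $G_{d_r},G_{d_c}$, and the $O(\operatorname{polylog}(\hat{\alpha}/\varepsilon))$ gates and $O(b,\operatorname{polylog}(\hat{\alpha}/\varepsilon))$ ancillas for the controlled rotation as in \Cref{lem:sparse,lem:block-diagonal}) then completes the proof. In the special case that all $\alpha_{ij}$ coincide, $\mathcal{O}_\alpha$ and the rotation step are unnecessary, giving the savings mentioned in \Cref{remark:blocksparse}.
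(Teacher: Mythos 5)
Your construction and error analysis mirror the paper's proof almost exactly: both build $V_R = \mathcal{O}_c(G_{d_c}\otimes I_{t+1})$ and $V_L = \mathcal{O}_r(I_{t+1}\otimes G_{d_r})\operatorname{SWAP}_{t+1}$ exactly as in \Cref{lem:sparse}, sandwich $W$ together with the $\mathcal{O}_\alpha$-controlled rotation of \Cref{lem:block-diagonal} between $V_R$ and $V_L^{\dagger}$, and close the error estimate with the block-sparsity operator-norm bound $\|E\| \le \sqrt{d_r d_c}\,\max_{i,j}\|E^{ij}\|$. Two small slips are worth correcting. First, the extra $\operatorname{SWAP}_t$ you insert between $W$ and the rescaling block is spurious: after $V_R$ the two index registers already hold $(c_{lj},j)=(i,j)$ in the order $W$ and $\mathcal{O}_\alpha$ expect, and the swap needed to make the $V_L^{\dagger}V_R$ overlap come out right is already inside $V_L$; an additional swap would feed $W$ (or $\mathcal{O}_\alpha$) the transposed pair. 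Second, neither of your two justifications of $\|E\|\le\sqrt{d_rd_c}\max_{i,j}\|E^{ij}\|$ quite works as written: a colouring decomposition only yields $d_r d_c\max_{i,j}\|E^{ij}\|$, and applying $\|E\|\le\sqrt{\|E\|_{1\to1}\|E\|_{\infty\to\infty}}$ directly to $E$ loses a factor of $\sqrt{2^{s}}$, since a column sum inside a block is bounded by $\sqrt{2^{s}}$ times that block's operator norm, not by the operator norm itself. The clean route is to pass to the $2^t\times 2^t$ matrix $M$ with $M_{ij}=\|E^{ij}\|$, observe $\|E\|\le\|M\|$ (Cauchy--Schwarz over blocks), and then apply $\|M\|\le\sqrt{\|M\|_{1\to1}\|M\|_{\infty\to\infty}}\le\sqrt{d_r d_c}\max_{i,j}\|E^{ij}\|$; this is the bound the paper invokes implicitly in its displayed inequality chain.
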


\begin{proof} Let $\operatorname{CR}$ be the controlled rotation operator: $\operatorname{CR} \ket{\Tilde{\alpha}_{ij}}\ket{0} = \ket{\Tilde{\alpha}_{ij}} \left(\Bar{\alpha}_{ij} \ket{0} + \sqrt{1-|\Bar{\alpha}_{ij}|^2 }\ket{1}\right)$, where $|\Bar{\alpha}_{ij}-\frac{\alpha_{ij}}{\hat{\alpha}}|\leq \frac{\varepsilon}{\hat{\alpha} }$. Let $G_d$ be a $(t+1)$-qubit unitary such that $G_d\ket{0^{t+1}} = \sum_{k=0}^{d-1} \frac{\ket{k}}{\sqrt{d}}$, where $d \leq 2^t$. Additionally, let $V_R = \mathcal{O}_c (G_{d_c}\otimes I_{t+1})$ and $V_L = \mathcal{O}_r (I_{t+1} \otimes G_{d_r})\swap_{t+1}$. For any $0\leq i, j<2^t$, observe that
\begin{equation}
\begin{aligned}
    \bra{0^{t+1}}\bra{i} V_L^{\dagger} V_R \ket{0^{t+1}} \ket{j} &= \left(\sum_{k=0}^{d_r-1} \bra{i} \frac{\bra{r_{ik}}}{\sqrt{d_r}} \right) \left(\sum_{k=0}^{d_c-1} \ket{c_{lj}} \frac{\ket{j}}{\sqrt{d_c}}  \right) \\
    & =\frac{1}{\sqrt{d_r d_c}} \text{ if } A^{ij}\neq 0 \text{ and } 0 \text{ otherwise.} 
\end{aligned}
\label{eq:VLVR}
\end{equation}

Let the $(2t+2+s+a)$-qubit unitary $W = \sum_{i,j: A^{ij}\neq 0} (\ket{i}\ket{j} \bra{i}\bra{j}) \otimes U^{ij} + (I-\sum_{i,j: A^{ij}\neq 0} (\ket{i}\ket{j} \bra{i}\bra{j})) \otimes I_{s+a}$ and, similarly to \Cref{lem:block-diagonal}, $\widetilde{W}= (\mathcal{O}_{\alpha} \otimes I_{s+a+1}) (\operatorname{CR} \otimes I_{s+2t+2+a}) (\mathcal{O}_{\alpha} \otimes I_{s+a+1}) (I_{b+1}\otimes W)$ be a $(2t+s+a+b+3)$-qubit unitary
\footnote{The subscripts indicate the registers the identity act on.}. Then $(V_L^{\dagger}\otimes I_{s+a+b+1} ) 
\widetilde{W} (V_R\otimes I_{s+a+b+1})$ is the desired block-encoding. Indeed, this follows directly from \autoref{eq:VLVR} and a similar proof to that of \Cref{lem:block-diagonal}, noting that
$\widetilde{W}$ leaves the $\ket{c_{lj}}$ and $\ket{j}$ registers unchanged, and
\begin{equation*}
\begin{aligned}
      & \left\| \sum_{i,j: A^{ij}\neq 0 } \ket{i} \bra{j} \otimes (\hat{\alpha} \Bar{\alpha}_{ij} \bra{0}^a U^{ij}\ket{0}^a - A^{ij}) \right\| \\
      \leq & \left\| \sum_{i,j: A^{ij}\neq 0 } \ket{i} \bra{j} \otimes (\alpha_{ij} \bra{0}^a U^{ij}\ket{0}^a - A^{ij}) \right\| + \left\| \sum_{i,j: A^{ij}\neq 0 } \ket{i} \bra{j} \otimes (\hat{\alpha} \Bar{\alpha}_{ij} -\alpha_{ij}) \bra{0}^a U^{ij}\ket{0}^a) \right\| \\
      \leq & \sqrt{d_r d_c} \cdot \left( \max_{i,j: A^{ij}\neq 0} \left\|\alpha_{ij} \bra{0}^a U^{ij}\ket{0}^a - A^{ij} \right\| + \max_{i,j: A^{ij}\neq 0} \left\|(\hat{\alpha} \Bar{\alpha}_{ij} - \alpha_{ij} ) \bra{0}^a U^{ij}\ket{0}^a \right\| \right) \\
      \leq & 2 \sqrt{d_r d_c} \varepsilon.  
\end{aligned}
\end{equation*}
Implementing $G_d$ and the controlled rotation operator requires the indicated additional gate and ancilla qubit complexities (see explanation in the proof of \Cref{lem:sparse}).
\end{proof}

This lemma does not require sparsity to work. It generalizes \Cref{lem:sparse} (also Lemma 48 of \cite{gilyen2019quantum}) for block-encoding matrices with oracle access to entries. If $\alpha_{ij}$ are the same for all blocks, we can omit the oracle $\mathcal{O}_{\alpha}$ and the controlled rotation step, hence only $t+a+2$ ancilla qubits are needed.

\section{Detailed analysis of block-encodings of kernel matrices}

\subsection{Circuit complexity analysis of \Cref{sec:polykernel}}\label{app:detail-encode}
In this section, we provide detailed analysis on circuit and query complexities of the block-encoding procedure presented in \Cref{sec:polykernel}. Our problem is to optimally block-encode the kernel matrix $\mathbf{K}=(k(x_i,x_j))_{i,j=0}^{N-1}$, where $x_i=i$, $N=2^L$, and $k(x,x')=|x-x'|^{-p}$ is a polynomially decaying kernel, given the oracle
\begin{equation}
    \mathcal{O}_{k}: \ket{i}\ket{j}\ket{0}^{\otimes b} \rightarrow \ket{i}\ket{j}\ket{\Tilde{k}(x_i,x_j)},
\end{equation}
where $\Tilde{k}(x_i,x_j)$ is the (exact) $b$-qubit description of $k(x_i,x_j)$.

As a reminder, our procedure uses the hierarchical splitting in \Cref{fig:splitting}.
Specifically, at level $\ell$ of the hierarchy, each admissible block has size $2^{L-\ell}$. In addition, the maximum entry of an admissible block in $\mathbf{K}^{(\ell)}$ is bounded by $d_{\min}^{-p}= 2^{-(L-\ell)p}$. Thus, an admissible block at level $\ell$ can be $(\alpha_{\ell},L-\ell+1,\frac{\varepsilon}{3})$-block-encoded via the naive procedure for dense matrices (\Cref{lem:densenaive} or \Cref{remark:densenaive}), where $\alpha_{\ell} = 2^{(L-\ell)(1-p)}$. From here, each $\mathbf{K}^{(\ell)}$, which is a block-sparse matrix of sparsity 3, can be $(3\alpha_{\ell},L+3,\varepsilon)$-block-encoded via the procedure for block-sparse matrices (\Cref{lem:block-sparse}). The adjacent interaction part $\mathbf{K}_{ad}$, which is a tridiagonal matrix, can be $(3,L+3,\varepsilon)$-block-encoded with \Cref{lem:sparse}. Finally, we use the procedure for linear combination of block-encoded matrices (\Cref{lem:linear_comb}) to obtain a block-encoding of $\mathbf{K} = \sum_{\ell=2}^{L} \mathbf{K}^{(\ell)} + \mathbf{K}_{ad}$ as $U = \sum_{\ell=2}^{L} 3 \alpha_{\ell} U^{(\ell)} +  3 U_{ad}$, where the $U$'s denote the corresponding block-encodings. This step requires a $(\log L)$-qubit state preparation pair (\Cref{def:preppair}) which prepares the coefficients $[3,3\alpha_2,\hdots, 3\alpha_L]$ in the linear combination. This entire procedure results in a $(\alpha, L +\log L + 3,\alpha \varepsilon)$-block-encoding of the \emph{exact} kernel matrix $\mathbf{K}$, where,

\begin{equation}
    \alpha = \begin{cases} 3(1+ \frac{2^{1-p} -N^{1-p}}{2^{1-p}- 4^{1-p}}) \leq O(1)  & \text{ if } p >1 \\
     3\log N  & \text{ if } p = 1\\
     3(1+ \frac{N^{1-p}-2^{1-p}}{4^{1-p}-2^{1-p}}) & \text{ if } p < 1\\
    \end{cases}.
    \label{eq:alpha}
\end{equation}

In the main text, we have shown that this procedure results in an optimal block-encoding. That is, in the obtained $(\alpha, \log N + \log \log N + 3, \varepsilon)$-block-encoding, the normalization factor $\alpha$ is exactly $\Theta(\|\mathbf{K}\|)$. Here, we show why exactly two queries to $\mathcal{O}_k$ and $O(\operatorname{polylog}(\frac{N}{\varepsilon}))$ extra one- and two-qubit gates are needed by carefully combining \Cref{lem:block-sparse}, \Cref{lem:linear_comb}, and \Cref{lem:densenaive}.

\paragraph{Resources} We use 4 registers $A$ ($\log L$ qubits), $B$ ($L+1$ qubits), $C$ ($L+1$ qubits), $D$ (single qubit). For simplicity, assume $\log L$ is an integer. We want to construct a $(2L+\log L +3)$-qubit unitary $U$ such that, for any $0\leq m,n \leq 2^L-1$, we have
\begin{equation}
    \bra{0}_A \bra{0}_B \bra{m}_C \bra{0}_D U \ket{0}_A \ket{0}_B \ket{n}_C \ket{0}_D = \frac{k_{mn}}{\alpha}.
    \label{eq:goal}
\end{equation}
Here, although $0 \leq  m,n \leq 2^L-1$, we allocate $(L+1)$ qubits to the registers $B,C$ for later use.

\paragraph{State preparation pair} 
Observe that \Cref{lem:linear_comb} requires a state preparation pair $(P_L,P_R)$ (\Cref{def:preppair}) that prepares the coefficient vector $\boldsymbol{\beta} = [3,3\alpha_2,\hdots, 3\alpha_L]$ (the first entry being $\alpha_{ad}$ of $\mathbf{K}_{ad}$). Note that $\|\boldsymbol{\beta}\|_1=\alpha$, the overall normalization factor. These operators are $\log L=\log \log N $ qubits, hence we assume they are provided for now (later we will show how to efficiently construct them). In particular, we choose $P_L=P_R$ and
\begin{equation}
    P_R \ket{0}_A = \frac{1}{\sqrt{\alpha}}\sum_{\ell=0}^{L-1} \sqrt{\beta_{\ell}} \ket{\ell}_A.
    \label{eq:prepare}
\end{equation}

\paragraph{Implementing the ``right'' part}
Suppose the registers are in the state
\begin{equation}
    \ket{0}_A \ket{0}_B \ket{n}_C \ket{0}_D.
\end{equation}
We apply $P_R$ to get (omitting the register names for ease of notation)
\begin{equation}
    \frac{1}{\sqrt{\alpha}}\sum_{\ell=1}^{L} \sqrt{\beta_{\ell-1}} \ket{\ell-1} \ket{0} \ket{n} \ket{0}.
    \label{eq:PR}
\end{equation}
Next, we take care of \Cref{lem:block-sparse} for the block-sparse levels $\mathbf{K}^{(\ell)}$. Conditioned on register A being $\ell-1$, where $2\leq \ell \leq L$, we focus on a particular $\ell$. Define the operator $D_3: \ket{0^{L+1} }\rightarrow \frac{1}{\sqrt{3}}\sum_{c=1}^{3} \ket{0}\ket{c}\ket{0^{L-2}}$. Apply $D_3$ on the register $B$ we get
\begin{equation}
    \frac{1}{\sqrt{3}} \sum_{c=1}^{3}\ket{\ell-1} \underbrace{\ket{0}\ket{c}\ket{0^{L-2}}}_{\text{register } B} \ket{n} \ket{0}.
\end{equation}

Next, we apply the controlled block-row index oracle $\mathcal{O}_r$, which, based on $\ell, c, n$, computes the block-row index $m_b$ of the $c$-th non-zero admissible block in $\mathbf{K}^{(\ell)}$ that intersects column $n$ (there are at most $3$ such blocks since $\mathbf{K}^{(\ell)}$ is 3-block-sparse) and writes the result $m_b$ to the first $\ell+1$ qubits of register $B$. Note that this computation is easy and reversible. We denote this set of non-zero block-row index $m_b$ as $I(\ell,n)$. If there are less than $c$ such blocks, $\mathcal{O}_r$ writes $c+2^{\ell}$ instead (this choice ensures the oracle is reversible and also the first qubit of $\ket{m_b}$ acts as a flag that will be useful later).
\begin{equation}
    \frac{1}{\sqrt{3}} \sum_{m_b \in I(\ell,n)} \ket{\ell-1} \underbrace{ \ket{m_b}\ket{0^{L-\ell}}}_{\text{register } B} \ket{n} \ket{0}.
\end{equation}

Next, we take care of the steps in \Cref{lem:densenaive}. Also conditioned on register $A$ being $\ell-1$, where $2\leq \ell \leq L$, we apply Hadamard gates on the last $L-\ell$ qubits of register $B$
\begin{equation}
    \frac{1}{\sqrt{3}} \frac{1}{\sqrt{2^{L-\ell}}} \sum_{m_b \in I(\ell,n)} \sum_{m_i=0}^{2^{L-\ell}-1} \ket{\ell-1} \underbrace{ \ket{m_b}\ket{m_i}}_{\text{register } B} \ket{n} \ket{0}.
\end{equation}

Observe that when the first qubit of $\ket{m_b}$ is zero, $\ket{m_b}\ket{m_i}$ together form a valid row index $m<2^L$. If this qubit is one, then $\ket{m_b}\ket{m_i}$ is not a valid row index and we can simply take $k_{mn}=0$ when querying the matrix entry oracle $\mathcal{O}_k$. In particular, we query the oracle $\mathcal{O}_k$ on registers $B,C$ to obtain $\ket{\Tilde{k}_{mn}}$ in another appended $b$-qubit register. Then, we perform the controlled rotation to rotate register $D$ as: $\ket{0} \rightarrow \left( \frac{k_{mn}}{k^{(\ell)}_{\max}}\ket{0} +\sqrt{1-\left|\frac{k_{mn}}{k^{(\ell)}_{\max}}\right|^2} \ket{1} \right)$, where $k^{(\ell)}_{\max}$ is the maximum entry in the admissible blocks at level $\ell$, which we have shown to be $2^{-(L-\ell)p}$. 
Finally, we call $\mathcal{O}_k$ one more time to uncompute and discard the appended register. This gives
\begin{equation}
    \frac{1}{\sqrt{3}} \frac{1}{\sqrt{2^{L-\ell}}} \sum_{m_b \in I(\ell,n)} \sum_{m_i=0}^{2^{L-\ell}-1} \ket{\ell-1} \underbrace{ \ket{m_b}\ket{m_i}}_{\text{register } B} \ket{n} \left( \frac{k_{mn}}{k^{(\ell)}_{\max}}\ket{0} +\sqrt{1-\left|\frac{k_{mn}}{k^{(\ell)}_{\max}}\right|^2} \ket{1} \right)
    \label{eq:querylevels}
\end{equation}
for $2 \leq \ell \leq L$.

On the other hand, conditioned on $\ell=1$ (register $A$ being $\ket{0}$), we take care of the adjacent part $\mathbf{K}_{ad}$ rather simply. Apply the $(2L+2)$-qubit operator $\mathcal{O}_{ad}$ which performs the map $\ket{0}\ket{n} \rightarrow \frac{1}{\sqrt{3}} \sum_{i=-1}^{1} \ket{n+i}\ket{n}$ for $0 \leq n <  2^L$ (if $n+i=-1$ then it writes to the first register any value that is not a valid row index, e.g., $2^{L+1}-1$) on registers $B, C$. Then, we perform the oracle query, controlled rotation, and uncomputation similarly as done above to obtain
\begin{equation}
    \frac{1}{\sqrt{3}} \sum_{i=-1}^{1} \ket{0}_A \ket{n+i}_B \ket{n}_C \left( k_{n+i,n}\ket{0} + \sqrt{1-|k_{n+i,n}|^2}\ket{1}\right),
    \label{eq:queryAdjacent}
\end{equation}
where $k_{n+i,n}=0$ if $n+i$ is an invalid row index.

To summarize, the entire procedure from \autoref{eq:PR} to \autoref{eq:querylevels} (and \autoref{eq:queryAdjacent}), denoted by $U_R$, performs the following
\begin{equation}
\small
\begin{aligned}
    U_R P_R \ket{0} \ket{0} \ket{n} \ket{0}=  &\frac{1}{\sqrt{3\alpha}}  \left[ \sqrt{\beta_0} \sum_{i=-1}^{1} \ket{0} \ket{n+i} \ket{n} \left( k_{n+i,n}\ket{0} + \sqrt{1-|k_{n+i,n}|^2}\ket{1}\right) \right. \\
     & \left. + \sum_{\ell=2}^{L} \frac{\sqrt{\beta_{\ell-1}}}{\sqrt{2^{L-\ell}}} \sum_{m_b \in I(\ell,n)} \sum_{m_i=0}^{2^{L-\ell}-1} \ket{\ell-1} \underbrace{ \ket{m_b}\ket{m_i}}_{\text{equiv. to some } m} \ket{n} \left( \frac{k_{mn}}{k^{(\ell)}_{\max}}\ket{0} +\sqrt{1-\left|\frac{k_{mn}}{k^{(\ell)}_{\max}}\right|^2} \ket{1} \right) \right],
\end{aligned}
\label{eq:rightpart}
\end{equation}
where $\beta_0=3$ is the normalization factor of the adjacent level $\mathbf{K}_{ad}$ and $\beta_{\ell-1}=3\alpha_{\ell}$ is the normalization factor of the hierarchical level $\mathbf{K}^{(\ell)}$. Importantly, notice that register $B$ has full support on all $2^L$ possible values of row index $m$ in this superposition.

\paragraph{Implementing the ``left'' part} We would like to transform the state
\begin{equation}
    \ket{0}\ket{0}\ket{m}\ket{0}
\end{equation}
to obtain similar result to that of the ``right'' procedure above. The procedure follows the same steps, except
\begin{itemize}
    \item $P_R$ needs to be replaced by $P_L$.
    \item $\mathcal{O}_r$ needs to be replaced by $\mathcal{O}_c$, the controlled block-column index oracle which, based on $\ell, m$, computes the block-column indices $n_b$ of the non-zero admissible blocks in $\mathbf{K}^{(\ell)}$ that intersect row $m$ (there are at most $3$ such blocks since $\mathbf{K}^{(\ell)}$ is 3-block-sparse) and writes the result $n_b$ to the first $\ell+1$ qubits of register $B$. We denote this set of non-zero block-column indices $n_b$ as $J(\ell,m)$.
    \item There is no need to query the entry oracle $\mathcal{O}_k$ (nor the accompanied controlled rotation).
    \item Registers $B$ and $C$ are swapped at the end.
\end{itemize}
Using similar notation to that of \autoref{eq:rightpart}, the above modifications give
\begin{equation}
\begin{aligned}
    \swap_{B,C} U_L P_L \ket{0} \ket{0} \ket{m} \ket{0} =  \frac{1}{\sqrt{3 \alpha}} & \left[ \sqrt{\beta_0} \sum_{i=-1}^{1} \ket{0} \ket{m} \ket{m+i}  \ket{0} \right. \\
     & \left. + \sum_{\ell=2}^{L} \frac{\sqrt{\beta_{\ell-1}}}{\sqrt{2^{L-\ell}}} \sum_{n_b \in J(\ell,m)} \sum_{n_j=0}^{2^{L-\ell}-1} \ket{\ell-1} \ket{m} \underbrace{ \ket{n_b} \ket{n_j}}_{\text{equiv. to some } n}  \ket{0} \right].
\end{aligned}
\label{eq:leftpart}
\end{equation}

\paragraph{Combining both parts} Combining \autoref{eq:rightpart} with \autoref{eq:leftpart} and noting that $\beta_0=3, \beta_{\ell-1}=3\cdot 2^{(L-\ell)(1-p)}$ and $k^{(\ell)}_{\max} = 2^{-(L-\ell)p}$, we get
\begin{equation}
    \bra{0} \bra{0} \bra{m} \bra{0}  P_L^{\dagger} U_L^{\dagger} \swap_{B,C} U_R P_R \ket{0} \ket{0} \ket{n} \ket{0} = \frac{k_{mn}}{\alpha},
    \label{eq:bothpart}
\end{equation}
which is exactly what we set out to prove (\autoref{eq:goal}).

\paragraph{Circuit complexity} As seen above, this block-encoding only uses two calls to $\mathcal{O}_k$, and one call to each of $\mathcal{O}_r$, $\mathcal{O}_c$ and $P_R$, $P_L$. To achieve an overall error bound of $\varepsilon$, it suffices for the controlled rotation on $\ket{\Tilde{k}_{mn}}\ket{0}$ to have error $O(\frac{\varepsilon}{N})$. This can be done with $O(b, \operatorname{polylog}(\frac{N}{\varepsilon}))$ extra qubits and one- and two-qubit gates (see explanation in proof of \Cref{lem:sparse}).

\paragraph{Constructing state preparation pair} $P_L$ and $P_R$ act on only $\log \log N$ qubits, hence were assumed given as unitaries. If this is not the case, we can still construct them (\autoref{eq:prepare}) as block-encoded operators with only an extra normalization factor of $\log N$ as follows. Let
\begin{equation}\begin{aligned}
    \mathcal{P}: \ket{0^{\log L}}\ket{0} &\rightarrow \frac{1}{\sqrt {L}} \sum_{\ell=0}^{L-1} \ket{\ell}\ket{0} \\
    & \rightarrow \frac{1}{\sqrt {L}} \sum_{\ell=0}^{L-1} \ket{\ell} \ket{\Tilde{\beta}_{\ell}}\ket{0} \qquad \text{(query $\beta_{\ell}$ into an appended register)}\\
    & \rightarrow \frac{1}{\sqrt {L}} \sum_{\ell=0}^{L-1} \ket{\ell} \left(\sqrt{\frac{\beta_{\ell}}{\hat{\beta}}} \ket{0} + \sqrt{1-\left|\frac{\beta_{\ell}}{\hat{\beta}} \right|}\ket{1} \right),\\
\end{aligned}
\end{equation}
where $\beta_0 = 3$ and $\beta_{\ell-1}=3 \alpha_{\ell}=3\cdot 2^{(L-\ell)(1-p)}$ for $2 \leq \ell \leq L$ and $\hat{\beta}=\max_{\ell} |\beta_{\ell}|$. As explained in \Cref{lem:sparse}, the controlled rotation can be implemented with error bound $\varepsilon$ using $O(\operatorname{polylog}(\frac{L}{\varepsilon}))$ extra one- and two-qubit gates. Comparing with \autoref{eq:prepare} we see that $\mathcal{P}$ is a $(\sqrt{\frac{\hat{\beta} L}{\alpha}},1,\varepsilon)$-block-encoding of $P_R$. Thus, using \Cref{lem:multiply-encode} (multiplying block-encoded matrices) we can achieve \autoref{eq:bothpart} with just two more ancilla qubits. The RHS of \autoref{eq:bothpart} then becomes $\frac{k_{mn}}{\alpha} \left(\frac{\hat{\beta} L}{\alpha}\right)^{-1}$. In other words, the normalization factor will become a factor of $\frac{\hat{\beta} L}{\alpha}$ larger than the optimal block-encoding. We can bound this extra factor rather easily. As a reminder, $\alpha$ is given in \autoref{eq:alpha}.
\begin{itemize}
    \item For $p>1$: $\hat{\beta}=3$ and $\alpha \geq 3$, thus $\frac{\hat{\beta} L}{\alpha} \leq L = \log N$.
    \item For $p=1$: $\hat{\beta}=3$ and $\alpha = 3 \log N$, thus $\frac{\hat{\beta} L}{\alpha}=1$.
    \item For $p<1$: $\hat{\beta}=3\cdot 2^{(L-2)(1-p)}=O(N^{1-p})$ and $\alpha = \Omega(N^{1-p})$, thus $\frac{\hat{\beta} L}{\alpha} \leq O(L) = O(\log N)$.
\end{itemize}
Therefore, this extra factor only increases the runtimes of \Cref{lem:forward} (applying block-encoded matrices) and \Cref{lem:inv} (applying inverse of block-encoded matrices) by a factor of $O(\log N)$.

\subsection{Variants of hierarchical splitting}\label{app:variantHmatrix}
The form of the hierarchical splitting can be slightly modified for other types of decaying kernels. For example, consider kernels of the form $k(x,x')= \frac{1}{|(x-x')+c|^p}$ on the domain $\Omega=[0,N-1)$ and $-N<c<N$ is an integer. In this case, we can use a ``shifted'' hierarchical splitting to obtain an optimal block-encoding of the kernel matrix $\mathbf{K}=(|i-j+c|^{-p})_{i,j=0}^{N-1}$, in which the hierarchy is shifted in the horizontal direction (along the rows, see \Cref{fig:skew}.) Similarly, for kernels of the form $k(x,x')= \frac{1}{(|x-x'|-c)^p}$ ($0\leq c<N$), we can apply a hierarchical splitting shifted along the skew-diagonal direction, such that the upper and lower triangular halves of the matrix are shifted in opposite directions.
\begin{figure}[H]
    \centering
    \includegraphics[width=0.99\textwidth]{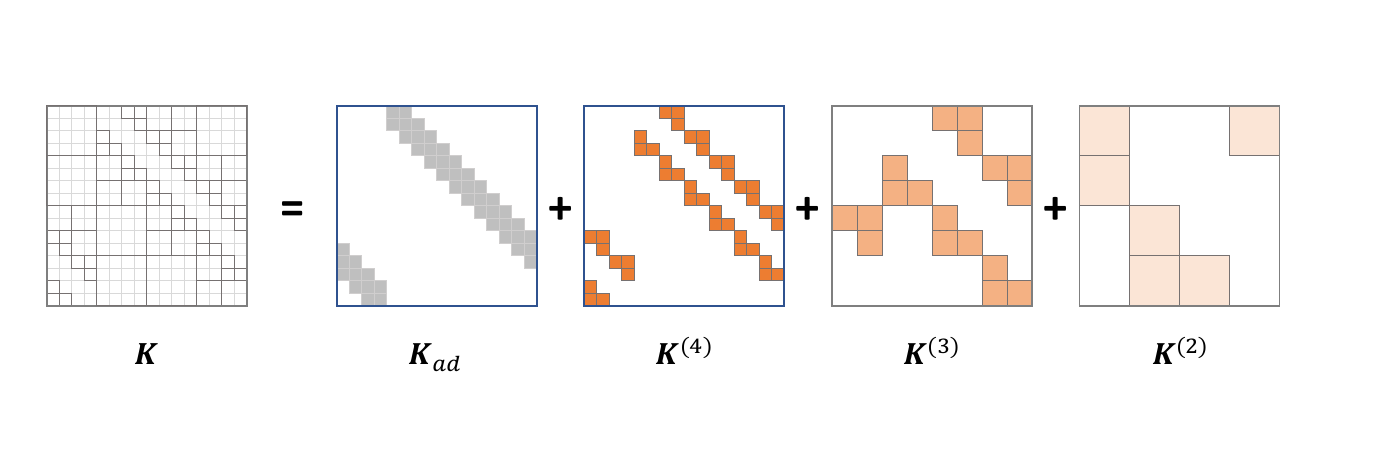}
    \caption{Hierarchical splitting corresponding to the kernel $k(x,x')=\frac{1}{|x-x'+4|^p}$ with uniform distribution of particles, which is obtained by shifting the splitting in \Cref{fig:splitting}C 4 units to the right. Note that the fine splittings in the first 4 columns are actually only needed when the sum $x-x'+4$ is modulo $N$.}
    \label{fig:skew}
\end{figure}

\subsection{Adaptive algorithm for non-uniform particle distribution}\label{app:adaptive}
In particle simulation tasks, if the distribution of particles is not approximately uniform, an adaptive hierarchical splitting can be employed. Let the smallest interparticle spacing be $N^{-c}$, where $N$ is the number of particles. Then, we can scale the mesh such that the interparticle distance is at least $1$ and the mesh size is $N^c$. The added mesh sites are treated as zero-mass particles. The kernel matrix has size $N^c \times N^c$ and the hierarchical splitting constructed on this mesh has $c \log N$ levels. The entire block-encoding procedure presented in \Cref{sec:main} applies, with only an extra factor of $\operatorname{poly}(c)$ in the normalization factor and the required resources of the block-encoding. The optimality of the normalization factor of this adaptive block-encoding is harder to state precisely, since it depends on the distribution of particles which is not determined a priori.

\subsection{Generalization of quantum hierarchical block-encoding}\label{app:generalized}
Here, we provide a block-encoding procedures in which one has direct access to the structure of the entry magnitudes, thus generalizing hierarchical block-encoding procedure presented in \Cref{sec:polykernel} and \Cref{app:detail-encode}.

First, we present a procedure for preparing a quantum state $\ket{x} \in \mathbb{C}^{N}$, where $N=2^L$, with high probability of success. For any $0\leq \ell < L-1$, let $I_{\ell}=\{i:2^{-(\ell+1)}<|x_i|\leq 2^{-\ell}\}$ and $I_{L-1}=\{i:|x_i|\leq 2^{-(L-1)}\}$. Also, let $n_{\ell}=|I_{\ell}|$. Apart from the oracle $\mathcal{O}_x: \ket{i}\ket{0} \rightarrow \ket{i} \ket{\Tilde{x}_i}$, our procedure assumes the following oracle is provided:
\begin{equation}
    \mathcal{O}_{index}: \ket{\ell}\ket{j} \rightarrow \ket{0} \ket{f(\ell,j)}, \qquad  0\leq j < n_{\ell},
    \label{eq:O_index}
\end{equation}
where the first register is $\log L$ qubits and the second register is $L= \log N$ qubits, and $f(\ell,j)$ is an efficiently computable and \emph{reversible} function that computes the index of the $j$-th element in the set $I_{\ell}$ (according to some fixed ordering, e.g. top to bottom if $\ket{x}$ is treated as a column vector) if $0\leq j < n_{\ell}$, or else the oracle writes $N +(j-n_{\ell}+1) + \sum_{l=0}^{\ell-1}(N-n_{l})$ as a bit string on the concatenation of the two registers (so that the first register will be non-zero). This ensures the oracle is reversible. An example of such efficient and reversible $f(\ell,j)$ is in regular $\mathcal{H}$-matrix splitting. Our procedure is as follows:
\begingroup
\allowdisplaybreaks
\begin{align*}
\small
    \ket{0^{\log L}}\ket{0} &\rightarrow  \sum_{\ell=0}^{L-1} \frac{\sqrt{n_{\ell}} 2^{-\ell}}{\beta}
    \ket{\ell}\ket{0}, \qquad \text{where } \beta = \left(\sum_{\ell} n_{\ell} 2^{-2\ell} \right)^{1/2} \\
    &\rightarrow  \sum_{\ell=0}^{L-1} \frac{\sqrt{n_{\ell}} 2^{-\ell}}{\beta}
    \ket{\ell} \sum_{j=1}^{n_{\ell}} \frac{\ket{j}}{\sqrt{n_{\ell}}} \qquad \text{(Hadamards controlled on $\ell$)}\\
    &\rightarrow  \sum_{\ell=0}^{L-1}  \sum_{j=1}^{n_{\ell}} \frac{ 2^{-\ell}}{\beta}
    \ket{0} \ket{f(\ell,j)}\qquad \text{(call oracle $\mathcal{O}_{index}$)}\\
    & \rightarrow  \sum_{\ell, j} \frac{ 2^{-\ell}}{\beta}
    \ket{0} \ket{f(\ell,j)} \left(\frac{x_{f(\ell,j)}}{2^{-\ell}} \ket{0} + \sqrt{1-\left|\frac{x_{f(\ell,j)}}{2^{-\ell}} \right|^2}\ket{1} \right) \\
    & \qquad \qquad \qquad \qquad \text{(query $x_{f(\ell,j)}$ and rotate ancilla qubit)}\\
\end{align*}
\endgroup
Above, we have ignored the terms when $j$ is invalid ($j\geq n_{\ell}$) in the sum because we can simply set $x_{f(\ell,j)}=0$ for those cases. Furthermore, observe that the superposition has support on all indices. Thus, conditioning on the ancilla being zero, we obtain the desire state $\ket{x}$. The probability of successfully post-selecting  is $\frac{1}{\beta^2}$, where
\begin{equation}
    \beta^2 = \sum_{\ell=0}^{L-1} n_{\ell} 2^{-2\ell} = \sum_{\ell=0}^{L-2} n_{\ell} 2^{-2\ell} + n_{L-1}2^{-2(L-1)} \leq  4 \sum_{i}|x_i|^2 + \frac{4}{N} \leq 8.
\end{equation}
We also assumed that the first operation in the procedure, which maps $\ket{0^{\log L}}$ to a weighted superposition over $\ket{\ell}$, can be done perfectly. In fact, it can only be block-encoded with an extra normalization factor of $\sqrt{L}$: for a vector $\ket{y}\in \mathbb{C}^{L}$ one can simply do
\begin{equation*}
\begin{aligned}
    \ket{0^{\log L}}\ket{0} &\rightarrow  \sum_{\ell=0}^{L-1} \frac{1}{\sqrt{L}}
    \ket{\ell}\ket{0}, \\
    &\rightarrow \sum_{\ell=0}^{L-1} \frac{1}{\sqrt{L}}
    \ket{\ell} \left(y_{\ell} \ket{0} + \sqrt{1- |y_{\ell}|^2}  \ket{1} \right)
    \qquad \text{(query $y_{\ell}$ and rotate ancilla qubit)}
\end{aligned}
\end{equation*}
When including this step into the procedure above, the probability of successfully post-selecting $\ket{00}$ and obtaining $\ket{x}$ is $\frac{1}{L \beta^2}\geq \frac{1}{8 \log N}$.

Intuitively, our procedure exploits the structure of the amplitudes such that in the controlled rotation step, the ancilla has a large component in the good subspace $\ket{0}$.

The procedure to block-encode a matrix $A$ follows similarly. For simplicity, assume $A$ is Hermitian and $N^{-1} \leq |A_{ij}|\leq  1$. Let $I_{\ell}(j)$ be the set of row indices of the entries in column $j$ whose values are in the range $[2^{-(\ell+1)},2^{-\ell})$. Let $n_{\ell}(j)=|I_{\ell}(j)|$, let $n_{\ell}= \max_{j} n_{\ell}(j)$. Let $\gamma$ be such that $\min_{j} n_{\ell}(j) \geq \gamma  n_{\ell}$. Our procedure, generalizing \Cref{app:detail-encode}, proceeds as follows:
\paragraph{The right part} For any column index $j$:
\begin{equation}
\begin{aligned}
    \ket{0^{\log L}}\ket{0} \ket{j}\ket{0} &\rightarrow  \sum_{\ell=0}^{L-1} \frac{\sqrt{n_{\ell}2^{-\ell}} }{\beta}
    \ket{\ell}\ket{0}\ket{j}\ket{0}, \qquad \text{where } \beta = \left(\sum_{\ell} n_{\ell} 2^{-\ell} \right)^{1/2} \\
    &\rightarrow  \sum_{\ell=0}^{L-1} \frac{\sqrt{n_{\ell}2^{-\ell}}}{\beta}
    \ket{\ell} \sum_{i=1}^{n_{\ell}} \frac{\ket{i}}{\sqrt{n_{\ell}}} \ket{j}\ket{0} \qquad \text{(Hadamards controlled on $\ell$)}\\
    &\rightarrow  \sum_{\ell=0}^{L-1}  \sum_{i=1}^{n_{\ell}} \frac{ \sqrt{2^{-\ell}}}{\beta}
    \ket{0} \ket{f_j(\ell,i)} \ket{j}\ket{0} \qquad \text{(call oracle $\mathcal{O}_{index}$)}\\
    & \rightarrow  \sum_{\ell; i\in [n_{\ell}(j)]} \frac{ \sqrt{2^{-\ell}}}{\beta}
    \ket{0} \ket{f_j(\ell,i)} \ket{j} \left(\frac{A_{f_j(\ell,i),j}}{2^{-\ell}} \ket{0} + \sqrt{1-\left|\frac{A_{f_j(\ell,i),j}}{2^{-\ell}} \right|^2}\ket{1} \right) \\
    & \qquad \qquad \qquad \qquad \text{(query $A_{f_j(\ell, i),j}$ and rotate ancilla qubit)}\\
\end{aligned}
\label{eq:aaa}
\end{equation}
Here, $\mathcal{O}_{index}$ is similar to that in \autoref{eq:O_index}; it returns the row index, $f_j(\ell, i)$, of the $i$-th element in $I_{\ell}(j)$. We ignore the terms when $i$ is invalid ($i \geq n_{\ell}(j)$) as we can set $A_{f_j(\ell,i),j}=0$ for those terms. Importantly, the above superposition has support over all row indices.
\paragraph{The left part} Similarly, we can construct a unitary that queries  all the column indices
\begin{equation}
\begin{aligned}
    \ket{0^{\log L}}\ket{0} \ket{i} \ket{0} \rightarrow \sum_{\ell; j \in [n_{\ell}(i)]} \frac{ \sqrt{2^{-\ell}}}{\beta}
    \ket{0} \ket{i} \ket{g_i(\ell,j)} \ket{0},
\end{aligned}
\label{eq:bbb}
\end{equation}
where $g_i(\ell, j)$ is the column index of the $j$-th element in $I_{\ell}(i)$.

Combining \autoref{eq:aaa} and \autoref{eq:bbb} we obtain a unitary $U$ that is block-encodes $A$ with a normalization factor of $\beta^2$. If we take into consideration of first step \autoref{eq:aaa} (i.e. preparing $\sum_{\ell=0}^{L-1} \frac{\sqrt{n_{\ell}2^{-\ell}} }{\beta}\ket{\ell}$) as did in the quantum state preparation procedure above, then the normalization factor is $\beta^2 \log N$. We now compare $\beta^2$ with the operator norm $\|A\|$. If the entries of $A$ are real and non-negative, we have that
\begin{equation}
    \|A\| \geq \|A \ket{\mathbf{1}}\| \geq \sum_{\ell} \min_j n_{\ell}(j) 2^{-(\ell+1)} \geq \frac{\gamma}{2} \beta^2.
\end{equation}
Thus, compared to the optimal block-encoding (one that has normalization factor equal to $\|A\|$), the presented block-encoding for $A$ is worse by at most a factor of $2 (\log N )/ \gamma$. In other words, if $1/\gamma = O(\operatorname{polylog}(N))$, then the normalization factor is only a factor of $\operatorname{polylog}(N)$ worse than that of the optimal block-encoding. As before (\cref{app:detail-encode}), the numbers of elementary gates for the controlled rotation step and ancilla qubits are $O(\operatorname{polylog}(\frac{N}{\varepsilon}))$.


\subsection{Condition number of polynomially and exponentially decaying kernels}\label{app:conditionnumber}
\begin{figure}[H]
    \centering
    \includegraphics[width=0.7\textwidth]{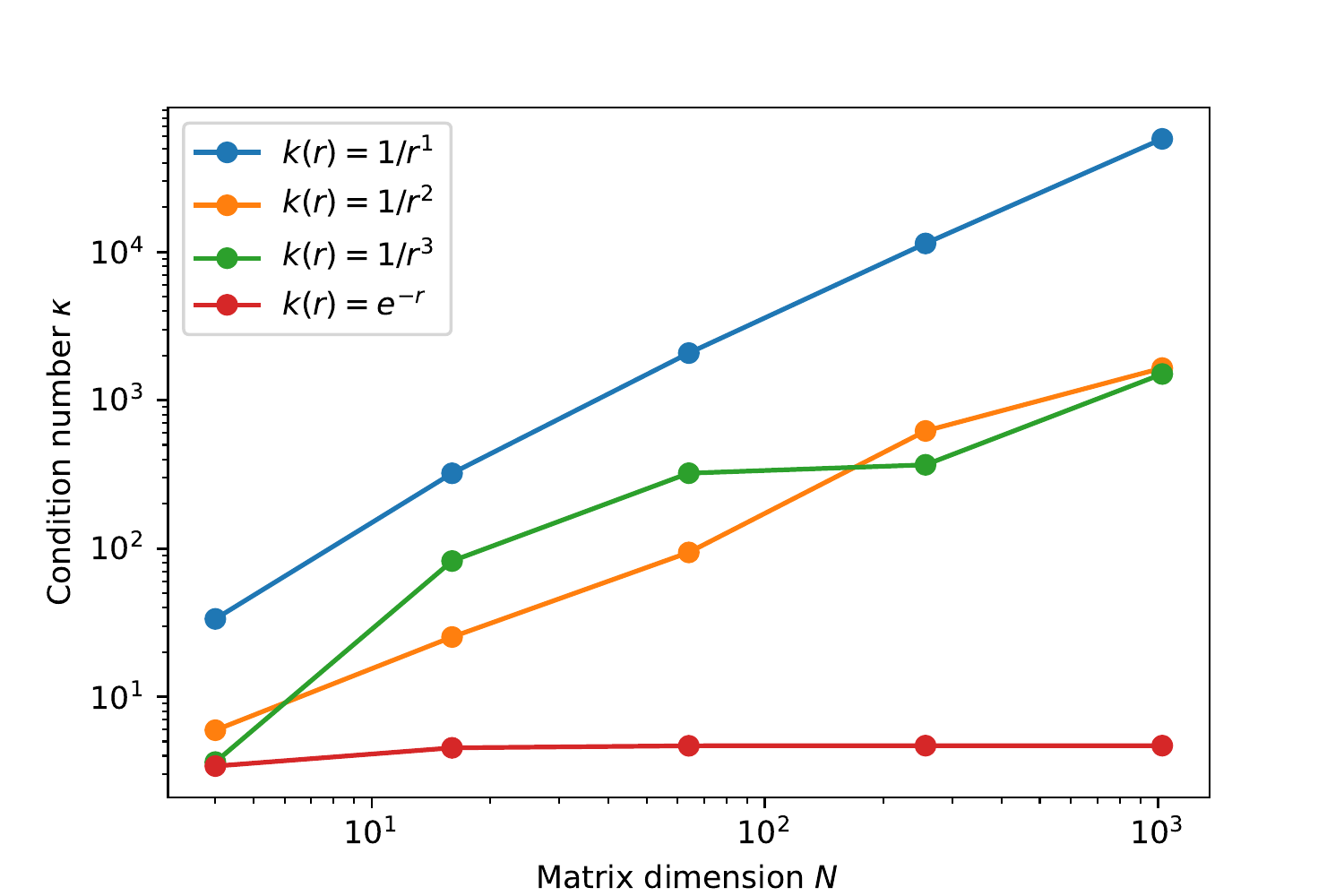}
    \caption{The condition number of polynomially decaying kernels grow polynomially with the matrix dimension. Whereas, the condition number of the exponentially decaying kernel converges. Here, the kernel matrices take the form $\mathbf{K}=(k(|x_i-x_j|))_{i,j=0}^{N-1}$, where $x_i=i$, $k(|x-x'|)=k(r)$, and $k(0)=0$ (no self-interactions). We remark that, the condition number can be dramatically improved if the self-interacting terms are sufficiently large. For example, for $k(0)=2$ we found that the condition number for $k(r)=1/r$ grows only logarithmically in $N$, while those of the other kernels scale as $O(1)$. This can be explained using similar calculations to \autoref{eq:boundkappa}.}
    \label{fig:conditionumber}
\end{figure}


\subsection{Kernels for which the naive block-encoding approach (\Cref{lem:densenaive}) works optimally} \label{app:naiveoptimal}
We provide examples of kernels  that are studied in classical $\mathcal{H}$-matrix or fast multipole method literature where the naive block-encoding approach of \Cref{lem:densenaive} (or \Cref{remark:densenaive}) can be directly used to produce an optimal block-encoding of their kernel matrices.

\paragraph{The log kernel} $k(x,x')=\log |x-x'|$ (chapter 1, \cite{Hmatrixnotes}). We use the same setup as in the main text. Let the kernel matrix be $\mathbf{K}= (k(x_i,x_j))_{i,j=0}^{N-1}$, where $x_i=i$, $N=2^L$, and $k(x,x)=0$. Directly applying \Cref{lem:densenaive} yields a $(N\log N,\log N + 1, \varepsilon)$-block-encoding using only $O(\operatorname{polylog}\frac{N}{\varepsilon})$ additional resources. To show that this block-encoding  is optimal, we bound the operator norm from below: $\|\mathbf{K}\|\geq \|\mathbf{K}\ket{\mathbf{1}}\| \geq  \int_{1}^{N/2} \log x dx = \Omega(N \log N)$.

\paragraph{The multiquadric kernel} $k(x,x')=\sqrt{c^2+ |x-x'|^2}$ (chapter 2.1, \cite{fmm_shortcourse}). For this, let the kernel matrix be $\mathbf{K}= (k(x_i,x_j))_{i,j=0}^{N-1}$, where $x_i=i/N$, $N=2^L$. Directly applying \Cref{lem:densenaive} yields a $(2N,\log N + 1, \varepsilon)$ using only $O(\operatorname{polylog}\frac{N}{\varepsilon})$ additional resources. The operator norm of $\mathbf{K}$ is bounded as $\|\mathbf{K}\|\geq \|\mathbf{K}\ket{\mathbf{1}}\| \geq  N \int_{0}^{1/2} \sqrt{c^2+x^2} dx = \Omega(N)$, implying that the block-encoding is optimal.
\vspace{0.5cm}

We also list here, omitting the proofs, some typical kernels (which are not considered in classical $\mathcal{H}$-matrix) that can be optimally block-encoded by the naive approach of \Cref{lem:densenaive}. For example, these include polynomially growing kernels $k(x,x')=|x-x'|^{p}$ and polyharmonic radial basis functions $k(x,x')=|x-x'|^{p}\log |x-x'|$. Our numerical experiments indicate that these kernels are numerically low-rank, as shown in \Cref{fig:lowrank-ker}. Furthermore, the first principal component (singular vector corresponding to the largest singular value) is close to uniform (the normalized all-ones vector). This explains why these kernels can be block-encoded by the naive block-encoding.

\begin{figure}
    \centering
    \includegraphics[width=0.75\textwidth]{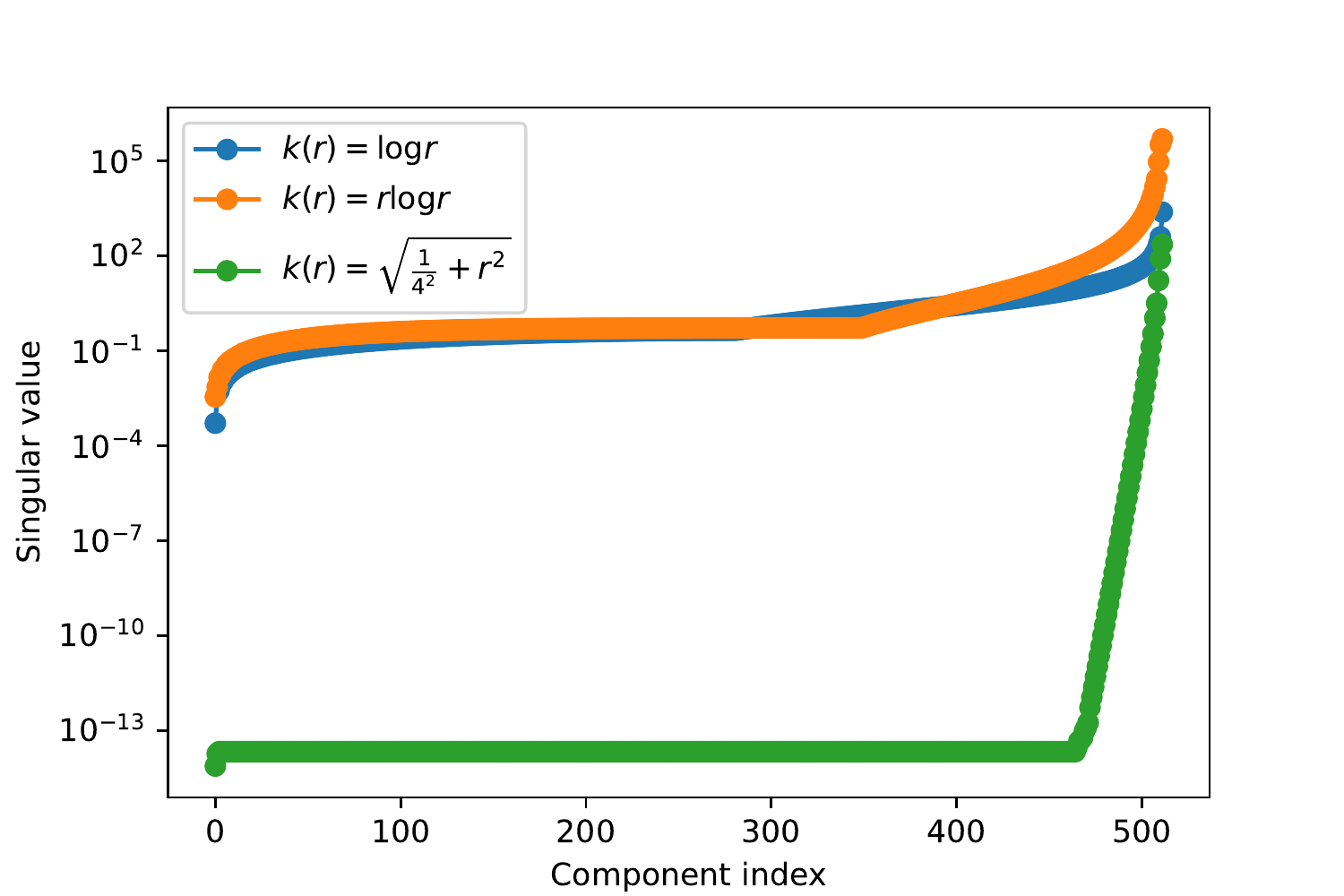}
    \caption{Singular values of the 512 by 512 kernel matrices $\mathbf{K}= (k(x_i,x_j))_{i,j=0}^{N-1}$ of the log, polyharmonic, and multiquadric kernels. Here $r=|x-x'|$, $x_j=j/N$ for the multiquadric kernel ($k(r)= \sqrt{1/4^2+r^2}$), and $x_j=j$ for the other kernels. The plots indicate that all kernels are numerically low-rank.}
    \label{fig:lowrank-ker}
\end{figure}

\section{Classical $\mathcal{H}$-matrices}
\label{app:hmatrix}

\subsection{Low-rank approximation using asymptotic smoothness}
In this section, we show how the asymptotic smoothness condition (\autoref{eq:condition1}) can be used to approximate an admissible block as a low-rank matrix. As a reminder, the asymptotic smoothness condition is
\begin{align*}
    | \partial_{x'}^p k(x,x') | & \leq C p! |x-x'|^{-p} \qquad (\forall p \in \mathbb{N}).
\end{align*}
For instance, the log kernel $k(x,x')=\log (|x-x'|)$ directly holds this property.

Consider the admissible block $\mathbf{K}^{\sigma, \rho}=(k(x_i,x_j))_{i\in \sigma, j\in \rho}$, where $\sigma, \rho \in \mathcal{P}^{(\ell)}$. Let $c_{\rho}$ be the center of $\rho$, we first obtain the $p$-th order Taylor expansion of $k(x,x')$ around $x'=c_{\rho}$
\begin{equation}
    k(x,x') = \sum_{q=0}^{p-1} \frac{(x'-c_{\rho})^q}{q!} \partial_{x'}^q k(x,c_{\rho}) + R(x,x'),
\end{equation}
where the remainder $R(x,x')$ can be bounded using the asymptotic smoothness property as follows
\begin{equation}
    \begin{aligned}
        k(x,x') &= \sum_{q=p}^{\infty} \frac{(x'-c_{\rho})^q}{q!} \partial_{x'}^q k(x,c_{\rho}) \leq C \sum_{q=p}^{\infty} \left(\frac{x'-c_{\rho}}{|x-c_{\rho}|}\right)^q \\
        & \leq C \sum_{q=p}^{\infty} \left(\frac{r_{\rho}}{\operatorname{dist}(\sigma,\rho)}\right)^q  \leq C \sum_{q=p}^{\infty} \left(\frac{r_{\rho}}{2 r_{\rho}}\right)^q = O(2^{-p}).
    \end{aligned}
\label{eq:taylor}
\end{equation}
Here, we have used the admissibility criteria (\Cref{def:admissible}) in the third inequality and that $r_{\sigma} = r_{\rho}$. Thus, $\mathbf{K}^{\sigma, \rho}$ can be approximated with exponentially small error by a rank-$p$ matrix
\begin{equation}
    \mathbf{K}^{\sigma, \rho} \approx \mathbf{\widetilde{K}}^{\sigma, \rho} = \boldsymbol{\Psi}^{\sigma, \rho} \mathbf{D} \left(\boldsymbol{\Phi}^{ \rho}\right)^{\dagger},
\end{equation}
where $\mathbf{D} =\operatorname{Diag}(1/q!)_{q\in P} \in \mathbb{R}^{p\times p}$, $P=\{0,1,\hdots p-1\}$ and
\begin{align}
    & \boldsymbol{\Psi}^{\sigma, \rho} = (\partial_{x'}^q k(x_i,c_{\rho}))_{i\in\sigma, q \in P} \in \mathbb{R}^{|\sigma| \times p}, \\
    & \boldsymbol{\Phi}^{\rho} = ((x_{j}'-c_{\rho})^q)_{j\in\rho, q \in P} \in \mathbb{R}^{|\rho| \times p}.
\end{align}

Next, we give an example of a kernel that does not directly satisfy the asymptotic smoothness condition (\autoref{eq:condition1}), but can still be manipulated to apply the low-rank approximation. Consider the kernel $k(x,x')=|x-x'|^{-2}$. This  kernel does not directly satisfy \autoref{eq:condition1}. Nevertheless, for an admissible block $ \mathbf{K}^{\sigma, \rho}= (k(x_i,x_j))_{i\in \sigma, j\in \rho}$ at level $\ell$, the kernel can be rewritten as 
\begin{equation}
\begin{aligned}
    k \left(x-x^{\prime}\right) &=k\left(x-c_{\rho}+c_{\rho}-x^{\prime}\right) =k\left(\left(x-c_{\rho} \right)\left(1+\frac{c_{\rho} -x'}{x-c_{\rho}}\right)\right) \\
    &=k\left(x-c_{\rho}\right) k\left(1+\frac{c_{\rho}-x^{\prime}}{x-c_{\rho}}\right) 
    =k\left(x-c_{\rho}\right) \sum_{q=0}^{\infty} \frac{k^{(q)}(1)}{q !}\left(\frac{c_{\rho}-x^{\prime}}{x-c_{\rho}}\right)^{q},
\end{aligned}
\end{equation}
where $c_{\rho}$ is the center of $\rho$. In our setting (see \Cref{sec:hmatrices}), $c_{\rho}$ is simply the middle point of the corresponding segment on the domain $\Omega$. Furthermore, noting that $k^{(q)}(1)= (-1)^q (q+1)!$, let
\begin{equation}
\begin{aligned}
    &A_{q}\left(x-c_{\rho}\right)=\frac{k\left(x-c_{\rho}\right)}{\left(x-c_{\rho}\right)^{q}},\\
    &S_{q}\left(c_{\rho}-x^{\prime}\right)= (-1)^q (q+1) \left(c_{\rho}-x^{\prime}\right)^{q}, \\
\end{aligned}
\end{equation}
and
\begin{equation}
    k_p(x-x')= \sum_{q=0}^{p}
    A_{q}\left(x-c_{\rho}\right)
    S_{q}\left(c_{\rho}-x^{\prime}\right).
    \label{eq:grav-series}
\end{equation}
Since $\mathbf{K}^{\sigma,\rho}$ is an admissible block (\Cref{def:admissible}), we have that $\left|\frac{c_{\rho}-x_j'}{x_i-c_{\rho}}  \right| \leq \frac{1}{2}, \forall i \in \sigma, j \in  \rho$. Therefore, 
\begin{equation}
\begin{aligned}
    |k(x_i,x_j) - k_p(x_i,x_j)| &\leq  (\sup_{x,x' \in \Omega} |k(x,x')|) \left( \sum_{q=p+1}^{\infty} (q+1) \left| \frac{c_{\rho}-x'}{x-c_{\rho}}\right|^q \right)\\
    &\leq  \sum_{q=p+1}^{\infty} (q+1)  2^{-q}  = 2^{-p} (2p+4).
\end{aligned}
\end{equation}
Thus, $k(x_i,x_j)$ can be approximated with error $\varepsilon/N$ by the truncated sum in \autoref{eq:grav-series} of order $p =  O(\operatorname{polylog} N/\varepsilon)$. Accordingly, the admissible block $\mathbf{K}^{\sigma, \rho}$ can be approximated to error $\varepsilon$ by a rank-$p$ matrix.


\subsection{Classical fast $\mathcal{H}$-matrix-vector multiplication}
We count the number of operations needed to compute the matrix-vector multiplication, where $\mathbf{K} \in \mathbb{R}^{N \times N}$ and $\mathbf{v} \in \mathbb{R}^{N}$,
\begin{equation}
    \mathbf{K}\mathbf{v} = \left(\sum_{\ell=2}^{L} \mathbf{K}^{(\ell)} + \mathbf{K}_{ad} \right) \mathbf{v} = \sum_{\ell=2}^{L} \mathbf{u}_{\ell} + \mathbf{u}_{ad}.
\end{equation}
The adjacent term $\mathbf{K}_{ad}$ is $3$-sparse, hence requires only $O(N)$ operations. In level $\ell$, each block-row (row of blocks) requires computing 3 matrix-vector multiplications on dense matrices of size $N\cdot 2^{-\ell}$, which takes $O(pN2^{-\ell})$ operations (where $p$ is the rank of the admissible blocks). There are $2^{\ell}$ block-rows, thus computing each $\mathbf{K}^{(\ell)}$ requires $O(pN)$ operations. Since there are $L-1=O(\log N)$ levels, the total operation complexity to compute all $\mathbf{u}_{\ell}$ is $O(pN\log N)$. Adding up $\mathbf{u}_{\ell}$ and $\mathbf{u}_{ad}$ costs another $O(N \log N)$. As we have shown above, it is required that $p=O(\operatorname{polylog}\frac{N}{\varepsilon})$ to achieve an error bound of $\varepsilon$ on $\mathbf{K}$. Thus, the overall runtime of matrix-vector multiplication on $\mathcal{H}$-matrices is $O(N \operatorname{polylog}\frac{N}{\varepsilon})$. We refer to \cite{part1hmatrices} for operation complexities of other tasks on $\mathcal{H}$-matrices, such as matrix-matrix multiplication and matrix inversion, which are also $O(N \operatorname{polylog}\frac{N}{\varepsilon})$ when $p=O(\operatorname{polylog}\frac{N}{\varepsilon})$.

\subsection{High dimensional hierarchical splittings}\label{app:high-dim}
In the main text, we have seen that two key properties of the 1D hierarchical splitting that enables optimal quantum block-encodings were the logarithmic number of hierarchical levels and the block sparseness of each level (the third key property was that the off-diagonal entries were decaying or sufficiently smooth). Here, we show that these properties still hold for the 2D case and make a generalization argument for higher dimensional cases.

The 2D and 3D hierarchical splittings are well-studied in classical literature \cite{part2hmatrices}. We will follow the indexing rules and conventions in the original work \cite{part2hmatrices}. For an overview of hierarchical matrices and splittings, we suggest interested readers to read \cite{fmm_shortcourse}--which covers the fast multipole algorithm (a special case of hierarchical splittings), \cite{part1hmatrices}--which is the original paper that introduced the 1D hieararchical splitting, and \cite{part2hmatrices}--the follow-up paper that generalized the hierarchical splittings to 2D and 3D problems.

Consider the uniform 2D grid of size $[1,N]\times [1,N]$ and $N=2^L$, where unit-mass particles are placed at sites $(i,j)$ (hence there are $N^2$ particles). We define the hierarchical partitions for this grid in the same spirit as \autoref{eq:splittings}. For level $\ell<L$, define the partitioning to be
\begin{equation}
    \mathcal{P}^{(\ell)}=\{\sigma^{(\ell)}_{I,J}|I,J \in \{0,\hdots,2^{\ell}-1\}]\},
\end{equation}
where the clusters $\sigma^{(\ell)}_{I,J}$ are
\begin{equation}
    \sigma^{(\ell)}_{I,J}=\{(i,j) |i \in [2^{L-\ell}I, 2^{L-\ell}(I+1) ), j \in [2^{L-\ell} J, 2^{L-\ell} (J+1) ) \}.
\end{equation}

Before we proceed to construct the hierarchical splitting of the kernel matrix, we first have to decide how to number particles in a 2D grid. We follow the ``canonical'' choice in Section 4 of \cite{part2hmatrices} (see \Cref{fig:2Dnumber}). In this numbering rule, the conversion from a grid site $(i,j)$ to the corresponding vectorized index $m$ is given by
\begin{equation}
    (i_1\hdots i_L, j_1\hdots j_L) \rightarrow m(i,j)=\sum_{\ell=1}^{L} f(i_{\ell},j_{\ell}) 4^{L-\ell},
\end{equation}
where $i_1\hdots i_L$ ($j_1 \hdots j_L$) is the binary representation of the row (column) index which ranges from $0$ to $2^L-1$ and $f$ is the CNOT function whose target bit is the second: $f(0,0)=0$, $f(0,1)=1$, $f(1,0)=3$, $f(1,1)=2$. The conversion from the vectorized index $K$ to grid site $(i,j)$ can be done by first writing $m$ in base-$4$ representation, then inverting the function $f$. In summary, the entries of the kernel matrix are of the form
\begin{equation}
    \mathbf{K}_{m(i_1,j_1),m(i_2,j_2)}=k((i_1,j_1),(i_2,j_2)).
\end{equation}

\begin{figure}
    \centering
    \includegraphics[width=0.75\textwidth]{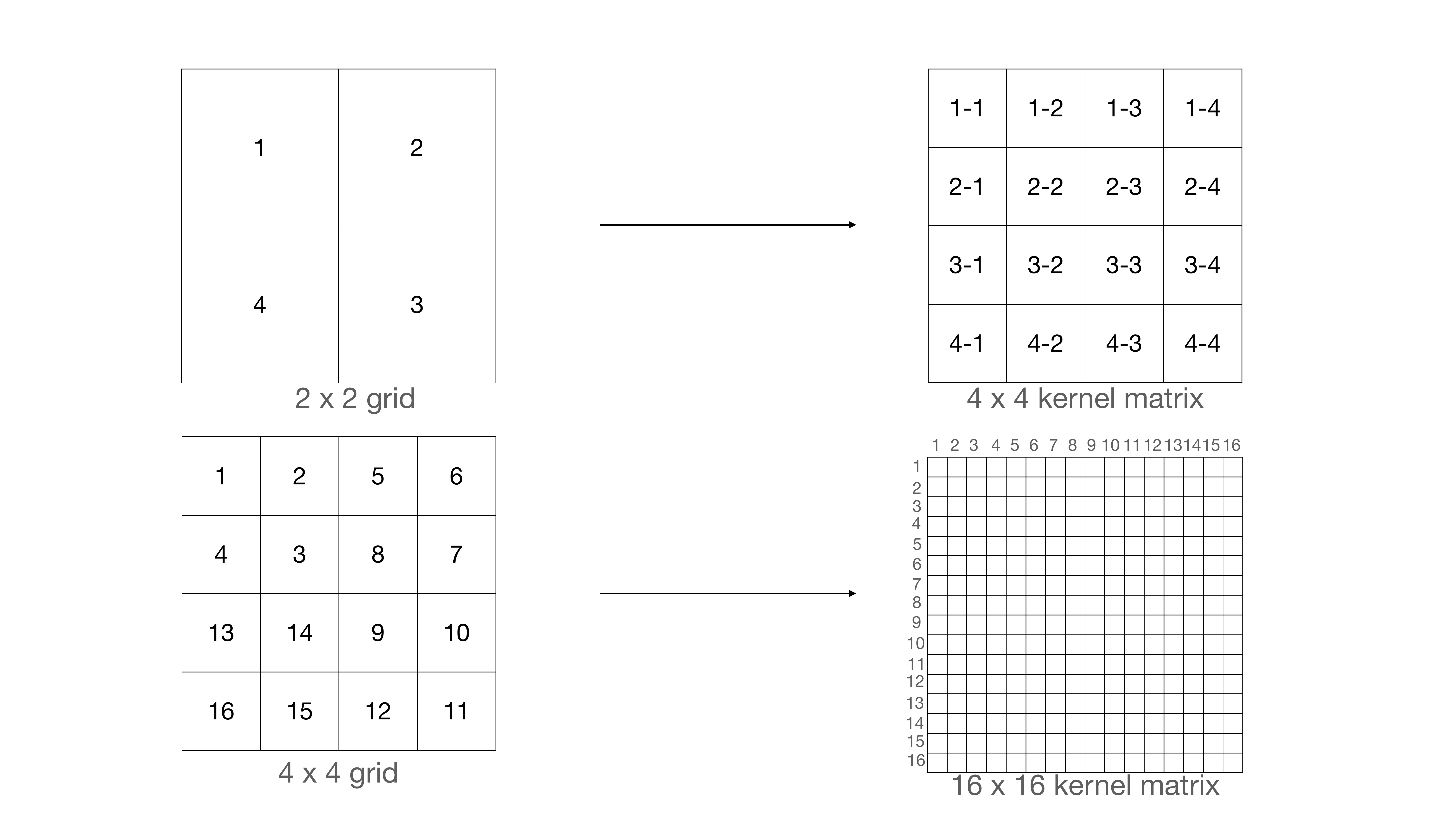}
    \caption{Left: 2D grid site numbering rule for vectorizing input. Right: the associated kernel matrix representing pairwise interactions based on this numbering system.}
    \label{fig:2Dnumber}
\end{figure}

Next, we also need to define the admissibility criteria (\Cref{def:admissible}) for now 2D clusters. For two clusters in the same level, \textit{e.g.}, $\sigma^{(\ell)}_{I,J}$ and $\sigma^{(\ell)}_{I',J'}$, wee define their diameters to be the length of their diagonal, which are easily seen to be $\operatorname{diam}(\sigma^{(\ell)}_{I,J})= \sqrt{2}\cdot 2^{L-\ell}$. Define the distance between these two clusters to be $\operatorname{dist}(\sigma^{(\ell)}_{I,J}, \sigma^{(\ell)}_{I',J'})=2^{L-\ell}\cdot \sqrt{\max\{0,|I-I'|-1\}^2+ \max \{0,|J-J'|-1 \}^2}$. We say the interactions between these two clusters are admissible if and only if
\begin{equation}
    \operatorname{dist}(\sigma^{(\ell)}_{I,J}, \sigma^{(\ell)}_{I',J'})\geq \eta \max \{ \operatorname{diam}(\sigma^{(\ell)}_{I,J}) \operatorname{diam}(\sigma^{(\ell)}_{I',J'}) \},
    \label{eq:2Dadmiss}
\end{equation}
where we choose $\eta=\frac{1}{\sqrt{2}}$.

In other words, two clusters (hence the block in the kernel matrix $\mathbf{K}$ that contains the interactions between them) are admissible when they are not neighboring along a column, row, or diagonal.

Having established the admissibility criteria, we proceed similarly to \Cref{sec:hmatrices} to obtain the following decomposition of the kernel matrix $\mathbf{K}$
\begin{equation}
    \mathbf{K} = \sum_{\ell=2}^{L} \mathbf{K}^{(\ell)}+ \mathbf{K}_{ad},
\end{equation}
where $\mathbf{K}^{(\ell)}$ contains the admissible interactions in level $\ell$ that have \emph{not} been included in $\mathbf{K}^{(\ell-1)}$.

Next, we calculate the block-sparsity of $\mathbf{K}^{(\ell)}$ by counting, among the $4^{\ell}$ clusters in level $\ell$, how many clusters contribute to the potential at the particles inside a particular cluster $\sigma^{(\ell)}_{I,J}$. In fact, at most $6^2-3^2=27$ level-$\ell$ clusters need to be included. This can be most easily observed in \Cref{fig:interaction}. Thus, the row-block-sparsity of $\mathbf{K}^{(\ell)}$ is 27. And since $\mathbf{K}^{(\ell)}$ is symmetric, the column-block-sparsity is also 27.




\begin{figure}
    \centering
    \includegraphics[width=.75\textwidth]{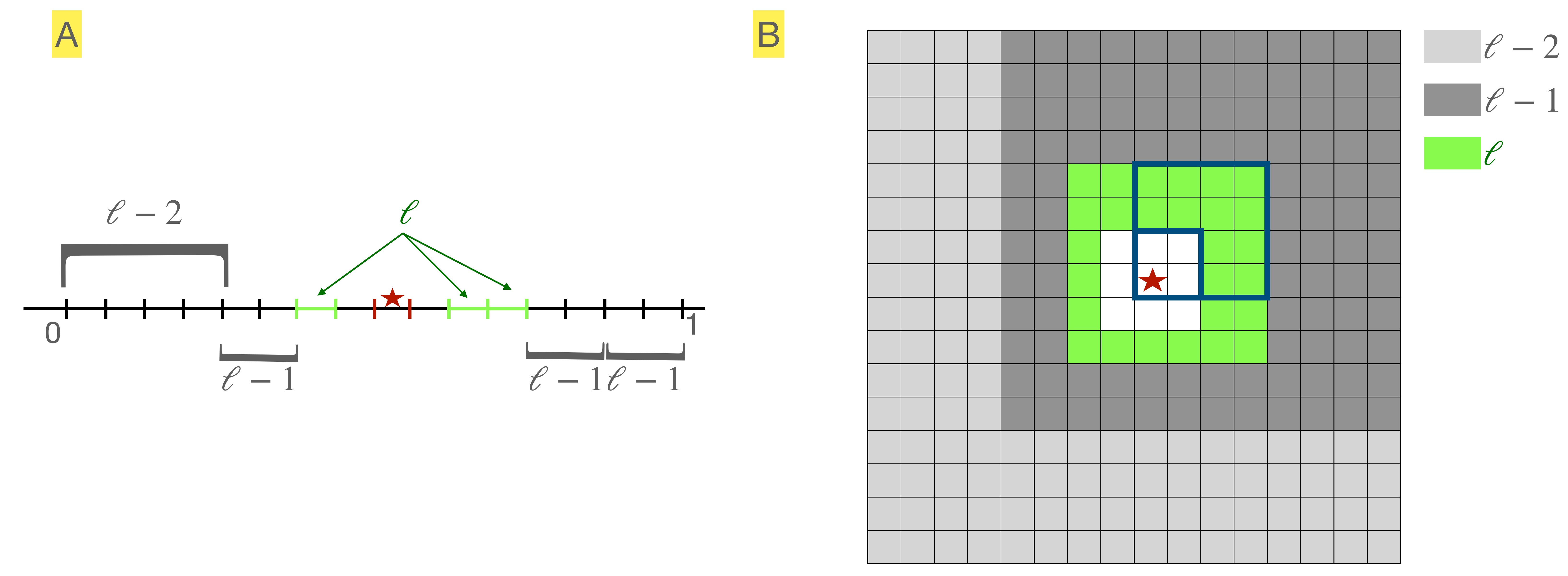}
    \caption{(A) shows the structure of the clusters in level $\ell$ (here $\ell=4$) of a 1D hierarchical splitting. Since most clusters have been taken into account in previous levels of the hierarchy, at most $6-3=3$ of them (green) will contribute to the potential at the target (red star). Similarly, (B) shows the structure of the clusters in level $\ell$ ($\ell=4$) of a 2D hierarchical splitting. Since most clusters have been included in previous levels of the hierarchy, at most $6^2-3^2=27$ of them (green) will contribute to the potential at the target.}
    \label{fig:interaction}
\end{figure}

In summary, we still have $\log N$ hierarchical levels, each of which is block-sparse in the 2D case. Therefore, we can apply the entire block-encoding procedure in \Cref{sec:main} to block-encode kernel matrices of a 2D grid. Below, we provide a similar analysis to that of \Cref{sec:polykernel} for polynomially decaying kernels ($k(\mathbf{x},\mathbf{x}')=\|
\mathbf{x}-\mathbf{x}'\|^{-p}$), showing that the hierarchical block-encoding is optimal for the 2D case.

At level $\ell$ ($2\leq \ell \leq L$), an admissible matrix block $\mathbf{K}^{\sigma,\rho}$, which contains the interactions between two clusters $\sigma, \rho$ at level $\ell$, has size $4^{L-\ell} \times 4^{L-\ell}$; and the minimum pairwise distance between particles  within this block is $2^{L-\ell}$ by the definition of admissibility (\autoref{eq:2Dadmiss}). It follows that the maximum entry in $\mathbf{K}^{\sigma,\rho}$ is $2^{-(L-\ell)p}$. Thus, $\mathbf{K}^{\sigma,\rho}$ can be block-encoded by \Cref{lem:densenaive} with a normalization factor of $\alpha_{\ell}=2^{(L-\ell)(2-p)}$. Then, $\mathbf{K}^{(\ell)}$ can be block-encoded by \Cref{lem:block-sparse} with a normalization factor of $27\alpha_{\ell}$. The adjacent interaction part, $\mathbf{K}_{ad}$, is a sparse matrix with sparsity $9$ and the largest entry equal to $1$. Thus, $\mathbf{K}_{ad}$ can be block-encoded with a normalization factor of $9$ using \Cref{lem:sparse}. Finally, the entire kernel matrix $\mathbf{K}$ is obtained by summing over all levels using \Cref{lem:linear_comb}. The overall normalization factor is 
\begin{equation}
    \alpha = 9 + \sum_{\ell=2}^{L} 27\alpha_{\ell}= \begin{cases} 9 +27\frac{2^{2-p}-N^{2-p}}{2^{2-p}-4^{2-p}} \leq O(1) & \text{if } p>2\\
    9+27(\log N-1) & \text{if } p=2 \\
    9 +27\frac{N^{2-p}-2^{2-p}}{4^{2-p}-2^{2-p}} & \text{if } p<2
    \end{cases}.
\end{equation}

This block-encoding can be shown to be optimal by a similar method to \Cref{remark:opt}. We have that $\|\mathbf{K}\| \geq \|\mathbf{K}\ket{\mathbf{1}}\|$, where $\ket{\mathbf{1}}$ is the normalized all-ones vector. Furthermore, observe that
\begin{equation}
\begin{aligned}
 \|\mathbf{K}\ket{\mathbf{1}}\| &\geq \int_{1}^{N}\int_{1}^{N} (x^2+y^2)^{-p/2}dxdy\\
 &\geq \int_{0}^{\pi/2} \int_{\sqrt{2}}^{\sqrt{N^2+1}} r^{-p} r dr d\theta - 2 \int_{1}^{\sqrt{N^2+1}} (x^2+1)^{-p/2} dx.\\
\end{aligned}
\end{equation}
In the limit $N \rightarrow \infty$, it can be easily verified that the above integral is $O(N^{2-p})$ when $p\neq 2$ and $O(\log N)$ when $p=2$. Therefore, $\alpha=\Theta(\|\mathbf{K}\|)$ and the block-encoding procedure using hierarchical splitting is optimal.

Finally, the analysis in this section can be generalized to any $d$-dimensional setting, where the block-sparsity of each hierarchical level $\mathbf{K}^{(\ell)}$ is $6^d-3^d$ and the sparsity of the adjacent part $\mathbf{K}_{ad}$ is $3^d$. For a visualization of 2D and 3D hierarchical splittings, see Figure 6 of \cite{boukaram2019hierarchical}, which has a smaller block-sparsity than the aforementioned value since diagonally neighboring clusters are considered admissible by the authors of \cite{boukaram2019hierarchical}, as opposed to our treatment in this section.

\section{Quantum state preparation for smooth functions}\label{app:stateprep}
We provide an efficient procedure to prepare a quantum state whose entries are sampled from a smooth function. Consider a smooth function $g$ bounded by $|g(x)|\leq 1$ in the domain $\Omega= [0,2\pi]$ which can be approximated by a low-order Fourier series
\begin{equation}
\begin{aligned}
    g(x_j) &= \sum_{n=-p}^{p} c_n e^{2\pi inj/N},
\end{aligned}
\end{equation}
where $p=O(1)$ and the coefficients $c_n$ are $\Theta(1)$. 

We want to prepare the quantum state $\ket{\mathbf{g}}$ where the vector $\mathbf{g}$ is
\begin{equation}
    \mathbf{g}_j = g(2\pi j/N)
\end{equation}
Let the QFT matrix be
\begin{equation}
    F_N = \sum_{k,j=0}^{N-1} \frac{1}{\sqrt{N}} e^{-2\pi ijk/N} \ket{k}\bra{j}.
\end{equation}
Let $\widetilde{\mathbf{g}} = \left[\begin{array}{c}
     c_0 \\
     \vdots \\
     c_{N-1}
\end{array}\right]$, where at most $d=2p+1$ terms $c_n$ are non-zero, and $c_{N-k}=c_{-k}$ for $1\leq k \leq p$.

Observe that 
\begin{equation}
    \mathbf{g} = \frac{F_N}{\sqrt{N}} \widetilde{\mathbf{g}}.
\end{equation}

Since $\widetilde{\mathbf{g}}$ is sparse and the values and locations of the non-zero entries are known, one can easily prepare its quantum version $\ket{\widetilde{\mathbf{g}}}$ with high probability using the following procedure.
\begin{enumerate}
    \item Prepare $\ket{0} \rightarrow \frac{1}{\sqrt{d}} \sum_{k=0}^{d-1} \ket{k} \rightarrow \frac{1}{\sqrt{d}} \sum_{j:c_j\neq 0} \ket{j}$ 
    \item Call the oracle $\mathcal{O}_{c}: \ket{j}\ket{0}  \rightarrow  \ket{j}\ket{\Tilde{c}_j}$ (where $\Tilde{c}_j$ is a qubit string description of $c_j$)
    \item Apply the conditioned rotation $\ket{j}\ket{\Tilde{c}_j}\ket{0} \rightarrow \frac{c_j}{\max_j c_j}\ket{j}\ket{\Tilde{c}_j}\ket{0} + \sqrt{1-\left|\frac{c_j}{\max_j c_j}\right|^2} \ket{j}\ket{\Tilde{c}_j}\ket{1}$
    \item Uncompute and post-select on the subspace $\ket{0}$ of the third register to obtain $\ket{\widetilde{\mathbf{g}}}$
\end{enumerate}
Upon successfully obtaining $\ket{\widetilde{\mathbf{g}}}$ (which happens with probability $\Omega(1)$), one applies the QFT to obtain $\ket{\mathbf{g}}= F_N \ket{\widetilde{\mathbf{g}}}$. This procedure clearly applies for higher dimensional smooth functions. In these cases, one uses a tensor product of the conventional QFTs to transform the state back and forth between the real regime and the Fourier regime.

\end{document}